\DeclareSymbolFont{extraup}{U}{zavm}{m}{n}
\DeclareMathSymbol{\varheart}{\mathalpha}{extraup}{86}
\DeclareMathSymbol{\vardiamond}{\mathalpha}{extraup}{87}
\newcommand{\YES}{\textsc{Yes}}
\newcommand{\NO}{\textsc{No}}
\newcommand{\Test}[2]{
\def\temp{#2}\ifx\temp\empty
  \operatorname{Test}_{#1}
\else
  \operatorname{Test}_{#1}^{#2}
\fi
}
\newcommand{\A}{\mathbf{A}}
\newcommand{\B}{\mathbf{B}}
\newcommand{\K}{\mathbf{K}}
\newcommand{\T}{\mathbf{T}}
\newcommand{\X}{\mathbf{X}}
\newcommand{\N}{\mathbb{N}}
\newcommand{\R}{\mathbb{R}}
\newcommand{\Q}{\mathbb{Q}}
\newcommand{\Z}{\mathbb{Z}}
\newcommand{\cT}{\mathcal{T}}
\newcommand{\bF}{\mathbb{F}}
\newcommand{\sM}{\mathscr{M}}
\newcommand{\freeM}{\mathbb{F}_{\Mminion}}
\newcommand{\freeN}{\mathbb{F}_{\Nminion}}
\newcommand{\freeQ}{\mathbb{F}_{\Qconv}}
\newcommand{\freeZ}{\mathbb{F}_{\Zaff}}
\newcommand{\freeS}{\mathbb{F}_{\Sminion}}
\newcommand{\Xk}{\X^\tensor{k}}
\newcommand{\Ak}{\A^\tensor{k}}
\newcommand{\Bk}{\B^\tensor{k}}
\newcommand{\inn}[2]{\mathbf{#1}\textbf{-in-}\mathbf{#2}}
\newcommand{\NAE}{\mathbf{NAE}}
\renewcommand{\vec}[1]{\mathbf{#1}}
\newcommand{\ba}{\vec{a}}
\newcommand{\bb}{\vec{b}}
\newcommand{\bc}{\vec{c}}
\newcommand{\bd}{\vec{d}}
\newcommand{\bi}{\vec{i}}
\newcommand{\bj}{\vec{j}}
\newcommand{\bn}{\vec{n}}
\newcommand{\bs}{\vec{s}}
\newcommand{\bu}{\vec{u}}
\newcommand{\bx}{\vec{x}}
\newcommand{\bv}{\vec{v}}
\newcommand{\by}{\vec{y}}
\newcommand{\bw}{\vec{w}}
\newcommand{\be}{\vec{e}}
\newcommand{\bell}{{\ensuremath{\boldsymbol\ell}}}
\newcommand{\bp}{\vec{p}}
\newcommand{\bq}{\vec{q}}
\newcommand{\bz}{\vec{z}}
\DeclareMathOperator{\Bmat}{B}
\DeclareMathOperator{\Amat}{A}
\newcommand{\blambda}{{\bm{\lambda}}}
\newcommand{\bmu}{{\bm{\mu}}}
\newcommand{\ale}{{\aleph_0}}
\newcommand{\pair}[2]{[\begin{array}{cc}#1 & #2\end{array}]}
\DeclareMathOperator{\tr}{tr}
\DeclareMathOperator{\Span}{span}
\DeclareMathOperator{\csupp}{csupp}
\DeclareMathOperator{\ArcC}{AC}
\DeclareMathOperator{\BLP}{BLP}
\DeclareMathOperator{\AIP}{AIP}
\DeclareMathOperator{\BA}{BA}
\DeclareMathOperator{\SDP}{SDP}
\DeclareMathOperator{\Pol}{Pol}
\DeclareMathOperator{\PCSP}{PCSP}
\DeclareMathOperator{\CSP}{CSP}
\DeclareMathOperator{\parPCSP}{(P)CSP}
\DeclareMathOperator{\PCSPs}{PCSPs}
\DeclareMathOperator{\parPCSPs}{(P)CSPs}
\DeclareMathOperator{\CLAP}{CLAP}
\DeclareMathOperator{\id}{id}
\DeclareMathOperator{\Hom}{\operatorname{Hom}}
\DeclareMathOperator{\supp}{supp}
\DeclareMathOperator{\SA}{SA}
\DeclareMathOperator{\SoS}{SoS}
\DeclareMathOperator{\BW}{BW}
\DeclareMathOperator{\dom}{dom}
\DeclareMathOperator{\ar}{ar}
\DeclareMathOperator{\armax}{armax}
\newcommand{\freeBA}{\mathbb{F}_{\BAminion}}
\newcommand{\freeMN}{\mathbb{F}_{\MNminion}}
\newcommand{\Mminion}{\ensuremath{{\mathscr{M}}}}
\newcommand{\Nminion}{\ensuremath{{\mathscr{N}}}}
\newcommand{\MNminion}{\Mminion\ltimes\Nminion}
\newcommand{\BAminion}{\ensuremath{{\Qconv\ltimes\Zaff}}}
\newcommand{\Qconv}{\ensuremath{{\mathscr{Q}_{\operatorname{conv}}}}}
\newcommand{\Zaff}{\ensuremath{{\mathscr{Z}_{\operatorname{aff}}}}}
\newcommand{\Hminion}{\ensuremath{\mathscr{H}}}
\newcommand{\Sminion}{\ensuremath{{\mathscr{S}}}}
\newcommand{\Mblpaip}{\ensuremath{\mathscr{M}_{\operatorname{BA}}}}
\newcommand{\bone}{\mathbf{1}}  
\newcommand{\bzero}{\mathbf{0}} 
\newcommand{\tensor}[1]{\textsuperscript{\raisebox{-.5pt}{\normalfont\textcircled{\raisebox{-.1pt}{\tiny #1}}}}}
\newcommand\cont[1]{\overset{\tiny#1}{\ast}}
\theoremstyle{plain}
\newtheorem{thm}{Theorem}
\newtheorem*{thm*}{Theorem}
\newtheorem{lem}[thm]{Lemma}
\newtheorem*{lem*}{Lemma}
\newtheorem{prop}[thm]{Proposition}
\newtheorem*{prop*}{Proposition}
\theoremstyle{definition}
\newtheorem{defn}[thm]{Definition}
\newtheorem{rem}[thm]{Remark}
\newtheorem{example}[thm]{Example}
\pgfmathsetmacro{\opacityDefault}{.05}
\pgfmathsetmacro{\bigShift}{4.5}
\newcommand\cube[5]{
\draw[fill=blue,opacity=#2,shift={(-#3,#5,-#4)},scale=.7](0,0,0)--++(0,-1,0)--++(0,0,-1)--++(0,1,0)--cycle;
\draw[fill=blue,opacity=#2,shift={(-#3,#5,-#4)},scale=.7](0,0,0)--++(-1,0,0)--++(0,0,-1)--++(1,0,0)--cycle;
\draw[fill=blue,opacity=#2,shift={(-#3,#5,-#4)},scale=.7](0,0,0)--++(-1,0,0)--++(0,-1,0)--++(1,0,0)--cycle;
}
\definecolor{lightgray}{RGB}{230,230,230}	
\newenvironment{graytbox}{	
\par\addvspace{0.1cm}	
\begin{tcolorbox}[width=\textwidth,	
                  boxsep=5pt,	
                  left=1pt,	
                  right=1pt,	
                  top=2pt,	
                  bottom=2pt,	
                  boxrule=0pt,	
                  arc=0pt,	
                  colback=lightgray,	
                  colframe=black
                  ]
}{	
\end{tcolorbox}	
}
\begin{document}

\title{	
Hierarchies of Minion Tests for PCSPs through Tensors\thanks{An extended
abstract of this work  appeared in the Proceedings of the 2023 ACM-SIAM Symposium on Discrete Algorithms (SODA'23)~\cite{cz23soda:minions}. 
The research leading to these results has received funding from the European Research Council (ERC) under the European Union's Horizon 2020 research and innovation programme (grant agreement No 714532). The paper reflects only the authors' views and not the views of the ERC or the European Commission. The European Union is not liable for any use that may be made of the information contained therein.
This work was also supported by UKRI EP/X024431/1. The work was done while the first author was at the University of Oxford. For the purpose of Open Access, the authors have applied a CC BY public copyright licence to any Author Accepted Manuscript version arising from this submission. All data is provided in full in the results section of this paper.}}	

\author{Lorenzo Ciardo\\
TU Graz\\
\texttt{lorenzo.ciardo@tugraz.at} 
\and 
Stanislav {\v{Z}}ivn{\'y}\\
University of Oxford\\
\texttt{standa.zivny@cs.ox.ac.uk}
}

\date{}
\maketitle

\begin{abstract}
\noindent
We provide a unified framework to study hierarchies of relaxations for Constraint Satisfaction Problems and their Promise variant. The idea is to split the description of a hierarchy into an algebraic part, depending on a \emph{minion} capturing the ``base level'', and a geometric part -- which we call \emph{tensorisation} -- inspired by multilinear algebra. 
We exploit the structure of the tensor spaces arising from our construction to prove general properties of hierarchies. We identify certain classes of minions, which we call \emph{linear} and \emph{conic}, whose corresponding hierarchies have particularly fine features.
We establish that the (combinatorial) bounded-width, Sherali--Adams LP, affine IP, Sum-of-Squares SDP, and combined ``LP + affine IP'' hierarchies are all captured by this framework.
In particular, in order to analyse the Sum-of-Squares SDP hierarchy, we also characterise the solvability of the standard SDP relaxation through a new minion. 
\end{abstract}

\section{Introduction}
\label{sec_intro}

What are the limits of efficient algorithms and where is the precise borderline
of tractability? The \emph{constraint satisfaction problem} ($\CSP$) offers a
general framework for studying such fundamental questions for a large class of
computational
problems~\cite{Creignouetal:siam01,Creignou08:complexity,kz17:dagstuhl} but yet
for a class that is amenable to identifying the mathematical structure governing
tractability. Canonical examples of $\CSP$s are satisfiability of 3-CNF formulas
(\textsc{3-Sat}), 
``not-all-equal'' satisfiability of 3-CNF formulas (\textsc{3-Nae-Sat}),
linear equations, several variants of (hyper)graph colourings, and the graph clique problem. All $\CSP$s can be seen as homomorphism problems between relational structures~\cite{Feder98:monotone}: Given two relational structures $\X$ and $\A$, is there a homomorphism from $\X$ to $\A$?
Intuitively, the  structure  $\X$ represents the variables of the $\CSP$ instance and their interactions, whereas the  structure $\A$  represents the constraint language; i.e., the alphabet and the allowed constraint relations.

The most studied types of $\CSP$s are so-called \emph{non-uniform}
$\CSP$s~\cite{Jeavons97:closure,Feder98:monotone,Kolaitis00:jcss,BKW17}, in
which the target
structure $\A$ is fixed whereas the source structure $\X$ is given on input;
this computational problem is denoted by $\CSP(\A)$. From the examples above,
\textsc{3-Sat}, \textsc{3-Nae-Sat}, (hyper)graph colourings with constantly many
colours, linear equations of bounded arity over finite fields, and linear
equations of bounded arity over the rationals are all examples of non-uniform
$\CSP$s, all on finite domains except the last one.
For instance, in the graph $c$-colouring problem the target structure $\A$ is a $c$-clique and the structure $\X$ is the input graph. The existence of a homomorphism from a graph to a $c$-clique is equivalent to the existence of a colouring of the graph with $c$ colours.
The graph clique problem is an example of a $\CSP$ with 
a fixed class of source structures~\cite{Grohe07:jacm,Marx13:jacm} but an arbitrary target structure and, thus, it is not a non-uniform $\CSP$.

We will be concerned with polynomial-time tractability of $\CSP$s. Studied research directions include investigating questions such as: Is there a
solution~\cite{Bulatov17:focs,Zhuk20:jacm}? How many solutions are there,
exactly~\cite{Creignou96:ic,Bulatov13:jacm-dichotomy,Dyer13:sicomp} or approximately~\cite{Bulatov13:jacm,Chen15:jcss}?
What is the maximum number of simultaneously satisfied constraints,
exactly~\cite{Creignou95:jcss,Huber14:sicomp,tz16:jacm-complexity} or approximately~\cite{Deineko08:jacm,Austrin10:sicomp,Raghavendra08:everycsp}? What is
the minimum number of simultaneously unsatisfied
constraints~\cite{Khanna00:approximability,Dalmau18:jcss}? Given an almost
satisfiable instance, can one find a somewhat satisfying
solution~\cite{Dalmau13:toct,Barto16:sicomp,Dalmau19:sicomp}? 
In this paper, we will focus on the following question:

\medskip

\noindent \emph{Given a satisfiable instance, can one find a solution that is satisfying in a weaker sense}~\cite{AGH17,BBKO21,BG21:sicomp}? 

\medskip

\noindent This was formalised as \emph{promise constraint satisfaction problems}
($\PCSP$s) by Austrin, Guruswami and H{\aa}stad~\cite{AGH17} and Brakensiek and
Guruswami~\cite{BG21:sicomp}. Let $\A$ and $\B$ be two fixed relational
structures\footnote{Unless otherwise stated, we shall use the word ``structure'' to mean finite-domain structures; if the domain is allowed to be infinite, we shall say it explicitly.} such that there is a homomorphism from $\A$ to $\B$, indicated by
$\A\to\B$.
Intuitively, the structure $\A$ represents the ``strict'' constraints and the structure $\B$ represents the corresponding ``weak'' constraints.
An instance of the $\PCSP$ over the template $(\A,\B)$, denoted by
$\PCSP(\A,\B)$, is a relational structure $\X$ such that there is a homomorphism
from $\X$ to $\A$. The task is to find a homomorphism from $\X$ to $\B$, which
exists since homomorphisms compose. 
What we described above is the \emph{search} variant of the PCSP. In the
\emph{decision} variant, one is given a relational structure $\X$ and the task is
to decide whether there is a homomorphism from $\X$ to $\A$ or whether there is
not a homomorphism from $\X$ to $\B$. As $\X\to\A$ implies $\X\to\B$, the two cases cannot happen simultaneously.
Clearly, the decision variant of the PCSP reduces to the search
variant (see the discussion in~\cite{BBKO21} after Definition~2.6), but it is not known whether there is a reduction in the other
direction for all PCSPs. 
In this paper, we shall focus on the decision variant.

$\PCSP$s are a vast generalisation of $\CSP$s including problems that cannot be
expressed as $\CSP$s.
The work of Barto, Bul\'in, Krokhin, and Opr\v{s}al~\cite{BBKO21} lifted
and greatly extended the algebraic framework developed for
$\CSP$s~\cite{Jeavons97:closure,Bulatov05:classifying,BOP18} to the realm of $\PCSP$s.
Subsequently, there has been a series of recent works on the computational
complexity of
$\PCSP$s building on~\cite{BBKO21}, including applicability of local consistency
and convex
relaxations~\cite{BG19,bgwz20,Butti21:mfcs,cz23sicomp:clap,Atserias22:soda} and
complexity of fragments of
$\PCSP$s~\cite{GS20:icalp,KOWZ22,AB21,Barto21:stacs,BWZ21,BG21:talg,Barto22:soda,NZ22}.
Strong results on $\PCSP$s have also been established via
other techniques than those in~\cite{BBKO21}, mostly analytical methods,
e.g., hardness of various (hyper)graph colourings~\cite{Khot01,DRS05,Huang13,ABP20} and  other $\PCSP$s~\cite{Bhangale21:stoc,Braverman21:itcs,BGS23,Bhangale22:stoc}.

An example of a $\PCSP$, identified in~\cite{AGH17}, is (in the search variant)
finding a satisfying assignment to a $k$-CNF formula given that a $g$-satisfying
assignment exists; i.e., an assignment that satisfies at least $g$ literals in
each clause. Austrin et al.~\cite{AGH17} established that this problem is NP-hard if $g/k<1/2$ and solvable via a constant level of the Sherali--Adams linear programming relaxation otherwise. This classification was later extended to problems over arbitrary finite domains by Brandts et al.~\cite{BWZ21}.

A second example of a $\PCSP$, identified in~\cite{BG21:sicomp}, is (in the
search variant) finding a ``not-all-equal'' assignment to a monotone $3$-CNF formula given that a ``$1$-in-$3$'' assignment is promised to exist; i.e., given a $3$-CNF formula with positive literals only and the promise that an assignment exists that satisfies exactly one literal in each clause, the task is to find an assignment that satisfies one or two literals in each clause.
This problem is solvable in polynomial time via a constant level of the Sherali--Adams linear programming relaxation~\cite{BG21:sicomp} but not via a reduction to finite-domain $\CSP$s~\cite{BBKO21}.

A third example of a $\PCSP$ is the well-known \emph{approximate graph
colouring} problem: Given a $c$-colourable graph, find a $d$-colouring of it,
for constants $c$ and $d$ with $c\leq d$. 
This corresponds to $\PCSP(\K_c,\K_d)$, where $\K_p$ is the
clique on $p$ vertices. Despite a long
history dating back to 1976~\cite{GJ76}, the complexity of this problem is
only understood under stronger
assumptions~\cite{Dinur09:sicomp,GS20:icalp,Braverman21:focs}
and for special cases~\cite{KhannaLS00,Khot01,GK04,Huang13,BrakensiekG16,BBKO21,KOWZ22}.
It is believed that the problem is NP-hard already in the decision
variant~\cite{GJ76}, i.e., deciding whether a graph is $c$-colourable or not
even $d$-colourable, for each $3\leq c\leq d$. By using the framework developed
in the current work, non-solvability of approximate graph colouring through
standard algorithmic techniques was established in the follow-up
works~\cite{CZ25sicomp:approximate}.

\medskip
$\PCSP$s can be solved by designing \emph{tests}. If
a test, applied to a given instance of the problem, is positive then the answer
is $\YES$; if it is negative then the answer is $\NO$. 
The challenge is then to find tests that are able to guarantee a low number -- ideally, zero -- of false positives and false negatives.
Clearly, a test is itself a decision problem. However, its nature may be substantially different, and less complicated, than the nature of the original problem.

Given a $\PCSP$ template $(\A,\B)$, we may use any (potentially infinite)
structure $\T$ to make a test for $\PCSP(\A,\B)$: We simply let the outcome of
the test on an instance structure $\X$ be $\YES$ if $\X\to\T$, and $\NO$ if
$\X\not\to\T$. In other words, $\CSP(\T)$ is a test for $\PCSP(\A,\B)$. Let $\X$
be an instance of $\PCSP(\A,\B)$. If $\X\to\T$ whenever $\X\to\A$, the test is
guaranteed not to generate false negatives, and we call it \emph{complete}.
If $\X\to\B$ whenever $\X\to\T$, the test is guaranteed not to generate false
positives, and we call it \emph{sound}.
Since homomorphisms compose, the test is complete if and only if $\A\to\T$, and it is sound if and only if $\T\to\B$.\footnote{The second ``only if'' implication follows from a standard compactness argument, see Section~\ref{sec_minion_tests}.}
When both of these conditions hold,
we say that the test \emph{solves} $\PCSP(\A,\B)$. In this case, one obtains a so-called ``sandwich reduction'' from $\PCSP(\A,\B)$ to $\CSP(\T)$ (see~\cite[Section~3.1]{BG19}).

To make a test $\T$ useful as a polynomial-time algorithm to solve a $\PCSP$, one requires that 
$\CSP(\T)$ should be tractable. It was conjectured in~\cite[Section~3.1]{BG19} that every tractable
(finite-domain) $\PCSP$ is solved by a tractable sandwich. In other words, if the
conjecture is true, \emph{sandwich reductions are the sole source of tractability for PCSPs}.
For the conjecture to be true, one needs to admit structures $\T$ having infinite domains. For example, this is the case for the ``1-in-3 vs. not-all-equal'' problem, whose template we denote by $(\inn{1}{3},\NAE)$:
As
shown in~\cite[Section~8]{BBKO21}, there is
no (finite) structure $\T$ such that $\inn{1}{3}\to\T\to\NAE$ and $\CSP(\T)$ is tractable. 

The complexity of both $\CSP$s and $\PCSP$s was shown to be determined by
higher-order symmetries of the solution sets of the problems, known as
\emph{polymorphisms}, denoted by $\Pol(\A)$ for
$\CSP(\A)$~\cite{Bulatov05:classifying} and by $\Pol(\A,\B)$ for
$\PCSP(\A,\B)$~\cite{BG21:sicomp,BBKO21}. For $\CSP$s, polymorphisms form \emph{clones}; in
particular, they are closed under composition. This means that some symmetries
may be obtainable through compositions of other symmetries, so that one can hope
to capture properties of entire families of $\CSP$s (e.g., bounded width,
tractability, etc.) through the presence of a certain polymorphism and, more
generally, to describe their complexity through universal-algebraic tools. A
chief example of this approach is the positive resolution of the
dichotomy conjecture for $\CSP$s by Bulatov~\cite{Bulatov17:focs} and
Zhuk~\cite{Zhuk20:jacm}, establishing that finite-domain non-uniform $\CSP$s are either in P or are NP-complete.
For $\PCSP$s, however, polymorphisms are not closed under composition, and the
algebraic structure they are endowed with -- known as a \emph{minion} -- is much
less rich than clones. In particular, the structure theory of finite algebras, a central topic
of universal algebra, is not applicable in this setting.

For a PCSP template $(\A,\B)$, one would ideally aim to build tests for
$\PCSP(\A,\B)$ in a systematic way. One method to do so is by considering tests
associated with minions and, in particular, their \emph{free structures}. The
free structure $\freeM(\A)$ of a minion $\Mminion$ generated by a structure
$\A$~\cite[Definition~4.1]{BBKO21} is a 
(potentially infinite) 
structure obtained, essentially,
by simulating the relations in $\A$ on a domain consisting of elements of
$\Mminion$. Then, we define $\Test{\Mminion}{}(\X,\A)=\YES$ if
$\X\to\freeM(\A)$, and $\NO$ otherwise. (Note that $\X$ is the input to the
problem; the minion $\Mminion$ and the relational structure $\A$, coming from a
$\PCSP$ template, are (fixed) parameters of the test.)

For certain choices of $\Mminion$, $\Test{\Mminion}{}$ is \textit{always} a
tractable test; i.e., $\CSP(\freeM(\A))$ is tractable for any $\A$. 
This is the case for the minions $\mathscr{H}=\Pol(\textsc{Horn-3-Sat})$ (whose
elements are nonempty subsets of a given set), $\Qconv$ (whose elements are
stochastic vectors), and $\Zaff$ (whose elements are affine integers vectors).
As it was shown in~\cite{BBKO21}, these three minions correspond to three
well-studied algorithmic relaxations: $\Test{\Hminion}{}$ is Arc Consistency
($\ArcC$)~\cite{Mackowrth77:aij}, $\Test{\Qconv}{}$ is the Basic Linear
Programming relaxation ($\BLP$)~\cite{Kun12:itcs}, and $\Test{\Zaff}{}$ is the
Affine Integer Programming relaxation ($\Zaff$)~\cite{BG19}. In~\cite{bgwz20},
the algorithm $\BLP+\AIP$ (which we shall call $\BA$ in this work) corresponding to a combination of linear and integer
programming was shown to be captured by a certain minion $\Mblpaip$. 
In summary, several widely used algorithms for $\parPCSPs$ are
minion tests; in particular, Arc Consistency, which is the simplest example of
consistency algorithms, and standard algorithms based on relaxations.

Convex relaxations have been instrumental in the understanding of the
complexity of many variants of $\CSP$s, including constant
approximability of Min-$\CSP$s~\cite{Ene13:soda,Dalmau18:jcss} and
Max-$\CSP$s~\cite{Khot07:sicomp,Raghavendra08:everycsp}, robust satisfiability
of $\CSP$s~\cite{Zwick98:stoc,Kun12:itcs,Barto16:sicomp}, and exact solvability
of optimisation $\CSP$s~\cite{ktz15:sicomp,tz17:sicomp}. 
An important line of work focused on making convex relaxations stronger and
stronger via the so-called ``lift-and-project'' method,
which includes the Sherali--Adams LP hierarchy~\cite{Sherali1990}, the SDP hierarchy of
Lov\'asz and Schrijver~\cite{Lovasz91:jo}, and the (stronger) SDP hierarchy of
Lasserre~\cite{Lasserre02}, also known as the Sum-of-Squares hierarchy
(see~\cite{Laurent03} for a comparison of these hierarchies).
The study of the power of various convex hierarchies has led to several breakthroughs,
e.g.,~\cite{Arora09:jacm,Chan16:jacm-lp,Kothari22:sicomp,Ghosh18:toc,Tulsiani09:stoc,Lee15:stoc}.

In the same spirit as lift-and-project hierarchies of convex relaxations, the (combinatorial)
$k$-consistency algorithm (also known as the $k$-bounded-width algorithm) has a central role in the study of tractability for constraint satisfaction problems~\cite{Feder98:monotone,Atserias07:icalp}. Here $k$
is an integer bounding the number of variables considered in reasoning about
partial solutions; the case $k=1$ corresponds to Arc Consistency mentioned
above. The notion of local consistency, in addition to being one of the key
concepts in constraint satisfaction, has also emerged independently in finite
model theory~\cite{Kolaitis95:jcss}, graph theory~\cite{Hell1996duality}, and
proof complexity~\cite{Atserias08:jcss}. The power of local consistency for (non-promise) CSPs
is now fully understood~\cite{Bulatov09:width,Barto14:jacm,Barto14:jloc}. 
Recent works identified necessary conditions on local consistency to solve PCSPs~\cite{Atserias22:soda,ciardo2024periodic,chan2025random}.
Very recent work established the importance of infinite-domain sandwiches for promise CSPs~\cite{Mottet25:lics,Pinsker25:mfcs}.

\paragraph{Contributions}
The main contribution of this work is the introduction of \emph{a general
framework for refining algorithmic relaxations of $\parPCSPs$}. Given a minion
$\Mminion$, we present a technique to systematically turn $\Test{\Mminion}{}$
into the corresponding \emph{hierarchy of minion tests}: a sequence of
increasingly tight relaxations $\Test{\Mminion}{k}$ for $k\in\N$.

The technique we adopt to build hierarchies of minion tests is inspired by multilinear algebra. We describe a  \emph{tensorisation} construction that turns a given structure $\X$ into a structure $\Xk$ on a different signature, where both the domain and the relations are multidimensional objects living in tensor spaces.
Essentially, $\Test{\Mminion}{k}$ works by applying $\Test{\Mminion}{}$ to \emph{tensorised} versions of the structures $\X$ and $\A$ rather than to $\X$ and $\A$ themselves.  
This allows us to study the functioning of the algorithms in the hierarchy by describing the structure of a space of tensors -- which can be accomplished by using multilinear algebra.
This approach has not appeared in the literature on
Sherali--Adams, bounded width, Sum-of-Squares, hierarchies of integer
programming, and related algorithmic techniques such as the high-dimensional
Weisfeiler--Leman algorithm~\cite{AtseriasM13,Butti21:mfcs}. Butti and
Dalmau~\cite{Butti21:mfcs} recently characterised for $\CSP$s when the $k$-th
level of the Sherali--Adams LP programming hierarchy accepts in terms of a
construction different from the one introduced in this work. Unlike the
tensorisation, the construction considered in~\cite{Butti21:mfcs} yields a
relational structure whose domain includes the set of constraints of the
original structure.

One key feature of our framework is that it is \emph{modular}, in that
it allows splitting the description of a hierarchy of minion tests into an
\emph{algebraic} part, corresponding to the minion $\Mminion$, and a
\emph{geometric} part, entirely dependent on the tensorisation construction and
hence common to all hierarchies. By considering certain well-behaved families of
minions, which we call \emph{linear} and \emph{conic}, we can then deduce
general properties of the corresponding hierarchies by only describing the structure of tensor spaces.

Letting
the minion $\Mminion$ be $\Hminion$ (resp., $\Qconv$, $\Zaff$), we shall retrieve in this
way the bounded-width
hierarchy (resp., the Sherali--Adams LP hierarchy, the affine integer
programming hierarchy). Additionally, we describe 
a new minion $\Sminion$
capturing the power of the basic semidefinite programming relaxation
($\SDP$),\footnote{We point out that Brakensiek, Guruswami,
and Sandeep~\cite{bgs_robust23stoc} independently provided a minion characterisation for the power of $\SDP$
that is essentially identical to the one we obtain in the current work.
We also note that the complexity of testing SDP feasibility is an open
problem, cf.~\cite{Gartner2012approximation}.} and we prove that $\Test{\Sminion}{k}$ coincides with the
Sum-of-Squares hierarchy. As a consequence, our framework is
able to provide a unified description of all these four well-known
hierarchies of algorithmic relaxations.
In addition to
casting \emph{known} hierarchies of relaxations as hierarchies of minion tests,
this approach can be used to design \emph{new} hierarchies. In particular, we describe an operation that we call \emph{semi-direct product} of minions, which consists in combining multiple minions to form a new minion associated with a stronger relaxation. In practice, this method can be used to design an algorithm that combines the features of different known algorithmic techniques. We show that the minion $\Mblpaip$  associated with the $\BA$ relaxation from~\cite{bgwz20} is the semi-direct product of $\Qconv$ and $\Zaff$, and we formally introduce the $\BA^k$ hierarchy as the hierarchy $\Test{\Mblpaip}{k}$.

The scope of this framework is potentially not limited to
constraint satisfaction: The multilinear pattern that we found at the core of different algorithmic hierarchies
appears to be transversal to the constraint satisfaction setting and, instead, inherently
connected to the algorithmic techniques themselves, which can be applied to
classes of computational problems living beyond the realms of
($\operatorname{P}$)$\CSP$s.

\paragraph{Subsequent work}
The tensorisation methodology introduced in this paper has later been used by the authors in
follow-up work on the applicability of relaxation hierarchies to specific problems. In
particular, it has been used to prove that the approximate graph colouring problem is not
solved by the hierarchy for the combined basic linear programming and affine
integer programming relaxation~\cite{CZ25sicomp:approximate}.
Recently, Dalmau and Opr\v{s}al~\cite{Dalmau24:lics} showed that various concepts introduced in this work -- in particular, the concept of conic minions -- are naturally connected to the description of a certain type of reductions between (P)CSPs, see~\cite[Section~4.2]{Dalmau24:lics} as well as Remark~\ref{rem:do23} in the current paper.
\paragraph{Organisation}	
The remaining part of the paper is organised as follows. 
Section~\ref{sec_prelims} contains relevant terminology for tensors and gives a formal description of promise CSPs as well as the relaxations and hierarchies used throughout the paper (see also Appendix~\ref{appendix_notes_relaxations_hierarchies}). Section~\ref{sec_minion_tests} contains some initial, basic results on minion tests. Section~\ref{sec_minion_for_SDP} introduces a minion $\Sminion$ capturing the power of $\SDP$. Sections~\ref{sec_hierarchies_of_minion_tests},~\ref{sec_hierarchies_of_linear_minion_tests}, and~\ref{sec_hierarchies_of_conic_minion_tests} are the technical core of the paper; they provide a description of hierarchies of tests built on arbitrary minions, linear minions, and conic minions, respectively. Section~\ref{sec_semi_direct_product} describes the semi-direct product of minions, needed to capture the $\BA^k$ hierarchy. The machinery assembled in the previous sections is finally used in Section~\ref{sec_proof_of_main_thm} to prove that five well-known algorithmic hierarchies for (P)CSPs are captured by our framework.

\section{Preliminaries}
\label{sec_prelims}

\paragraph{Notation}
We denote by $\N$ the set of positive integers. For $k\in\N$, we denote by $[k]$ the set $\{1,\ldots,k\}$. We indicate by
$\be_i$ the $i$-th standard unit vector of the appropriate size (which will
be clear from the context); i.e., the $i$-th entry of $\be_i$ is $1$, and all other entries are $0$. $\bzero_p$ and $\bone_p$ denote the all-zero and all-one vector of size $p$, respectively, while $I_p$ and $O_{p,q}$ denote the $p\times p$ identity matrix and the $p\times q$ all-zero matrix, respectively.
Given a matrix $M$, we let $\tr(M)$ and $\csupp(M)$ be the trace and the set of indices of nonzero columns of $M$, respectively. The symbol $\ale$ denotes the cardinality of $\N$. Given a set $S$, we denote  the identity map on $S$ by $\operatorname{id}_S$ (or simply by $\operatorname{id}$ when $S$ is clear from the context).

\subsection{Terminology for tensors}
\label{sec_tensors_tuples}

\paragraph{Tuples}
Given a set $S$, two integers $k,\ell\in\N$, a tuple $\bs=(s_1,\dots,s_k)\in S^k$, and a tuple $\bi=(i_1,\dots,i_\ell)\in [k]^\ell$, $\bs_\bi$ shall denote the \emph{projection}
of $\bs$ onto $\bi$, i.e., the tuple in $S^\ell$ defined by $\bs_\bi= (s_{i_1},\dots,s_{i_\ell})$. Given two tuples $\bs=(s_1,\dots,s_k)\in S^k$ and $\tilde{\bs}=(\tilde{s}_1,\dots,\tilde{s}_\ell)\in S^\ell$, their \emph{concatenation} is the tuple $(\bs,\tilde{\bs})= (s_1,\dots,s_k,\tilde{s}_1,\dots,\tilde{s}_\ell)\in S^{k+\ell}$. We also define $\{\bs\}= \{s_1,\dots,s_k\}$.
Given two sets $S,\tilde S$ and two tuples $\bs=(s_1,\dots,s_k)\in S^k$, $\tilde{\bs}=(\tilde{s}_1,\dots,\tilde{s}_k)\in \tilde{S}^k$, we write $\bs\prec\tilde{\bs}$ if, for any $\alpha,\beta\in [k]$, $s_\alpha=s_\beta$ implies $\tilde{s}_\alpha=\tilde{s}_\beta$. The expression $\bs\not\prec\tilde{\bs}$ shall mean the negation of $\bs\prec\tilde{\bs}$. Notice that the relation ``$\prec$'' is preserved under projections: If $\bs\prec\tilde{\bs}$ and $\bi\in [k]^\ell$, then $\bs_\bi\prec\tilde{\bs}_\bi$.

\paragraph{Semirings}
A \emph{semiring} $\mathcal{S}$ consists of a set $S$ equipped with two binary operations ``$+$'' and ``$\cdot$'' such that 
\begin{itemize}
\item
$(S,+)$ is a commutative monoid with an identity element ``$0_{\mathcal{S}}$'' (i.e., $(r+s)+t=r+(s+t)$, $0_{\mathcal{S}}+r=r+0_{\mathcal{S}}=r$, and $r+s=s+r$);
\item
$(S,\cdot)$ is a monoid with an identity element ``$1_{\mathcal{S}}$'' (i.e., $(r\cdot s)\cdot t=r\cdot (s\cdot t)$ and $1_{\mathcal{S}}\cdot r=r\cdot 1_{\mathcal{S}}=r$);
\item
``$\cdot$'' distributes over ``$+$'' (i.e., $r\cdot (s+t)=(r\cdot s)+(r\cdot t)$ and $(r+s)\cdot t=(r\cdot t)+(s\cdot t)$);
\item
``$0_{\mathcal{S}}$'' is a multiplicative absorbing element (i.e., $0_{\mathcal{S}}\cdot r=r\cdot 0_{\mathcal{S}}=0_{\mathcal{S}}$).
\end{itemize}
Examples of semirings are $\Z$, $\Q$, and $\R$ with the usual addition and multiplication operations, or the Boolean semiring $(\{0,1\},\vee,\wedge)$. 

Let $V$ be a finite set, and choose an element $s_v\in\mathcal{S}$ for each $v\in V$. We let the formal expression $\sum_{v\in V}s_v$ equal $0_{\mathcal{S}}$ if $V=\emptyset$.
To increase the readability, we shall usually write $0$ and $1$ for $0_{\mathcal{S}}$ and $1_{\mathcal{S}}$; the relevant semiring $\mathcal{S}$ will always be clear from the context.

\paragraph{Tensors}

Given a set $S$, an integer $k\in\N$, and a tuple $\bn=(n_1,\dots,n_k)\in\N^k$, by $\cT^{\bn}(S)$ we denote the set of functions from $[n_1]\times\dots\times [n_k]$ to $S$, which we visualise as hypermatrices or tensors having $k$ \emph{modes}, where the $i$-th mode has size $n_i$ for $i\in [k]$. If $\bn=n\cdot\bone_k=(n,\dots,n)$ is a constant tuple, $\cT^{\bn}(S)$ is a set of \emph{cubical} tensors, each of whose modes has the same length $n$. For example, if $\bn=n\cdot\bone_2=(n,n)$, $\cT^\bn(S)$ is the set of $n\times n$ matrices having entries in $S$.
We sometimes denote an element of $\cT^{\bn}(S)$ by $T=(t_\bi)$, where $\bi\in [n_1]\times\dots\times [n_k]$ and $t_\bi$ is the image of $\bi$ under $T$. Moreover, given two tuples $\bn\in\N^k$ and $\tilde{\bn}\in\N^{\ell}$, we sometimes write $\cT^{\bn,\tilde{\bn}}(S)$ for $\cT^{(\bn,\tilde{\bn})}(S)$, where $(\bn,\tilde{\bn})$ is the concatenation of $\bn$ and $\tilde{\bn}$. Whenever $k\geq 2$ and $n_i=1$ for some $i\in [k]$, we can (and will) identify $\cT^{\bn}(S)$ with $\cT^{\hat\bn}(S)$, where $\hat{\bn}\in \N^{k-1}$ is obtained from $\bn$ by deleting the $i$-th entry. Note that the definition of tensors straightforwardly extends to the case that $n_i=\ale$ for some $i\in [k]$.

\paragraph{Contraction}
Take a semiring $\mathcal{S}$. For $k,\ell,m\in\N$, take $\bn\in\N^k$, $\bp\in\N^\ell$, and $\bq\in\N^m$. 
The \emph{contraction} of two tensors $T=(t_\bi)\in\cT^{\bn,\bp}(\mathcal{S})$ and $\tilde T=(\tilde t_\bi)\in\cT^{\bp,\bq}(\mathcal{S})$, denoted by  
$T\overset{\mathrm{\ell}}{\ast}\tilde{T}$,
is the tensor in $\cT^{\bn,\bq}(\mathcal{S})$ such that, for $\bi\in [n_1]\times\dots\times [n_k]$ and $\bj\in[q_1]\times\dots\times [q_m]$, the $(\bi,\bj)$-th entry of $T\overset{\mathrm{\ell}}{\ast}\tilde{T}$ is given by 
\begin{align}
\label{eqn_418_86}
\sum_{\bz\in [p_1]\times\dots\times [p_\ell]}t_{(\bi,\bz)}\tilde{t}_{(\bz,\bj)}
\end{align}
(where the addition and multiplication are meant in the semiring $\mathcal{S}$). This notation straightforwardly extends to the cases when $k=0$ or $m=0$, i.e., when we are contracting over all modes of $T$ or $\tilde{T}$. In such cases, we write $T\ast\tilde{T}$ for $T\overset{\mathrm{\ell}}{\ast}\tilde{T}$. 
The contraction operation is not associative in general. For instance, for $T$ and $\tilde T$ as above and $\hat T\in\cT^{\bn,\bq}(\mathcal{S})$, the expression $(T\cont{\ell}\tilde T)\cont{k+m}\hat{T}=(T\cont{\ell}\tilde T)\ast\hat{T}$ is well defined, while changing the order of the contractions results in an expression that is not well defined in general.
On the other hand, it is not hard to check that the order in which the contractions are performed is irrelevant if the contractions are taken over disjoint sets of modes. 
When the order does matter, we include brackets. 

\begin{example}
Given two vectors $\bu,\bv\in\cT^{p}(\R)$ and two matrices $M\in \cT^{n, p}(\R)$, $N\in\cT^{p, q}(\R)$, we have that
$\bu\cont{1}\bv=\bu\ast \bv=\bu^T\bv$ (the dot product of $\bu$ and $\bv$),
$M\cont{1}\bu=M\ast\bu=M\bu$ (the matrix-vector product of $M$ and $\bu$), and
$M\cont{1}N=MN$ (the matrix 
product of $M$ and $N$), as can be easily checked by applying~\eqref{eqn_418_86}.
\end{example}
For $\bi\in [n_1]\times\dots\times [n_k]$, we denote by $E_\bi$ the \emph{$\bi$-th standard unit tensor}; i.e., the tensor in $\cT^{\bn}(\mathcal{S})$ all of whose entries are $0_{\mathcal{S}}$, except the $\bi$-th entry that is $1_{\mathcal{S}}$. Given $T\in \cT^{\bn}(\mathcal{S})$, notice that $E_\bi\cont{k} T=E_\bi\ast T$ is the $\bi$-th entry of $T$ (in fact, we shall always indicate the entries of a tensor in this way). The \emph{support} of $T$ is the set of indices of all nonzero entries of $T$; i.e., the set 
$\supp(T)=\{\bi\in [n_1]\times\dots\times [n_k]:E_\bi\ast T\neq 0_{\mathcal{S}}\}$.

\subsection{Promise CSPs}
\label{subsec_prelimns_pcsps}
\paragraph{Structures}
A \textit{signature} $\sigma$ is a finite set of relation symbols $R$, each having an \textit{arity} 
$\ar(R)\in\N$.
A \textit{$\sigma$-structure} $\A$ consists of a set $A$ (called the \textit{domain}) and, for each $R\in\sigma$, a relation $R^{\A}\subseteq A^{\ar(R)}$. A $\sigma$-structure $\A$ is finite if the size $|A|$ of its domain $A$
is finite. In this case, we often assume that the domain of $\A$ is $A=[n]$. (In general, we will reserve the letter $n$ to denote the domain size of $\A$.)

Let $\A$ and $\B$ be $\sigma$-structures. A \emph{homomorphism} from $\A$ to $\B$ is a map $h:A\to B$ such that, for each $R\in\sigma$ with $r=\ar(R)$ and for each $\ba=(a_1,\ldots,a_r)\in A^r$, if $\ba\in R^{\A}$ then $h(\ba)=(h(a_1),\ldots,h(a_r))\in R^{\B}$. We denote the existence of a homomorphism from $\A$ to $\B$ by $\A\to\B$.
Fix a pair $(\A,\B)$ of $\sigma$-structures such that $\A\to\B$. The
\emph{promise constraint satisfaction problem} parameterised by the \textit{template} $(\A,\B)$, denoted by
$\PCSP(\A,\B)$, is the following decision problem:\footnote{From now on, we
shall work only with the \emph{decision} version of the PCSP.}
The input is a $\sigma$-structure $\X$ and the goal is to answer $\YES$ if $\X\to\A$ and $\NO$ if $\X\not\to\B$. The promise is that it is not the case that $\X\not\to\A$ and $\X\to\B$. 
We write $\CSP(\A)$ for $\PCSP(\A,\A)$, the classic (non-promise) constraint satisfaction problem.

\paragraph{Polymorphisms and minions}
The algebraic theory of $\PCSP$s developed in~\cite{BG21:sicomp,BBKO21} relies on the notions of polymorphism and minion. 
Let $\A$ be a $\sigma$-structure. For $L\in\N$, the \emph{$L$-th power} of $\A$ is the $\sigma$-structure $\A^L$ with domain $A^L$ whose relations are defined as follows: Given $R\in\sigma$ and an $L\times \ar(R)$ matrix $M$ such that all rows of $M$ are tuples in $R^\A$, the columns of $M$ form a tuple in $R^{\A^L}$. 

An $L$-ary \emph{polymorphism} of a $\PCSP$ template $(\A,\B)$ is a homomorphism from $\A^L$ to $\B$.
Minions were defined in~\cite{BBKO21} as sets of functions with certain properties. We shall use here the abstract definition of minions, following~\cite{bgwz20} and subsequent literature in (P)CSPs.
A \emph{minion} $\mathscr{M}$ consists in the disjoint union of nonempty sets $\mathscr{M}^{(L)}$ for $L\in \N$ equipped with (so-called \emph{minor}) operations $(\cdot)_{/\pi}:\mathscr{M}^{(L)}\rightarrow\mathscr{M}^{(L')}$ for all functions $\pi:[L]\rightarrow [L']$, which satisfy
$M_{/\operatorname{id}}=M$ and, for
 $\pi:[L]\rightarrow [L']$ and $\tilde{\pi}:[L']\rightarrow [L'']$,
 $(M_{/\pi})_{/\tilde{\pi}}=M_{/\tilde{\pi}\circ \pi}$
for all $M\in\mathscr{M}^{(L)}$. 

\begin{example}
\label{example_minion_pol_A_B}
The set $\Pol(\A,\B)$ of all polymorphisms of a $\PCSP$ template $(\A,\B)$ is a minion with the minor operations defined by 
$f_{/\pi}(a_1,\dots,a_{L'})= f(a_{\pi(1)},\dots,a_{\pi(L)})$
for $f:\A^L\to\B$ and $\pi:[L]\to [L']$.
In this minion, the minor operations correspond to identifying coordinates, permuting coordinates, and introducing dummy coordinates (of polymorphisms).
\end{example}
\begin{example}
\label{examples_famous_minions}
Other examples of minions that shall appear frequently in this work are $\Qconv$, $\Zaff$, and $\Hminion$, capturing the power of the algorithms $\BLP$, $\AIP$, and $\ArcC$, respectively. 
The $L$-ary elements of $\Qconv$ are rational vectors of size $L$ that are \textit{stochastic} (i.e., whose entries are nonnegative and sum up to $1$), with the minor
operations defined as follows: For $\bq\in\Qconv^{(L)}$
and $\pi:[L]\to[L']$, $\bq_{/\pi}= P\bq$, where $P$ is the $L'\times
L$ matrix whose $(i,j)$-th entry is $1$ if $\pi(j)=i$, and $0$ otherwise.
$\Zaff$ is defined similarly to $\Qconv$, the only difference being that its $L$-ary elements are affine integer vectors (i.e., their entries are integer -- possibly negative -- numbers and sum up to $1$).
$\Hminion$ is the minion of polymorphisms of the $\CSP$ template $\textsc{Horn-3-Sat}$, i.e., the Boolean structure whose three relations are ``$x\wedge y\Rightarrow z$'', $\{0\}$, and $\{1\}$.\footnote{$\textsc{Horn-3-SAT}$ is often defined in the literature with one more relation ``$x\wedge y\Rightarrow \neg z$'' but this is redundant.} Equivalently (cf.~\cite{BBKO21}), $\Hminion$ can be described as follows: For any $L\in\N$, the $L$-ary elements of $\Hminion$ are Boolean functions of the form $f_Z(x_1,\dots,x_L)=\bigwedge_{z\in Z}x_z$ for any $Z\subseteq [L]$, $Z\neq\emptyset$; the minor operations are defined as in Example~\ref{example_minion_pol_A_B}.
We shall also mention the minion $\Mblpaip$ capturing the algorithm $\BA$ described in the Introduction. Its $L$-ary elements are $L\times 2$ matrices whose first column $\bu$ belongs to $\Qconv^{(L)}$ and whose second column $\bv$ belongs to $\Zaff^{(L)}$, and such that if the $i$-th entry of $\bu$ is zero then the $i$-th entry of $\bv$ is also zero, for each $i\in [L]$. The minor operation is defined on each column individually; i.e., $\pair{\bu}{\bv}_{/\pi}=\pair{\bu_{/\pi}}{\bv_{/\pi}}$.
\end{example}

For two minions $\mathscr{M}$ and $\mathscr{N}$, a \emph{minion homomorphism} $\xi:\mathscr{M}\rightarrow\mathscr{N}$ is a map that preserves arities and minors: Given $M\in\mathscr{M}^{(L)}$ and $\pi:[L]\rightarrow[L']$, $\xi(M)\in \mathscr{N}^{(L)}$ and $\xi(M_{/\pi})=\xi(M)_{/\pi}$. We denote the existence of a minion homomorphism from $\mathscr{M}$ to $\mathscr{N}$ by $\mathscr{M}\to\mathscr{N}$. 
If a minion homomorphism is invertible as a function -- in which case, its inverse must also be a minion homomorphism -- we say that it is a \textit{minion isomorphism}. 

We will also need the concept of free structure from~\cite{BBKO21}. 
Let $\mathscr{M}$ be a minion and let $\A$ be a (finite) $\sigma$-structure. The
\emph{free structure} of $\sM$ generated by $\A$ is a $\sigma$-structure
$\bF_{\sM}(\A)$ with domain $\sM^{(|A|)}$ (potentially infinite). Given a
relation symbol $R\in\sigma$ of arity $r$, a tuple $(M_1,\dots,M_r)$ of elements
of $\sM^{(|A|)}$ belongs to $R^{\bF_{\sM}(\A)}$ if and only if there is some
$Q\in \sM^{(|R^\A|)}$ such that $M_i=Q_{/\pi_i}$ for each $i\in[r]$, where
$\pi_i:R^\A\to A$ maps $\ba\in R^\A$ to its $i$-th coordinate $a_i$. (Here and henceforth, we are implicitly identifying $R^\A$ with $|R^\A|$ and $A$ with $|A|$ for readability.) The
definition of free structure may at this point strike the reader as rather
technical. We shall see that, if we consider certain quite general classes of
minions, this object unveils an interesting geometric description of linear and multilinear nature.

\subsection{Relaxations and hierarchies}
\label{sec_relaxations_hierarchies}
The following relaxations of $\parPCSPs$ shall appear in this paper: \textit{Arc
consistency} ($\ArcC$) is a propagation algorithm that checks for the existence
of assignments satisfying the local constraints of the given $\parPCSP$
instance~\cite{Mackowrth77:aij}; the \textit{basic linear programming} ($\BLP$)
relaxation looks for compatible probability distributions on
assignments~\cite{Kun12:itcs}; the \textit{affine integer programming} ($\AIP$)
relaxation turns the constraints into linear equations, that can be solved over
the integers using (a variant of) Gaussian elimination~\cite{BG19}; the \textit{basic
semidefinite programming} ($\SDP$) relaxation is essentially a strengthening of
$\BLP$, where probabilities are replaced by vectors satisfying orthogonality
requirements~\cite{Raghavendra08:everycsp}; the \textit{combined basic linear programming and affine integer programming} ($\BA$) relaxation is a hybrid algorithm blending $\BLP$ and $\AIP$~\cite{bgwz20}. We refer the reader to~\cite{BBKO21} for a formal description of $\ArcC$, $\BLP$, and $\AIP$, and to~\cite{bgwz20} for $\BA$, while $\SDP$ shall be formally defined later in this section.

In this work, we shall mainly focus on algorithmic \emph{hierarchies}.
The bounded-width ($\BW^k$) hierarchy (also known as the \textit{local consistency checking algorithm}) refines $\ArcC$ by propagating local solutions over bigger and bigger portions of the instance, while the Sherali--Adams LP ($\SA^k$), affine integer programming ($\AIP^k$), Sum-of-Squares ($\SoS^k$), and combined basic linear programming and affine integer programming ($\BA^k$) hierarchies strengthen the $\BLP$, $\AIP$, $\SDP$, and $\BA$ relaxations, respectively, by looking for compatible distributions of assignments over sub-instances of some fixed size.
Below, we give a formal description of 
the five hierarchies mentioned above, as well as the $\SDP$ relaxation (see also the discussion in Appendix~\ref{appendix_notes_relaxations_hierarchies} for a comparison with different formulations of these
algorithms appearing in the literature on $\parPCSPs$). 

\paragraph{$\mbox{BW}^{\mbox{k}}$}
Given two $\sigma$-structures $\X$ and $\A$ and a subset $S\subseteq X$, a \emph{partial homomorphism} from $\X$ to $\A$ with domain $S$ is a homomorphism from $\X[S]$ to $\A$, where $\X[S]$ is the substructure of $\X$ \emph{induced} by $S$ -- i.e., it is the $\sigma$-structure whose domain is $S$ and, for any $R\in\sigma$, $R^{\X[S]}=R^\X\cap S^{\ar(R)}$.
We say that the $k$-th level of the \emph{bounded-width algorithm} accepts when applied to $\X$ and $\A$, and we write $\BW^k(\X,\A)=\YES$, if there exists a nonempty collection $\mathcal{F}$ of partial homomorphisms from $\X$ to $\A$ with at most $k$-element domains such that $(i)$ $\mathcal{F}$ is closed under restrictions, i.e., for every $f\in \mathcal{F}$ and every $V\subseteq\dom(f)$, $f|_{V}\in \mathcal{F}$, and $(ii)$ $\mathcal{F}$ has the extension property up to $k$, i.e., for every $f\in \mathcal{F}$ and every $V\subseteq X$ with $|V|\leq k$ and $\dom(f)\subseteq V$, there exists $g\in \mathcal{F}$ such that $g$ extends $f$ and $\dom(g)=V$.

We say that $\BW^k$ \emph{solves} a $\PCSP$ template $(\A,\B)$ if $\X\to\B$
whenever $\BW^k(\X,\A)=\YES$. Note that the algorithm is always complete: If
$\X\to\A$ then $\BW^k(\X,\A)=\YES$.

\paragraph*{$\mbox{SA}^{\mbox{k}}$, $\mbox{AIP}^{\mbox{k}}$, and $\mbox{BA}^{\mbox{k}}$}
For $k\in\N$, we say that a $\sigma$-structure $\A$ is \emph{$k$-enhanced} if the signature $\sigma$ contains a $k$-ary symbol $R_k$ such that $R_k^\A=A^k$.
Given two $k$-enhanced $\sigma$-structures $\X,\A$, we introduce a variable $\lambda_{R,\bx,\ba}$ for every $R\in\sigma$, $\bx\in R^\X$, $\ba\in R^\A$. Consider the following system of equations:
\begin{align}
\label{eqn_SA_and_AIPk}
\left.
\tag{$\clubsuit$}
\begin{array}{llllllll}
\mbox{($\clubsuit 1$)} & \displaystyle\sum_{\ba\in R^\A}\lambda_{R,\bx,\ba}&=&1 & R\in\sigma,\bx\in R^\X\\
\mbox{($\clubsuit 2$)} & \displaystyle \sum_{\ba\in R^\A,\,\ba_\bi=\bb}\lambda_{R,\bx,\ba}&=&\lambda_{R_k,\bx_\bi,\bb} & 
R\in\sigma,\bx\in R^\X,\bi\in [\ar(R)]^k,\bb\in A^k\\
\mbox{($\clubsuit 3$)} & \displaystyle
  \lambda_{R,\bx,\ba}&=&0&R\in\sigma,\bx\in R^\X,\ba\in R^\A,
  \bx\not\prec\ba.\footnotemark
\end{array}
\right\}
\end{align}
\footnotetext{\label{foot:rep}The condition $\bx\not\prec\ba$ formalises the requirement that
the same variables (elements of $\bx$) should not be assigned different values
(elements of $\ba$). Some papers avoid this requirement by imposing that
$\parPCSP$ instances should have no repetition of variables in the constraint scopes; i.e., elements of
$\bx$ should all be distinct.}

We say that the $k$-th level of the \emph{Sherali--Adams linear programming hierarchy} accepts when applied to $\X$ and $\A$, and we write $\SA^k(\X,\A)=\YES$, if the system~\eqref{eqn_SA_and_AIPk} admits a solution such that all variables take rational nonnegative values.\footnote{See Appendix~\ref{subsec_appendix_SA_AIPk} (in particular, Lemma~\ref{lem_our_SA_equals_BD_SA}) for an equivalent description of $\SA^k$ that does not involve the enhancement operation.} 
Similarly, we say that the $k$-th level of the \emph{affine integer programming hierarchy} accepts when applied to $\X$ and $\A$, and we write $\AIP^k(\X,\A)=\YES$, if the system above admits a solution such that all variables take integer values. Moreover, we say that the $k$-th level of the \emph{combined basic linear programming and affine
integer programming hierarchy} accepts when applied to $\X$ and $\A$, and we write $\BA^k(\X,\A)=\YES$, if the system above admits both a solution such that all variables take rational nonnegative values and a solution such that all variables take integer values, and the following \emph{refinement condition} holds: Denoting the rational nonnegative and the integer solutions by the superscripts $(\Bmat)$ and $(\Amat)$, respectively, we require that $\lambda^{(\Amat)}_{R,\bx,\ba}=0$ whenever $\lambda^{(\Bmat)}_{R,\bx,\ba}=0$, for each $R\in\sigma,\bx\in R^\X,\ba\in R^\A$.

We say that $\SA^k$ \emph{solves} a $\PCSP$ template $(\A,\B)$ if $\X\to\B$
whenever $\SA^k(\X,\A)=\YES$. The definition for $\AIP^k$ and $\BA^k$ is analogous. Note
that the three algorithms are always complete: If $\X\to\A$ then
$\SA^k(\X,\A)=\AIP^k(\X,\A)=\BA^k(\X,\A)=\YES$. 

\paragraph{SDP}

Given two $\sigma$-structures $\X,\A$, let $\gamma=
|X|\cdot|A|+\sum_{R\in\sigma}|R^\X|\cdot|R^\A|$. 
We introduce a variable $\blambda_{x,a}$ taking values in $\R^\gamma$ for every $x\in X$, $a\in A$, and a variable $\blambda_{R,\bx,\ba}$ taking values in $\R^\gamma$ for every $R\in\sigma,\bx\in R^\X,\ba\in R^\A$.\footnote{We point out that requiring the SDP vectors to be elements of $\R^\gamma$ is equivalent to letting them belong to an arbitrary finite-dimensional real vector space. This is because a real square matrix has a Cholesky decomposition if and only if it has a square Cholesky decomposition (if and only if it is positive semidefinite). The same holds for the $\SoS^k$ system~\eqref{eqn_Lasserre_def}.}
Consider the following system of equations:
\begin{align}
\left.
\label{eqn_SDP_def}
\tag{$\vardiamond$}
\begin{array}{llllll}
\mbox{($\vardiamond1$)}&\displaystyle\sum_{a\in A}\|\blambda_{x,a}\|^2&=&1& x\in X\\
\mbox{($\vardiamond2$)}&\blambda_{x,a}\cdot\blambda_{x,a'}&=&0&x\in X,a\neq a'\in A\\
\mbox{($\vardiamond3$)}&\blambda_{R,\bx,\ba}\cdot\blambda_{R,\bx,\ba'}&=&0&R\in\sigma,\bx\in R^\X,\ba\neq\ba'\in R^\A\\
\mbox{($\vardiamond4$)}&\displaystyle\sum_{\substack{\ba\in R^\A,\,a_i=a}}\blambda_{R,\bx,\ba}&=&\blambda_{x_i,a}&R\in\sigma,\bx\in R^\X,a\in A, i\in [\ar(R)].
\end{array}
\right\}
\end{align}
We say that the \emph{standard semidefinite programming relaxation} accepts when applied to $\X$ and $\A$, and we write $\SDP(\X,\A)=\YES$, if the system~\eqref{eqn_SDP_def} admits a solution. We say that $\SDP$ \emph{solves} a $\PCSP$ template $(\A,\B)$ if $\X\to\B$ whenever $\SDP(\X,\A)=\YES$. Note that the algorithm is always complete: If $\X\to\A$ then $\SDP(\X,\A)=\YES$.

\paragraph*{$\mbox{SoS}^{\mbox{k}}$}
Given two $k$-enhanced $\sigma$-structures $\X,\A$,
let $\gamma=\sum_{R\in\sigma}|R^\X|\cdot|R^\A|$. We introduce a variable $\blambda_{R,\bx,\ba}$ taking values in $\R^\gamma$ for every $R\in\sigma,\bx\in R^\X,\ba\in R^\A$. Consider the following system of equations:
\begin{align}
\left.
\label{eqn_Lasserre_def}
\tag{$\spadesuit$}
\begin{array}{lllllll}
\mbox{($\spadesuit 1$)} & \displaystyle\sum_{\ba\in R^\A}\|\blambda_{R,\bx,\ba}\|^2&=&1 & R\in\sigma,\bx\in R^\X\\
\mbox{($\spadesuit 2$)}&\blambda_{R,\bx,\ba}\cdot\blambda_{R,\bx,\ba'}&=&0&R\in\sigma,\bx\in R^\X,\ba\neq \ba'\in R^\A\\
\mbox{($\spadesuit 3$)} & \displaystyle \sum_{\ba\in R^\A,\,\ba_\bi=\bb}\blambda_{R,\bx,\ba}&=&\blambda_{R_k,\bx_\bi,\bb} & 
R\in\sigma,\bx\in R^\X,\bi\in [\ar(R)]^k,\bb\in A^k\\
\mbox{($\spadesuit 4$)} & \displaystyle
\|\blambda_{R,\bx,\ba}\|^2&=&0&R\in\sigma,\bx\in R^\X,\ba\in R^\A,
  \bx\not\prec\ba.
\end{array}
\right\}
\end{align}

We say that the $k$-th level of the \emph{Sum-of-Squares semidefinite programming hierarchy} accepts when applied to $\X$ and $\A$, and we write $\SoS^k(\X,\A)=\YES$, if the system~\eqref{eqn_Lasserre_def} admits a solution.
We say that $\SoS^k$ \emph{solves} a $\PCSP$ template $(\A,\B)$ if $\X\to\B$ whenever $\SoS^k(\X,\A)=\YES$. Note that the algorithm is always complete: If $\X\to\A$ then $\SoS^k(\X,\A)=\YES$.

\begin{rem}
\label{rem_enhancement}
Note that the definitions of the $\SA^k$, $\AIP^k$, $\BA^k$, and $\SoS^k$ hierarchies given above require that the structures $\X$ and $\A$ to which the hierarchies are applied should be $k$-enhanced. This allows expressing marginalisation requirements between partial assignments of possibly different sizes via the same condition expressing marginalisation between constraints and variables appearing in the constraint (namely, condition $\clubsuit 2$ in~\eqref{eqn_SA_and_AIPk} and condition $\spadesuit 3$ in~\eqref{eqn_Lasserre_def}), applied to the extra complete relation $R_k$. See, for example, Lemma~\ref{lem_our_SA_equals_BD_SA} in Appendix~\ref{subsec_appendix_SA_AIPk} in the case of the Sherali--Adams hiererchy. In contrast, the definition of $\BW^k$ does not explicitly require $k$-enhancement. In fact, it is clear from that description that, for any two structures $\X$ and $\A$, it holds that $\BW^k(\X,\A)=\BW^k(\tilde\X,\tilde\A)$, where $\tilde\X$ (resp. $\tilde\A$) is the $k$-enhanced version of $\X$ (resp. $\A$). Note that this is merely a consequence of the specific formulation we use to define the hierarchies. The reason why we adopt the current formulations is that they are formally closer to the tensorisation construction, as we shall see later in the paper.
\end{rem}

\section{Minion tests}
\label{sec_minion_tests}

Let $(\A,\B)$ be a $\PCSP$ template. As discussed in Section~\ref{sec_intro},
any (potentially infinite) structure $\T$ on the same signature as $\A$ and $\B$
can be viewed as a test for the decision problem $\PCSP(\A,\B)$: Given an
instance $\X$, the test returns $\YES$ if $\X\to\T$, and $\NO$ otherwise. 
As the next definition illustrates, minions provide a systematic method to build tests for $\PCSPs$.

\begin{defn}
\label{defn_minion_test}
Let $\Mminion$ be a minion. The \emph{minion test} $\Test{\Mminion}{}$ is the
  decision problem defined as follows: Given two $\sigma$-structures $\X$ and $\A$, return $\YES$ if $\X\to\freeM(\A)$, and $\NO$ otherwise. 
\end{defn}
If $\X$ is an instance of $\PCSP(\A,\B)$ for some template $(\A,\B)$, we write $\Test{\Mminion}{}(\X,\A)=\YES$ if $\Test{\Mminion}{}$ applied to $\X$ and $\A$ returns $\YES$ (i.e., if $\X\to\freeM(\A)$), and  we write $\Test{\Mminion}{}(\X,\A)=\NO$ otherwise. 
Note that, in the expression ``$\Test{\Mminion}{}(\X,\A)$'', $\X$ is the input structure of the $\PCSP$, while $\A$ is the fixed structure from the $\PCSP$ template.

Leaving $\SDP$ aside for the moment, it turns out that the algebraic structure lying at the core of all relaxations mentioned in Section~\ref{sec_relaxations_hierarchies}, of seemingly different nature, is the same, as all of them are minion tests for specific minions.
\begin{thm}[\protect{\cite[Theorems~7.4, 7.9, and 7.19]{BBKO21},\cite[Lemma~5.4]{bgwz20}}]
\label{thm_cited_minion_tests_cases}
$
\ArcC=\Test{\Hminion}{}$,
$
\BLP=\Test{\Qconv}{}$,
$
\AIP=\Test{\Zaff}{}$,
$
\BA=\Test{\Mblpaip}{}$.
\end{thm}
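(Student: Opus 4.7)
The plan is to establish the four equalities separately, each time by unfolding the defining condition of the minion test $\Test{\Mminion}{}$ --- i.e., the existence of a homomorphism $h\colon\X\to\freeM(\A)$ --- and matching it term by term with the feasibility condition of the corresponding algorithmic relaxation as described in~\cite{BBKO21,bgwz20}. The key observation driving all four cases is that the $L$-ary elements of each minion are precisely the ``local objects'' (distributions, signed distributions, nonempty subsets, or pairs thereof) that the associated algorithm uses to describe a local assignment to a variable ranging over a set of size $L$, and that the minor operation $(\cdot)_{/\pi}$ implements exactly the marginalisation (or projection) operation underlying the relaxation.

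I would begin with $\BLP=\Test{\Qconv}{}$, which is the cleanest case. A homomorphism $h\colon\X\to\freeQ(\A)$ assigns to each variable $x$ of $\X$ a stochastic vector $\bq_x\in\Qconv^{(|A|)}$ --- equivalently, a probability distribution over $A$. For each constraint $(x_1,\dots,x_r)\in R^\X$, membership of $(h(x_1),\dots,h(x_r))$ in $R^{\freeQ(\A)}$ demands some $Q\in \Qconv^{(|R^\A|)}$ (a distribution over the satisfying tuples $R^\A$) with $\bq_{x_i}=Q_{/\pi_i}$ for each $i$, where $\pi_i\colon R^\A\to A$ returns the $i$-th coordinate. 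Using the matrix description of minors from Example~\ref{examples_famous_minions}, $Q_{/\pi_i}$ is precisely the marginal distribution of $Q$ on position $i$, recovering the standard $\BLP$ feasibility system. The case $\AIP=\Test{\Zaff}{}$ is formally identical, with rational stochastic vectors replaced by affine integer vectors.

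For $\ArcC=\Test{\Hminion}{}$, I would identify the $L$-ary elements of $\Hminion$ with nonempty subsets of $[L]$ via $f_Z\leftrightarrow Z$, so that the minor operation becomes the set-theoretic image under the corresponding function. A homomorphism to $\mathbb{F}_{\Hminion}(\A)$ then assigns to each variable a nonempty list $Z_x\subseteq A$, and the free-structure relation requires, for each constraint, a nonempty set $Q\subseteq R^\A$ of satisfying tuples whose $i$-th projections coincide with the lists $Z_{x_i}$; this is precisely the fixed-point condition of arc consistency. Finally, $\BA=\Test{\Mblpaip}{}$ follows by running the $\BLP$ and $\AIP$ correspondences in parallel: an element $\pair{\bu}{\bv}\in \Mblpaip^{(|A|)}$ encodes a $\BLP$ distribution together with an $\AIP$ signed solution, the column-wise minor enforces both conditions simultaneously, and the support constraint (zero entries of $\bu$ force zero entries of $\bv$) translates exactly into the linking constraint of the $\BA$ algorithm of~\cite{bgwz20}.

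The main care needed is in the dictionary between the algorithmic presentations of the four relaxations and their free-structure reformulations --- in particular, verifying that the minor maps $(\cdot)_{/\pi_i}$ associated to the projections $\pi_i\colon R^\A\to A$ really implement marginalisation/projection via the matrix $P$ from Example~\ref{examples_famous_minions}. Once this dictionary is set up, each equivalence is a mechanical rewriting. The most delicate piece is $\BA$, where the hybrid nature of the algorithm and the support condition must be shown to match the semi-direct-product structure of $\Mblpaip$, but even here the argument is essentially a coordinate-wise application of the previous two cases.
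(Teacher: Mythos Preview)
The paper does not prove this theorem: it is stated as a cited result from~\cite{BBKO21,bgwz20} with no accompanying proof, and is used as background for the paper's own contributions. So there is no ``paper's own proof'' to compare against.

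That said, your sketch is a faithful outline of how these equalities are established in the cited works. The dictionary you set up --- elements of $\Mminion^{(|A|)}$ as local assignments, the minor $(\cdot)_{/\pi_i}$ as marginalisation/projection via the matrix $P$ --- is exactly the mechanism, and the four cases do reduce to mechanical rewriting once this is in place. The only point where a bit more care is warranted is $\ArcC$: the minor of $\Hminion$ under the identification $f_Z\leftrightarrow Z$ sends $Z$ to $\pi(Z)$ (as the paper verifies in the proof of Proposition~\ref{prop_examples_linear_minions}), so the condition $Z_{x_i}=\pi_i(Q)$ says the list at $x_i$ equals the $i$-th projection of some nonempty $Q\subseteq R^\A$; you should check this matches the specific formulation of arc consistency you have in mind (closure under projection both ways), which is the standard $(1,\armax)$-minimality fixed point.
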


One reason why minion tests are an interesting type of tests is that they are always complete.
\begin{prop}
\label{prop_minion_tests_are_complete}
$\Test{\Mminion}{}$ is complete for any minion $\Mminion$; i.e., for any $\X$ and $\A$ with $\X\to\A$, we have $\X\to\freeM(\A)$.
\end{prop}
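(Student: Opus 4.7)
The plan is to produce, from a homomorphism $h:\X\to\A$, an explicit homomorphism $g:\X\to\freeM(\A)$ whose values are built by ``pulling back'' a fixed element of $\Mminion$ through $h$. Since $\Mminion^{(1)}$ is nonempty, I will pick some $N\in\Mminion^{(1)}$ once and for all and use the minor operation to land inside $\Mminion^{(|A|)}$, which is the domain of $\freeM(\A)$.

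Concretely, for each $a\in A=[|A|]$ let $\iota_a:[1]\to[|A|]$ be the function with $\iota_a(1)=a$, and set $N_a:=N_{/\iota_a}\in\Mminion^{(|A|)}$. Then I define $g(x):=N_{h(x)}$ for every $x\in X$. The remaining task is to verify that $g$ preserves every relation $R\in\sigma$, using only the minor-composition axiom of a minion and the definition of $R^{\freeM(\A)}$ recalled in Section~\ref{sec_background}.

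For this verification, fix $R\in\sigma$ of arity $r$ and a tuple $(x_1,\dots,x_r)\in R^{\X}$. Because $h$ is a homomorphism, $\ba:=(h(x_1),\dots,h(x_r))\in R^{\A}$, so $\ba$ is an index in $R^\A$. Let $\alpha:[1]\to R^\A$ be the function with $\alpha(1)=\ba$ and set $Q:=N_{/\alpha}\in\Mminion^{(|R^{\A}|)}$. Then for each $i\in[r]$, the minor-composition axiom yields
\[
Q_{/\pi_i}=(N_{/\alpha})_{/\pi_i}=N_{/(\pi_i\circ\alpha)}=N_{/\iota_{h(x_i)}}=N_{h(x_i)}=g(x_i),
\]
since $\pi_i(\alpha(1))=\pi_i(\ba)=h(x_i)$. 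Hence $(g(x_1),\dots,g(x_r))\in R^{\freeM(\A)}$, as required, and $g$ is a homomorphism $\X\to\freeM(\A)$.

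I do not expect any real obstacle: the argument is essentially a direct unfolding of the definition of the free structure together with the functoriality of the minor operation. The only subtlety worth flagging is that the construction requires a nonempty unary part of the minion, which is guaranteed by the convention that every $\Mminion^{(L)}$ is nonempty.
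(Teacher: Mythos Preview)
Your proof is correct and follows essentially the same approach as the paper: the paper first establishes in Lemma~\ref{lem_A_maps_free_structure_of_A} that $\A\to\freeM(\A)$ via the map $a\mapsto N_{/\iota_a}$ (in your notation) and then composes with $h$, whereas you build the composite $x\mapsto N_{/\iota_{h(x)}}$ directly. The verification that relations are preserved is identical in both arguments, relying on the minor-composition axiom to identify $Q_{/\pi_i}$ with $N_{/\iota_{a_i}}$.
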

This immediately follows from the next lemma, implicitly proved in~\cite{BBKO21} for the case of function minions (see Remark~\ref{rem_linear_to_function_minion}). For completeness, we include below the simple proof (which closely follows the proof in~\cite{BBKO21}, see the comment after Definition 4.1 therein).
\begin{lem}[\cite{BBKO21}]
\label{lem_A_maps_free_structure_of_A}
Let $\mathscr{M}$ be a minion and let $\A$ be a $\sigma$-structure. Then, $\A\to\mathbb{F}_{\mathscr{M}}(\A)$.
\end{lem}
\begin{proof}
Take a unary element $M\in\mathscr{M}^{(1)}$, and consider the map 
\begin{align*}
f:A&\to\mathscr{M}^{(n)}\\
a&\mapsto M_{/\rho_a}
\end{align*}
where $\rho_a:[1]\to [n]=A$ is defined by $\rho_a(1)=a$. Take $R\in\sigma$ of arity $r$, and consider a tuple $\ba=(a_1,\dots,a_r)\in R^\A$. Let $m=|R^\A|$, and consider the function $\pi:[1]\to [m]$ defined by $\pi(1)=\ba$. Let ${Q}=M_{/\pi}\in\mathscr{M}^{(m)}$. For each $i\in [r]$, recall the function $\pi_i:[m]\to [n]$ defined by $\pi_i(\bb)=b_i$, where $\bb=(b_1,\dots,b_r)\in R^\A$. Observe that $\rho_{a_i}=\pi_i\circ\pi$ for each $i\in[r]$. We obtain
\begin{align*}
f(\ba)&=(f(a_1),\dots,f(a_r))
=
(M_{/\rho_{a_1}},\dots,M_{/\rho_{a_r}})
=
(M_{/\pi_1\circ\pi},\dots,M_{/\pi_r\circ\pi})\\
&=  
  ((M_{/\pi})_{/\pi_1},\dots,(M_{/\pi})_{/\pi_r})
=
({Q}_{/\pi_1},\dots,{Q}_{/\pi_r})
\in 
R^{\mathbb{F}_{\mathscr{M}}(\A)},
\end{align*}
thus showing that $f$ is a homomorphism from $\A$ to $\mathbb{F}_{\mathscr{M}}(\A)$.
\end{proof}

A second feature of minion tests is that their soundness can be characterised algebraically, as stated in the next proposition.
\begin{prop}
\label{prop_solvability_minion_tests}
Let $\Mminion$ be a minion and let $(\A,\B)$ be a $\PCSP$ template. Then, $\Test{\Mminion}{}$ solves $\PCSP(\A,\B)$ if and only if $\Mminion\to\Pol(\A,\B)$.
\end{prop}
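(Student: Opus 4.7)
The plan is to handle the two directions separately. Since $\Test{\Mminion}{}$ is automatically complete by Proposition~\ref{prop_minion_tests_are_complete}, the equivalence reduces to showing that soundness of $\Test{\Mminion}{}$ on $\PCSP(\A,\B)$ is equivalent to the existence of a minion homomorphism $\Mminion\to\Pol(\A,\B)$.

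For the ``if'' direction, I would fix a minion homomorphism $\xi:\Mminion\to\Pol(\A,\B)$ together with a homomorphism $f:\X\to\freeM(\A)$, and identify $A$ with $[|A|]$. Each $\xi(f(x))$ is then an $|A|$-ary polymorphism of $(\A,\B)$, so evaluating it at the ``identity tuple'' $\iota=(1,2,\dots,|A|)\in A^{|A|}$ returns an element of $B$; set $h(x):=\xi(f(x))(\iota)$. The plan is then to check that $h$ is a homomorphism $\X\to\B$: given $(x_1,\dots,x_r)\in R^\X$, the definition of the free structure supplies $Q\in\Mminion^{(|R^\A|)}$ with $f(x_i)=Q_{/\pi_i}$, whence $\xi(f(x_i))=\xi(Q)_{/\pi_i}$ by minor preservation. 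Evaluating at $\iota$ expresses $(h(x_1),\dots,h(x_r))$ as the image, under the polymorphism $\xi(Q)$, of the $r$ columns of the $|R^\A|\times r$ matrix whose rows enumerate $R^\A$; by the definition of the power structure these columns form a tuple of $R^{\A^{|R^\A|}}$, so applying the polymorphism lands the tuple in $R^\B$.

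For the ``only if'' direction, I would construct $\xi$ by extracting it from a single global homomorphism into $\B$. Every finite substructure $\Y\subseteq\freeM(\A)$ trivially satisfies $\Y\to\freeM(\A)$ via the inclusion map and hence, by soundness of the test, $\Y\to\B$. Since $\B$ is finite, a standard compactness argument of the Malcev type lifts this family of partial homomorphisms into a single homomorphism $h:\freeM(\A)\to\B$. I would then define $\xi(M):A^L\to B$ for $M\in\Mminion^{(L)}$ by $\xi(M)(\ba):=h(M_{/\ba})$, where the tuple $\ba\in A^L$ is read as a function $[L]\to A\cong[|A|]$, so that $M_{/\ba}$ lies in the domain $\Mminion^{(|A|)}$ of $h$. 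Checking that $\xi(M)$ is a polymorphism mirrors the verification in the ``if'' direction, while minor preservation follows immediately from the associativity identity $(M_{/\pi})_{/\ba}=M_{/\ba\circ\pi}$ built into the minion axioms.

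I expect the main obstacle to be the compactness step in the ``only if'' direction: the decision test $\Test{\Mminion}{}$ is only specified on finite instances, whereas $\freeM(\A)$ may have infinite domain, so some non-constructive argument is needed to combine the finite-instance homomorphisms into $\B$ into a single global one. The Malcev-style compactness argument cited in the statement provides exactly this step; everything else reduces to routine manipulation of the minor, free-structure, and power-structure definitions.
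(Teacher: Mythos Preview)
Your proposal is correct and follows essentially the same approach as the paper's proof: both directions hinge on the equivalence between $\Mminion\to\Pol(\A,\B)$ and $\freeM(\A)\to\B$, combined with the compactness argument you identify for the ``only if'' part. The only difference is that the paper cites this equivalence as a black box (Lemma~4.4 of \cite{BBKO21}), whereas you unpack its two halves explicitly via the assignments $M\mapsto\xi(M)(\iota)$ and $\xi(M)(\ba)=h(M_{/\ba})$.
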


Our proof of Proposition~\ref{prop_solvability_minion_tests} uses a standard compactness argument from~\cite{malcev1937untersuchungen} (see also~\cite{Hodges1993model}), and it follows the lines of~\cite[Theorem~7.9 and Remark~7.13]{BBKO21}, where the same result is derived from K\H{o}nig's Lemma in the restricted case of \emph{locally countable minions} -- i.e., minions $\Mminion$ having the property that the set $\Mminion^{(L)}$ is countable for any $L$. Since the minion $\Sminion$ described in Definition~\ref{defn_minion_SDP} is not locally countable, we shall need this stronger version of the result when proving that $\Sminion$ provides an algebraic characterisation of the power of $\SDP$ (cf.~Theorem~\ref{thm_characterisation_power_SDP}).
  
We say that a potentially infinite $\sigma$-structure $\B$ is \emph{compact} if, for any potentially infinite $\sigma$-structure $\A$, $\A\to\B$ if and only if $\A'\to\B$ for every finite substructure $\A'$ of $\A$.
The following result is a direct consequence of the uncountable version of the \emph{compactness theorem of logic} (\cite[Theorem~6.1.1]{Hodges1993model}).
\begin{thm}[\cite{malcev1937untersuchungen,Hodges1993model}] 
\label{thm_finite_structures_are_compact}
Every finite  
$\sigma$-structure is compact.
\end{thm}

\begin{proof}[Proof of Proposition~\ref{prop_solvability_minion_tests}]
We first observe that the condition $\Mminion\to\Pol(\A,\B)$ is equivalent to the condition $\freeM(\A)\to\B$ by~\cite[Lemma~4.4]{BBKO21} 
(see also~\cite{cz23sicomp:clap} for the proof for  abstract minions).

Suppose that $\freeM(\A)\to \B$. Given an instance $\X$, if $\Test{\Mminion}{}(\X,\A)=\YES$ then $\X\to\freeM(\A)$, and composing the two homomorphisms yields $\X\to\B$. Hence, $\Test{\Mminion}{}$ is sound on the template $(\A,\B)$. Since, as noted above, $\Test{\Mminion}{}$ is always complete, we deduce that $\Test{\Mminion}{}$ solves $\PCSP(\A,\B)$.

Conversely, suppose that $\Test{\Mminion}{}$ solves $\PCSP(\A,\B)$. Let $\textbf{F}$ be a finite substructure of $\freeM(\A)$, and notice that the inclusion map yields a homomorphism from $\textbf{F}$ to $\freeM(\A)$. Hence, $\Test{\Mminion}{}(\textbf{F},\A)=\YES$, so $\textbf{F}\to\B$. Since $\B$ is compact by Theorem~\ref{thm_finite_structures_are_compact}, we deduce that $\freeM(\A)\to\B$, as required.
\end{proof}

\section{A minion for SDP}
\label{sec_minion_for_SDP}

The goal of this section is to design a minion $\Sminion$ capturing the power of $\SDP$, thus showing that, similarly to $\ArcC$, $\BLP$, $\AIP$, and $\BLP+\AIP$, also $\SDP$ is a minion test. We remark that an equivalent characterisation for the power of $\SDP$ was also obtained independently by Brakensiek, Guruswami, and Sandeep in~\cite{bgs_robust23stoc}.

\begin{defn}
\label{defn_minion_SDP}
For $L\in\mathbb{N}$, let $\Sminion^{(L)}$ be the set of real $L\times\ale$ matrices ${M}$ such that
\begin{align}
\mbox{(C1)}
\;
\csupp(M) \mbox{ is finite}&&
\mbox{(C2)}
\;
{M}{M}^T \mbox{ is a diagonal matrix}&&
\mbox{(C3)}
\;
\tr({M}{M}^T)=1.
\end{align}
Given a function $\pi:[L]\rightarrow [L']$ and a matrix ${M}\in \Sminion^{(L)}$,
  we let ${M}_{/\pi}= P {M}$, where $P$ is the $L'\times L$ matrix whose
  $(i,j)$-th entry is $1$ if $\pi(j)=i$, and $0$ otherwise. We set 
$\Sminion=\bigsqcup_{L\in\mathbb{N}}\Sminion^{(L)}$.
\end{defn}
First of all, we show that $\Sminion$ is closed with respect to the minor maps described above and, thus, it is indeed a minion. 

\begin{prop}
\label{prop_Sminion_is_linear_minion}
$\Sminion$ is a 
minion.
\end{prop}
\begin{proof}
For $\pi:[L]\rightarrow [L']$ and ${M}\in \Sminion^{(L)}$, we have that $M_{/\pi}=PM\in\cT^{L',\ale}(\R)$ and $\csupp(PM)$ is finite (where $P$ is the $L'\times
L$ matrix associated with $\pi$, as per Definition~\ref{defn_minion_SDP}). One easily checks that $(i)$
$P^T\bone_{L'}=\bone_{L}$, and
$(ii)$
$P P^T$ is a diagonal matrix. Using that both $MM^T$ and $PP^T$ are diagonal, we find that $M_{/\pi}(M_{/\pi})^T=PMM^TP^T$ is diagonal, too. Moreover, since the trace of a diagonal matrix equals the sum of its entries, we obtain
\begin{align*}
\tr({M}_{/\pi}({M}_{/\pi})^T)=\bone_{L'}^T{M}_{/\pi}({M}_{/\pi})^T\bone_{L'}=
\bone_{L'}^TP {M}{M}^TP^T\bone_{L'}
=\bone_L^T{M}{M}^T\bone_L=\tr({M}{M}^T)=1.
\end{align*}
It follows that ${M}_{/\pi}\in \Sminion^{(L')}$. Furthermore, one easily checks that ${M}_{/\operatorname{id}}={M}$ and, given $\tilde\pi:[L']\rightarrow[L'']$, 
$
({M}_{/\pi})_{/\tilde\pi}=
{M}_{/\tilde\pi\circ\pi}
$, which concludes the proof that $\Sminion$ is a minion. 
\end{proof}

In the remaining part of this section, we prove that $\Sminion$ captures the power of the $\SDP$ relaxation, as stated below.

\begin{prop}
\label{prop_acceptance_SDP}
$\SDP=\Test{\Sminion}{}$. In other words, given two $\sigma$-structures $\X$ and $\A$, $\SDP(\X,\A)=\YES$ if and only if $\X\to\freeS(\A)$.
\end{prop}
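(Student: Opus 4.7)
The plan is to show that a homomorphism $\X\to\freeS(\A)$ and a feasible solution of the basic $\SDP$ relaxation on input $(\X,\A)$ are, modulo notation, the same object. First I would unfold the definition of $\freeS(\A)$: a homomorphism $\X\to\freeS(\A)$ is an assignment $x\mapsto M_x\in\Sminion^{(|A|)}$ such that, for every constraint $\bx=(x_1,\ldots,x_r)\in R^\X$, there exists $Q\in\Sminion^{(|R^\A|)}$ with $M_{x_i}=Q_{/\pi_i}$ for each $i\in[r]$. Writing the $a$-th row of $M_x$ as $\bv_{x,a}\in\R^\ale$ and the $\ba$-th row of $Q$ as $\bw_{\bx,\ba}\in\R^\ale$, conditions (C1)-(C3) translate exactly to: every such vector has finite support, the family $(\bv_{x,a})_{a\in A}$ is pairwise orthogonal with $\sum_a\|\bv_{x,a}\|^2=1$, and likewise for $(\bw_{\bx,\ba})_{\ba\in R^\A}$. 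The minor identity $M_{x_i}=Q_{/\pi_i}$ becomes the marginalization $\bv_{x_i,a}=\sum_{\ba\in R^\A:\,a_i=a}\bw_{\bx,\ba}$.

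These are precisely the axioms of the basic $\SDP$ relaxation recalled in Section~\ref{sec_relaxations_hierarchies}: $\SDP$ replaces the stochastic vectors of $\BLP$ by tuples of pairwise orthogonal real vectors of unit total squared norm, subject to the same marginalization linking variable vectors to constraint vectors. The proposition then follows by a direct clause-by-clause matching. In one direction, the matrices witnessing a homomorphism $\X\to\freeS(\A)$ yield an $\SDP$ solution in the finite-dimensional subspace spanned by their finitely many nonzero columns. For the other, a feasible $\SDP$ solution living in some finite-dimensional $\R^d$ is embedded isometrically into $\R^\ale$ as the first $d$ coordinates; stacking the images of $\bv_{x,a}$ and $\bw_{\bx,\ba}$ as rows produces matrices $M_x$ and $Q$ satisfying (C1) by finite-dimensionality, (C2)-(C3) by the $\SDP$ orthogonality and normalisation axioms, and $M_{x_i}=Q_{/\pi_i}$ by the marginalization axiom.

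The proof is therefore essentially a translation, so the main obstacle is not conceptual but notational: one must carefully align the specific formulation of $\SDP$ from Section~\ref{sec_relaxations_hierarchies} with conditions (C1)-(C3) and with the marginalization interpretation of the minor operation in $\Sminion$. The only mildly nontrivial point is that the minion prescribes matrices with $\ale$ columns whereas $\SDP$ vectors naturally live in finite dimension; this mismatch is resolved by the padding embedding $\R^d\hookrightarrow\R^\ale$, and condition (C1) conversely ensures that reading vectors off a homomorphism never leaves a finite-dimensional subspace.
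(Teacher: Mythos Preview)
Your proposal is correct and follows essentially the same approach as the paper. The only refinement the paper adds is in the homomorphism-to-$\SDP$ direction: rather than using the span of the nonzero columns (whose dimension need not be bounded by $\omega$), the paper observes that the \emph{row} span of all the matrices $M_x,Q_{R,\bx}$ has dimension at most $\omega$ and uses Gram--Schmidt to build an explicit projection $Z\in\R^{\ale\times\omega}$ carrying those rows isometrically into $\R^\omega$.
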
 
Combining Propositions~\ref{prop_acceptance_SDP} and~\ref{prop_solvability_minion_tests}, we immediately obtain the following algebraic characterisation of the power of $\SDP$.
\begin{graytbox}	
\begin{thm}	
\label{thm_characterisation_power_SDP}	
Let $(\A,\B)$ be a $\PCSP$ template. Then, $\SDP$ solves $\PCSP(\A,\B)$ if and only if $\Sminion\to\Pol(\A,\B)$.	
\end{thm}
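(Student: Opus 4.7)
The plan is to deduce the theorem as an immediate corollary by chaining together two earlier stated facts: Proposition~\ref{prop_acceptance_SDP}, which identifies the $\SDP$ relaxation with the minion test $\Test{\Sminion}{}$, and Proposition~\ref{prop_solvability_minion_tests}, the general characterisation of solvability of minion tests via minion homomorphisms. No additional analysis specific to semidefinite programming should be needed at this point, since the heavy lifting is already absorbed into Proposition~\ref{prop_acceptance_SDP}.

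Concretely, I would proceed as follows. First, unpack the statement ``$\SDP$ solves $\PCSP(\A,\B)$'' into its meaning: that on every $\sigma$-structure $\X$, the algorithm $\SDP(\X,\A)$ returns $\YES$ whenever $\X\to\A$, and returns $\NO$ whenever $\X\not\to\B$. By Proposition~\ref{prop_acceptance_SDP}, $\SDP(\X,\A)=\YES$ if and only if $\X\to\freeS(\A)$, i.e., if and only if $\Test{\Sminion}{}(\X,\A)=\YES$. Hence ``$\SDP$ solves $\PCSP(\A,\B)$'' is equivalent to ``$\Test{\Sminion}{}$ solves $\PCSP(\A,\B)$''.

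Second, apply Proposition~\ref{prop_solvability_minion_tests} with the minion $\Mminion=\Sminion$ to conclude that $\Test{\Sminion}{}$ solves $\PCSP(\A,\B)$ if and only if $\Sminion\to\Pol(\A,\B)$. Combining the two equivalences yields the theorem.

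The only genuine content of the argument therefore sits inside Proposition~\ref{prop_acceptance_SDP}, which is proved in Section~\ref{sec_minion_for_SDP}: one must verify that $\Sminion$ as defined in Definition~\ref{defn_minion_SDP} is closed under the minor operation $M\mapsto PM$ (preservation of conditions (C1)--(C3) under left-multiplication by a $0/1$ matrix with exactly one $1$ per column), and then match the feasibility of the standard $\SDP$ relaxation on $(\X,\A)$ with the existence of a homomorphism $\X\to\freeS(\A)$ in the signature of relational structures. At the present stage of the paper, this is taken as already established, so the proof of Theorem~\ref{thm_characterisation_power_SDP} itself is a one-step deduction.
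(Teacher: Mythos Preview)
Your proposal is correct and matches the paper's approach exactly: the paper also derives Theorem~\ref{thm_characterisation_power_SDP} as an immediate consequence of Propositions~\ref{prop_acceptance_SDP} and~\ref{prop_solvability_minion_tests}, with no additional argument.
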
	
\end{graytbox}

For a $\sigma$-structure $\A$, a symbol $R\in\sigma$ of arity $r$, and a number $i\in [r]$, we consider the $|A|\times |R^\A|$ matrix $P_i$  whose $(a,\ba)$-th entry is $1$ if $a_i=a$, and $0$ otherwise. We shall use the following simple description of the entries of $P_i$.
\begin{lem}[\protect{\cite[Section~4.1]{cz23sicomp:clap}}]
\label{lem_basic_P_i}
Let $\A$ be a $\sigma$-structure, let $R\in\sigma$ of arity $r$, and let $i\in [r],a\in A$. Then,
\begin{align*}
\be_a^T P_i=\sum_{\substack{\ba\in R^\A\\a_i=a}}\be_\ba^T.
\end{align*}
\end{lem}

We shall also need the following observation, whose simple proof is deferred to Appendix~\ref{subsec_appendix_SDP}.

\begin{prop}
\label{prop_trivial_stuff_SDP}
Let $\X,\A$ be two $\sigma$-structures.
The system~\eqref{eqn_SDP_def} implies the following facts:
\begin{align*}
\begin{array}{lllllll}
(i)&\displaystyle\|\sum_{a\in A}\blambda_{x,a}\|^2=1 & x\in X;\\
(ii)&\displaystyle\sum_{\ba\in R^\A}\|\blambda_{R,\bx,\ba}\|^2=\|\sum_{\ba\in R^\A}\blambda_{R,\bx,\ba}\|^2=1 & R\in\sigma,\bx\in R^\X;\\
(iii)&\displaystyle\sum_{\substack{\ba\in R^\A\\a_i=a,\;a_j=a'}}\|\blambda_{R,\bx,\ba}\|^2=\blambda_{x_i,a}\cdot\blambda_{x_j,a'} & R\in\sigma,\bx\in R^\X,a,a'\in A,i,j\in[\ar(R)].
\intertext{If, in addition, $\X$ and $\A$ are $2$-enhanced,}
(iv)&\displaystyle\sum_{a\in A}\blambda_{x,a}=\sum_{a\in A}\blambda_{x',a} & x,x'\in X.
\end{array}
\end{align*}
\end{prop}

In order to prove that $\SDP=\Test{\Sminion}{}$, we essentially need to encode the vectors $\blambda$ witnessing an $\SDP$ solution as rows of matrices belonging to $\Sminion$. Note that the vectors $\blambda$ live in a vector space having a finite dimension -- namely, the number $\gamma=|X|\cdot|A|+\sum_{R\in\sigma}|R^\X|\cdot|R^\A|$ (cf. the description of $\SDP$ in Section~\ref{sec_relaxations_hierarchies}). On the other hand, the matrices in $\Sminion$ have rows of infinite size, living in $\R^\ale$. This issue is easily solved by considering a $\gamma$-dimensional subspace of $\R^\ale$ and working in an orthonormal basis of such subspace, through a standard orthonormalisation argument. 

\begin{proof}[Proof of Proposition~\ref{prop_acceptance_SDP}]
Suppose that $\SDP(\X,\A)=\YES$, and let the family of vectors $\blambda_{x,a},\blambda_{R,\bx,\ba}\in\R^\gamma$ witness it, for $x\in X,a\in A,R\in\sigma,\bx\in R^\X,\ba\in R^\A$, where $\gamma=
|X|\cdot|A|+\sum_{R\in\sigma}|R^\X|\cdot|R^\A|$. Consider the map $\xi:X\to\Sminion^{(|A|)}$ assigning to each $x\in X$ the $|A|\times\aleph_0$ matrix whose rows are the vectors $\blambda_{x,a}$ for each $a\in A$, padded with infinitely many zeroes; i.e.,
\begin{align*}
\be_a^T\xi(x)\be_j=
\left\{
\begin{array}{lllll}
\be_j^T\blambda_{x,a}&\mbox{ if }j\leq \gamma\\
0&\mbox{ otherwise}
\end{array}
\right.
\hspace{1cm}
x\in X,a\in A,j\in\N.
\end{align*}
We claim that $\xi$ is well defined. First,
  $\xi(x)\be_j=\bzero$ for each $j>\gamma$, so condition (C1) from Definition~\ref{defn_minion_SDP} is satisfied. Given $a,a'\in A$, it holds that $\be_a^T\xi(x)\xi(x)^T\be_{a'}
=
\blambda_{x,a}\cdot\blambda_{x,a'}$.
  If $a\neq a'$, this quantity is zero by $\vardiamond2$, so (C2) is satisfied. Finally,
\begin{align*}
\tr(\xi(x)\xi(x)^T)
&=
\sum_{a\in A}\be_a^T\xi(x)\xi(x)^T\be_a
=
\sum_{a\in A}\|\blambda_{x,a}\|^2=1
\end{align*}
by $\vardiamond1$,
  so (C3) is also satisfied and the claim is true. We now show that $\xi$ yields a homomorphism from $\X$ to $\freeS(\A)$. Take $R\in\sigma$ of arity $r$ and $\bx=(x_1,\dots,x_r)\in R^\X$. Consider the $|R^\A|\times\ale$ matrix $Q$ whose rows are the vectors 
  $\blambda_{R,\bx,\ba}$ for each $\ba\in R^\A$ padded with infinitely many zeroes.
  Using the same arguments as above, we have that $Q$ satisfies (C1) and
  $\be_\ba^TQQ^T\be_{\ba'}=\blambda_{R,\bx\,\ba}\cdot\blambda_{R,\bx,\ba'}$, so
  (C2) follows from $\vardiamond 3$ and (C3) from point $(ii)$ of Proposition~\ref{prop_trivial_stuff_SDP}. Therefore, $Q\in\Sminion^{(|R^\A|)}$. We now claim that $\xi(x_i)=Q_{/\pi_i}$ for each $i\in [r]$. Indeed, for $a\in A$ and $j\in[\gamma]$, we have 
\begin{align*}
\be_a^T\xi(x_i)\be_j
&=
\be_j^T\blambda_{x_i,a}
=
\sum_{\substack{\ba\in R^\A\\ a_i=a}}\be_j^T\blambda_{R,\bx,\ba}
=
\sum_{\substack{\ba\in R^\A\\ a_i=a}}\be_\ba^T Q\be_j
=
\be_a^T P_iQ\be_j
=
\be_a^TQ_{/\pi_i}\be_j,
\end{align*}
where the second and fourth equalities follow from $\vardiamond 4$ and Lemma~\ref{lem_basic_P_i}, respectively. Also, clearly,
$\be_a^T\xi(x_i)\be_j=\be_a^TQ_{/\pi_i}\be_j=0$
if $j\in\N\setminus [\gamma]$. As a consequence, the claim holds. It follows that $\xi(\bx)\in R^{\freeS(\A)}$, so that $\xi$ is a homomorphism.

Conversely, let $\xi:\X\to\freeS(\A)$ be a homomorphism.
For $R\in\sigma$ of arity $r$ and $\bx=(x_1,\dots,x_r)\in R^\X$, we can fix a
matrix $Q_{R,\bx}\in\Sminion^{(|R^\A|)}$ satisfying
$\xi(x_i)={Q_{R,\bx}}_{/\pi_i}$ for each $i\in [r]$. Consider the sets
$S_1=\{\xi(x)^T\be_a:x\in X,a\in A\}$ and
$S_2=\{Q_{R,\bx}^T\be_\ba:R\in\sigma,\bx\in R^\X,\ba\in R^\A\}$, and the vector
space $\mathcal{U}=\Span(S_1\cup S_2)\subseteq \R^\ale$. Observe that $\dim
\mathcal{U}\leq |S_1\cup S_2|\leq |S_1|+|S_2|\leq |X|\cdot|A|+\sum_{R\in\sigma}
|R^\X|\cdot|R^\A|=\gamma$. Consider a vector space $\mathcal{V}$ of dimension
$\gamma$ such that $\mathcal{U}\subseteq\mathcal{V}\subseteq \R^\ale$. Using the
Gram--Schmidt process,\footnote{We note that the Gram--Schmidt process also applies to vector spaces
of countably infinite dimension.}  we find a projection matrix ${Z}\in\cT^{\ale,\gamma}(\R)$ such that ${Z}^T{Z}=I_\gamma$ and ${Z}{Z}^T\bv=\bv$ for any $\bv\in\mathcal{V}$. Consider the family of vectors
\begin{align}
\label{eqn_1618_1104}
\begin{array}{lllll}
\blambda_{x,a}={Z}^T\xi(x)^T\be_a &x\in X,a\in A,\\
\blambda_{R,\bx,\ba}={Z}^TQ_{R,\bx}^T\be_\ba & R\in\sigma,\bx\in R^\X,\ba\in R^\A.
\end{array}
\end{align}
We claim that~\eqref{eqn_1618_1104} witnesses that $\SDP(\X,\A)=\YES$. To check $\vardiamond 1$, observe that
\begin{align*}
\sum_{a\in A}\|\blambda_{x,a}\|^2
&=
\sum_{a\in A}\be_a^T\xi(x){Z}{Z}^T\xi(x)^T\be_a
=
\sum_{a\in A}\be_a^T\xi(x)\xi(x)^T\be_a
=
\tr(\xi(x)\xi(x)^T)
=
1,
\end{align*}
where the second equality follows from the fact that $\xi(x)^T\be_{a}\in
S_1\subseteq \mathcal{U}\subseteq\mathcal{V}$ and the fourth from (C3). 
In a similar way, using (C2), we obtain
\begin{align*}
\blambda_{x,a}\cdot\blambda_{x,a'}
&=
\be_a^T\xi(x){Z}{Z}^T\xi(x)^T\be_{a'}
=
\be_a^T\xi(x)\xi(x)^T\be_{a'}
=
0,\\
\blambda_{R,\bx,\ba}\cdot\blambda_{R,\bx,\ba'}
&=
\be_\ba^TQ_{R,\bx}{Z}{Z}^TQ_{R,\bx}^T\be_{\ba'}
=
\be_\ba^TQ_{R,\bx}Q_{R,\bx}^T\be_{\ba'}
=
0
\end{align*}
if $a\neq a'\in A$ and $\ba\neq\ba'\in R^\A$. This shows that $\vardiamond 2$ and $\vardiamond 3$ hold. Finally, to prove $\vardiamond 4$, we observe that 
\begin{align*}
\sum_{\substack{\ba\in R^\A\\a_i=a}}\blambda_{R,\bx,\ba}
&=
\sum_{\substack{\ba\in R^\A\\a_i=a}}{Z}^TQ_{R,\bx}^T\be_\ba
=
\big(\sum_{\substack{\ba\in R^\A\\a_i=a}}\be_\ba^TQ_{R,\bx}{Z}\big)^T
=
\big(\be_a^TP_iQ_{R,\bx}{Z}\big)^T
=
\big(\be_a^T\xi(x_i){Z}\big)^T\\
&=
{Z}^T\xi(x_i)^T\be_a
=
\blambda_{x_i,a},
\end{align*}
where the third equality follows from Lemma~\ref{lem_basic_P_i}. Therefore, the claim is true and the proof is complete.
\end{proof}

\section{Hierarchies of minion tests}
\label{sec_hierarchies_of_minion_tests}

As we have seen in Section~\ref{sec_minion_tests}, minions give a systematic method for designing tests for $\parPCSPs$. We now describe a construction, which we call \emph{tensorisation}, that provides a technique to systematically refine minion tests and turning them into algorithmic \textit{hierarchies}.

Let $S$ be a set and let $k\in\N$. Recall that, for a tuple $\bn=(n_1,\dots,n_k)\in\N^k$, $\cT^{\bn}(S)$ denotes the set of all functions from $[n_1]\times\dots\times [n_k]$ to $S$, which we visualise as hypermatrices or tensors. 
Furthermore, given a signature $\sigma$, we denote by $\sigma^{\tensor{k}}$ the signature consisting of the same symbols as $\sigma$ such that each symbol $R$ of arity $r$ in $\sigma$ has arity $r^k$ in $\sigma^{\tensor{k}}$. 

\begin{graytbox}
\begin{defn}
\label{defn_tensorisation}
The \emph{$k$-th tensor power} of a $\sigma$-structure $\A$ is the
  $\sigma^{\tensor{k}}$-structure $\A^{\tensor{k}}$ having domain $A^k$ and
  relations defined as follows: For each symbol $R\in\sigma$ of arity $r$ in $\sigma$, we set $R^{\A^\tensor{k}}=\left\{\ba^{\tensor{k}}:\ba\in R^\A\right\}$,
where, for $\ba\in R^\A$, $\ba^\tensor{k}$ is the tensor in $\cT^{r\cdot\bone_k}(A^k)$ defined as follows: For any $\bi\in [r]^k$, the $\bi$-th  element of $\ba^\tensor{k}$ is $\ba_\bi$.
\end{defn}
\end{graytbox}
In other words, it holds that $E_\bi\ast\ba^\tensor{k}=\ba_\bi$ for any $\bi\in [r]^k$. Note that $\ba^\tensor{k}$ can be visualised as the formal Segre outer product of $k$ copies of $\ba$ (cf.~\cite{lim2013tensors}).

It is easy to check that $\A^\tensor{1}=\A$. Moreover, the function $R^\A\to R^{\A^\tensor{k}}$ given by $\ba\mapsto\ba^\tensor{k}$ is a bijection, so the cardinality of $R^{\A^\tensor{k}}$ equals the cardinality of $R^\A$.
\begin{example}
\label{example_tensorisation_1406}
Let us describe the third tensor power of the $3$-clique -- i.e., the structure $\K_3^\tensor{3}$. The domain of $\K_3^\tensor{3}$ is $[3]^3$, i.e., the set of tuples of elements in $[3]$ having length $3$. Let $R$ be the symbol corresponding to the binary edge relation in $\K_3$, so that $R^{\K_3}=\{(1,2),(2,1),(2,3),(3,2),(3,1),(1,3)\}$. Then, $R^{\K_3^\tensor{3}}$ has arity $2^3=8$ and it is a subset of $\mathcal{T}^{(2,2,2)}([3]^3)$. 
Specifically, $R^{\K_3^\tensor{3}}=\{(1,2)^\tensor{3},(2,1)^\tensor{3},(2,3)^\tensor{3},(3,2)^\tensor{3},(3,1)^\tensor{3},(1,3)^\tensor{3}\}$ where, e.g., 
\[
(2,3)^\tensor{3}=\left[\begin{array}{@{}cc|cc@{}}
(2,2,2)&(2,2,3)&(3,2,2)&(3,2,3)\\
(2,3,2)&(2,3,3)&(3,3,2)&(3,3,3)
\end{array}\right].
\]
Here, the vertical line separates the two $2\times 2$ layers of the $2\times 2\times 2$ tensor: The left block contains the layer whose entries have first coordinate $1$, while the right block contains the layer whose entries have first coordinate $2$.
\end{example}

Recall that a $\sigma$-structure $\A$ is \emph{$k$-enhanced} if $\sigma$ contains a $k$-ary symbol $R_k$ such that $R_k^\A=A^k$. Observe that any two $\sigma$-structures $\A$ and $\B$ are homomorphic if and only if the structures $\tilde{\A}$ and $\tilde{\B}$ obtained by adding $R_k$ to their signatures are homomorphic (thus, $\PCSP(\A,\B)$ is equivalent to $\PCSP(\tilde{\A},\tilde{\B})$).
We now give the main definition of this work.

\begin{graytbox}	
\begin{defn}	
\label{defn_hierarchy_minion_test}	
For a minion $\Mminion$ and an integer $k\in\N$, the \emph{$k$-th level of the	
  minion test} $\Test{\Mminion}{}$, denoted by $\Test{\Mminion}{k}$, is the	
  decision problem defined as follows: Given two $k$-enhanced	
  $\sigma$-structures $\X$ and $\A$, return $\YES$ if $\Xk\to\freeM(\Ak)$, and $\NO$ otherwise.	
\end{defn}	
\end{graytbox}	
We point out that, in the definition above, it is crucial to require that the structures $\X$ and $\A$ be $k$-enhanced in order for the test to capture the hierarchies of (P)CSP relaxations studied in the literature, as we shall see later. Such hierarchies involve marginalisation constraints over sets of variables of possibly different sizes (see, for example, the condition ($\varheart 2$) in the equivalent description~\eqref{eqn_defn_Sherali_Adams_BD} of the Sherali--Adams hierarchy in Appendix~\ref{subsec_appendix_SA_AIPk}). 
Our goal is then to simulate these constraints in the homomorphism $\Xk\to\freeM(\Ak)$ of $\sigma^{\tensor{k}}$-structures witnessing the truth of the test. To that end, we ask that the homomorphism should preserve an extra relation containing \textit{all} possible tuples of variables of length $k$---which is precisely the relation $R_k$.

Comparing Definition~\ref{defn_hierarchy_minion_test} with Definition~\ref{defn_minion_test}, we see that $\Test{\Mminion}{k}(\X,\A)=\Test{\Mminion}{}(\Xk,\Ak)$. In other words, the $k$-th level of a minion test is just the minion test applied to the tensor power of the ($k$-enhanced) structures. 
We have seen (cf.~Proposition~\ref{prop_minion_tests_are_complete}) that a minion test is always complete. It turns out that this property keeps holding for any level of a minion test. 
\begin{prop}
\label{prop_hierarchies_are_complete}
$\Test{\Mminion}{k}$ is complete for any minion $\Mminion$ and any integer $k\in\N$.
\end{prop}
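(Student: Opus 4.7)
The plan is to reduce this statement to the base case already handled by Proposition~\ref{prop_minion_tests_are_complete}. Since $\Test{\Mminion}{k}(\X,\A)=\Test{\Mminion}{}(\Xk,\Ak)$, it suffices to show that whenever $\X\to\A$ we also have $\Xk\to\Ak$; then Proposition~\ref{prop_minion_tests_are_complete} applied to the pair of structures $(\Xk,\Ak)$ delivers $\Xk\to\freeM(\Ak)$, which is precisely the $\YES$ condition of $\Test{\Mminion}{k}$.

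The main work, then, is to lift a homomorphism $h:\X\to\A$ to a homomorphism $h^\tensor{k}:\Xk\to\Ak$ between the $k$-th tensor powers. The natural candidate is the coordinate-wise map $h^\tensor{k}(x_1,\ldots,x_k)=(h(x_1),\ldots,h(x_k))$ on $X^k=\dom(\Xk)$. To verify that this is a $\sigma^\tensor{k}$-homomorphism, pick any relation symbol $R\in\sigma$ of arity $r$ and any tuple in $R^{\Xk}$. By Definition~\ref{defn_tensorisation} such a tuple has the form $\bx^\tensor{k}$ for some $\bx=(x_1,\ldots,x_r)\in R^\X$. The $(i_1,\ldots,i_k)$-entry of $\bx^\tensor{k}$ is $(x_{i_1},\ldots,x_{i_k})$, so applying $h^\tensor{k}$ entry-wise yields the tensor whose $(i_1,\ldots,i_k)$-entry is $(h(x_{i_1}),\ldots,h(x_{i_k}))$; this is exactly the $(i_1,\ldots,i_k)$-entry of $h(\bx)^\tensor{k}$, where $h(\bx)=(h(x_1),\ldots,h(x_r))$. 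Because $h$ is a homomorphism, $h(\bx)\in R^\A$, and hence $h(\bx)^\tensor{k}\in R^{\Ak}$ by Definition~\ref{defn_tensorisation}. Therefore $h^\tensor{k}$ maps every tuple of $R^{\Xk}$ into $R^{\Ak}$, as required.

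There is no real obstacle in the argument: the $k$-enhancement hypothesis is part of the setup of Definition~\ref{defn_hierarchy_minion_test} but is not needed for completeness, and the verification above is a direct unpacking of the tensorisation definition together with the Segre-outer-product identity $h^\tensor{k}(\bx^\tensor{k})=h(\bx)^\tensor{k}$ mentioned in the footnote. The only point worth being careful about is the bookkeeping of the signature change from $\sigma$ to $\sigma^\tensor{k}$ and the fact that the arity of $R$ grows from $r$ to $r^k$, which is transparent once one notes that $h^\tensor{k}$ acts entry-wise on tensors.
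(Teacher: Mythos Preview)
Your proof is correct and follows essentially the same route as the paper: the paper cites Proposition~\ref{prop_correspondence_morphisms_tensor_structures} for the lift $\X\to\A\Rightarrow\Xk\to\Ak$ (whose proof is exactly the coordinate-wise map $h^\tensor{k}$ you construct inline) and then uses Lemma~\ref{lem_A_maps_free_structure_of_A} in place of your appeal to Proposition~\ref{prop_minion_tests_are_complete}, which amounts to the same composition.
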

The proof of Proposition~\ref{prop_hierarchies_are_complete} relies on the fact that homomorphisms between structures are in some sense invariant under the tensorisation construction, as formally stated in the next proposition.
We let $\Hom(\A,\B)$ denote the set of homomorphisms from $\A$ to $\B$.

\begin{prop}
\label{prop_correspondence_morphisms_tensor_structures}
Let $k\in \N$ and let $\A,\B$ be two $\sigma$-structures. Then
\begin{enumerate}
\item[$(i)$]
 $\A\to\B$ if and only if $\A^\tensor{k}\to\B^\tensor{k}$;
\item[$(ii)$]
if $\A$ is $k$-enhanced, there is a bijection $\rho:\Hom(\A,\B)\to\Hom(\A^\tensor{k},\B^\tensor{k})$.
\end{enumerate}
\end{prop}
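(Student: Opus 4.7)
The plan is to exhibit the natural ``coordinate-wise'' map as the bijection $\rho$ and to analyse when it is bijective, obtaining the converse direction of $(i)$ along the way. First, for any $h\in\Hom(\A,\B)$, I would define $h^\tensor{k}:A^k\to B^k$ by $h^\tensor{k}(a_1,\ldots,a_k)=(h(a_1),\ldots,h(a_k))$. To verify that $h^\tensor{k}\in\Hom(\A^\tensor{k},\B^\tensor{k})$, observe that for any $R\in\sigma$ of arity $r$ and any $\ba\in R^\A$, applying $h^\tensor{k}$ entry-wise to $\ba^\tensor{k}$ produces the tensor whose $(i_1,\ldots,i_k)$-th entry is $(h(a_{i_1}),\ldots,h(a_{i_k}))$, which is precisely $h(\ba)^\tensor{k}$; and this lies in $R^{\B^\tensor{k}}$ since $h(\ba)\in R^\B$. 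Setting $\rho(h):=h^\tensor{k}$ thus establishes the forward direction of $(i)$ and defines the candidate bijection for $(ii)$.

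For the converse of $(i)$ (and in preparation for surjectivity of $\rho$), given $h':\A^\tensor{k}\to\B^\tensor{k}$ I would define $h:A\to B$ by letting $h(a)$ be the first coordinate of $h'(a,\ldots,a)\in B^k$. To check that this $h$ is a homomorphism, I would take $\ba=(a_1,\ldots,a_r)\in R^\A$; then $\ba^\tensor{k}\in R^{\A^\tensor{k}}$, so the tensor obtained by applying $h'$ entry-wise lies in $R^{\B^\tensor{k}}$, and hence must be of the form $\bb^\tensor{k}$ for some $\bb\in R^\B$. Inspecting the ``diagonal'' entry $(j,\ldots,j)$ for each $j\in[r]$, the corresponding entry of $\ba^\tensor{k}$ is the constant tuple $(a_j,\ldots,a_j)$ and the same entry of $\bb^\tensor{k}$ is $(b_j,\ldots,b_j)$, whence $h(a_j)=b_j$; therefore $(h(a_1),\ldots,h(a_r))=\bb\in R^\B$.

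To complete $(ii)$, injectivity of $\rho$ is immediate, as $\rho(h)(a,\ldots,a)=(h(a),\ldots,h(a))$ already recovers $h$. For surjectivity, given $h'\in\Hom(\A^\tensor{k},\B^\tensor{k})$ I would take $h$ from the diagonal recipe above and then verify $\rho(h)=h'$ on the whole of $A^k$, which is where the $k$-enhancement of $\A$ becomes essential. Since $\sigma$ contains a $k$-ary symbol $R_k$ with $R_k^\A=A^k$, every $\ba=(a_1,\ldots,a_k)\in A^k$ satisfies $\ba^\tensor{k}\in R_k^{\A^\tensor{k}}$, which forces $h'(\ba^\tensor{k})=\bb^\tensor{k}$ for some $\bb\in R_k^\B$. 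Reading off the entry indexed by the ``identity'' tuple $(1,2,\ldots,k)$ yields $h'(\ba)=\bb$, while the diagonal entries give $h(a_j)=b_j$ for $j\in[k]$, so $\rho(h)(\ba)=(h(a_1),\ldots,h(a_k))=\bb=h'(\ba)$. The main subtlety I anticipate is keeping straight the two layers of structure (the domain $A^k$ of $\A^\tensor{k}$ versus the tensors $\ba^\tensor{k}$ whose entries themselves live in $A^k$), and the key idea is to exploit the \emph{diagonal} entries $(j,\ldots,j)$ to recover $h$ from $h'$ and, under $k$-enhancement, to use the \emph{identity} entry $(1,2,\ldots,k)$ of $R_k^{\A^\tensor{k}}$ to pin down $h'$ on an arbitrary $\ba\in A^k$.
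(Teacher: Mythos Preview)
Your proposal is correct and follows essentially the same approach as the paper: the paper likewise defines $\rho(h)=h^\tensor{k}$ coordinate-wise, recovers $h$ from $h'$ via the first coordinate of $h'(a,\ldots,a)$ using the diagonal entries $(j,\ldots,j)$, and proves surjectivity under $k$-enhancement by applying $h'$ to $\ba^\tensor{k}\in R_k^{\A^\tensor{k}}$ and reading off the entry indexed by $(1,2,\ldots,k)$. The constructions, key indices, and overall structure match.
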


\begin{proof}
Let ${f}:\A\to\B$ be a homomorphism, and consider the function $f^\ast:A^k\to B^k$ defined by $f^\ast((a_1,\dots,a_k))= ({f}(a_1),\dots,{f}(a_k))$.\footnote{Note that, throughout the paper, we use the same symbol $f$ to denote both a function $a\mapsto f(a)$ for $a\in A$ and its component-wise application $\ba\mapsto f(\ba)=(f(a_1),\dots,f(a_p))$ for $\ba\in A^p$. Only in this proof, however, it is convenient to introduce the new symbol $f^*$ to specifically denote the component-wise application of $f$ to tuples of length $k$.} Take $R\in\sigma$ of arity $r$, and consider $\ba^\tensor{k}\in R^{\A^\tensor{k}}$, where $\ba\in R^\A$. Let $\bb=f(\ba)$.
Since ${f}$ is a homomorphism, $\bb\in R^\B$, so $\bb^\tensor{k}\in R^{\B^\tensor{k}}$. For any $\bi\in [r]^k$, we have 
\begin{align*}
E_\bi\ast f^\ast\left(\ba^\tensor{k}\right)&=f^\ast\left(E_\bi\ast\ba^\tensor{k}\right)=f^\ast(\ba_\bi)=\bb_\bi=E_\bi\ast\bb^\tensor{k},
\end{align*}
which yields $f^\ast(\ba^\tensor{k})=\bb^\tensor{k}\in R^{\B^\tensor{k}}$. Hence, $f^\ast:\A^\tensor{k}\to\B^\tensor{k}$ is a homomorphism.

Conversely, let $g:\A^\tensor{k}\to\B^\tensor{k}$ be a homomorphism. We define the function $g_\ast:A\to B$ by setting $g_\ast(a)=\be_1^Tg((a,\dots,a))$ for each $a\in A$. Take $R\in\sigma$ of arity $r$, and consider a tuple $\ba=(a_1,\dots,a_r)\in R^\A$. Since $\ba^\tensor{k}\in R^{\A^\tensor{k}}$ and $g$ is a homomorphism, we have that $g(\ba^\tensor{k})\in R^{\B^\tensor{k}}$. Therefore, $g(\ba^\tensor{k})=\bb^\tensor{k}$ for some $\bb=(b_1,\dots,b_r)\in R^\B$. For each $j\in [r]$, consider the tuple $\bi=(j,\dots,j)\in [r]^k$ and observe that
\begin{align*}
g((a_j,\dots,a_j))=g(\ba_\bi)=g\left(E_\bi\ast\ba^\tensor{k}\right)=E_\bi\ast g(\ba^\tensor{k})
=
E_\bi\ast\bb^\tensor{k}
=\bb_\bi
=
(b_j,\dots,b_j).
\end{align*}
Hence, we find
\begin{align*}
g_\ast(\ba)=\left(\be_1^Tg((a_1,\dots,a_1)),\dots,\be_1^Tg((a_r,\dots,a_r))\right)
=
\left(b_1,\dots,b_r\right)=\bb\in R^\B.
\end{align*}
Therefore, $g_\ast:\A\to\B$ is a homomorphism. This concludes the proof of $(i)$.

To prove $(ii)$, observe first that, if $\A\not\to\B$, then $\Hom(\A,\B)=\Hom(\A^\tensor{k},\B^\tensor{k})=\emptyset$, so there is a trivial bijection in this case. If $\A\to\B$, consider the map $\rho:\Hom(\A,\B)\to\Hom(\A^\tensor{k},\B^\tensor{k})$ defined by $f\mapsto f^\ast$ and the map $\rho':\Hom(\A^\tensor{k},\B^\tensor{k})\to\Hom(\A,\B)$ defined by $g\mapsto g_\ast$. For $f:\A\to\B$ and $a\in A$, we have 
\begin{align*}
(f^\ast)_\ast(a)
=
\be_1^Tf^\ast((a,\dots,a))
=
\be_1^T(f(a),\dots,f(a))=f(a)
\end{align*}
so that $\rho'\circ\rho=\id_{\Hom(\A,\B)}$. Consider now $g:\A^\tensor{k}\to\B^\tensor{k}$, and take $\ba=(a_1,\dots,a_k)\in A^k$. Using the assumption that $\A$ is $k$-enhanced, we have $\ba\in R_k^\A$, which implies $\ba^\tensor{k}\in R_k^{\A^\tensor{k}}$. Hence, $g(\ba^\tensor{k})\in R_k^{\B^\tensor{k}}$, so $g(\ba^\tensor{k})=\bb^\tensor{k}$ for some $\bb=(b_1,\dots,b_k)\in R_k^\B\subseteq B^k$. For $j\in [k]$ and $\bi=(j,\dots,j)\in [k]^k$, we have
\begin{align*}
g((a_j,\dots,a_j))
=
g(\ba_\bi)
=
g\left(E_\bi\ast\ba^\tensor{k}\right)
=
E_\bi\ast g\left(\ba^\tensor{k}\right)
=
E_\bi\ast\bb^\tensor{k}
=
\bb_\bi
=
(b_j,\dots,b_j).
\end{align*}
Letting $\bi'=(1,\dots,k)\in [k]^k$, we obtain
\begin{align*}
(g_\ast)^\ast(\ba)
&=
(g_\ast(a_1),\dots,g_\ast(a_k))
=
\left(\be_1^Tg((a_1,\dots,a_1)),\dots,\be_1^Tg((a_k,\dots,a_k))\right)
=
(b_1,\dots,b_k)\\
&=
\bb
=
\bb_{\bi'}
=
E_{\bi'}\ast\bb^\tensor{k}
=
E_{\bi'}\ast g\left(\ba^\tensor{k}\right)
=
g\left(E_{\bi'}\ast\ba^\tensor{k}\right)
=
g(\ba_{\bi'})
=
g(\ba),
\end{align*}
so that $\rho\circ\rho'=\id_{\Hom(\A^\tensor{k},\B^\tensor{k})}$, which concludes the proof of $(ii)$. 
\end{proof}

\begin{rem}
Part $(ii)$ of Proposition~\ref{prop_correspondence_morphisms_tensor_structures} does not hold if we relax the requirement that $\A$ be $k$-enhanced. More precisely, in this case, the function $\rho:\Hom(\A,\B)\to\Hom(\A^\tensor{k},\B^\tensor{k})$ defined in the proof of Proposition~\ref{prop_correspondence_morphisms_tensor_structures} still needs to be injective, but may not be surjective. Therefore, we have $\left |\Hom(\A,\B)\right |\leq\left |\Hom(\A^\tensor{k},\B^\tensor{k})\right |$, and the inequality may be strict.

For example, consider the Boolean structure $\A$ having a single unary relation $R_1^\A=A=\{0,1\}$. So, $\A$ is $1$-enhanced but not $2$-enhanced. Observe that $|\Hom(\A,\A)|=4$. The tensorised structure $\A^\tensor{2}$ has domain $\{0,1\}^2=\{(0,0),(0,1),\allowbreak(1,0),(1,1)\}$, and its (unary) relation is $R_1^{\A^\tensor{2}}=\{0^\tensor{2},1^\tensor{2}\}=\{(0,0),(1,1)\}$. Therefore, each map $f:\{0,1\}^2\to\{0,1\}^2$ such that $f((0,0))\in\{(0,0),(1,1)\}$ and $f((1,1))\in\{(0,0),\allowbreak(1,1)\}$ yields a proper homomorphism $\A^\tensor{2}\to\A^\tensor{2}$. It follows that $\left|\Hom(\A^\tensor{2},\A^\tensor{2})\right|=64$, so $\Hom(\A,\A)$ and $\Hom(\A^\tensor{2},\A^\tensor{2})$ are not in bijection.
\end{rem}

It readily follows from Proposition~\ref{prop_correspondence_morphisms_tensor_structures} that hierarchies of minion tests are always complete.

\begin{proof}[Proof of Proposition~\ref{prop_hierarchies_are_complete}]
Let $\X$ and $\A$ be two $k$-enhanced $\sigma$-structures and suppose that $\X\to\A$. Proposition~\ref{prop_correspondence_morphisms_tensor_structures} yields $\Xk\to\Ak$, while Lemma~\ref{lem_A_maps_free_structure_of_A} yields $\Ak\to\freeM(\Ak)$. The composition of the two homomorphisms witnesses that $\Test{\Mminion}{k}(\X,\A)=\YES$, as required.
\end{proof}

Before continuing, we now discuss some of the basic algebraic properties of the tensorisation construction of Definition~\ref{defn_tensorisation}.
\begin{rem}
We now recall the notion of \textit{pp-power} of relational structures, following the presentation in~\cite{BBKO21}.
A \textit{pp-formula} over a signature $\sigma$ is a formal expression $\psi$ consisting of an existentially quantified conjunction of predicates of the form $(i)$ ``$x=y$'', or $(ii)$ ``$(x_{i_1},\dots,x_{i_r})\in R$'' for some $R\in\sigma$ of arity $r$, where $x,y,x_{i_1},\dots,x_{i_r}$ are variables. 
Let $f$ be the number of free (i.e., unquantified) variables in $\psi$.
Given a $\sigma$-structure $\A$, the interpretation of $\psi$ in $\A$ is the set $\psi(\A)\subseteq A^f$ containing all tuples $(a_1,\dots,a_f)\in A^f$ that  satisfy $\psi$, where each symbol $R$ appearing in $\psi$ is interpreted in $\A$. Let now $\A'$ be a $\sigma'$-structure for some possibly different signature $\sigma'$, such that $A'=A$. We say that $\A'$ is \textit{pp-definable} from $\A$ if for each symbol $S\in\sigma'$ it holds that $S^{\A'}=\psi_S(\A)$ for some pp-formula $\psi_S$ over $\sigma$. Suppose now that $A'=A^m$ for some $m\in\N$, and
let $\operatorname{vec}_m(\A')$ be the structure with domain $A$ and relations defined as follows: For each $r$-ary symbol $S\in\sigma'$, $\operatorname{vec}_m(\A')$ has an $rm$-ary relation containing the tuple $(b^{(1)}_1,\dots,b^{(1)}_m,\dots,b^{(r)}_1,\dots,b^{(r)}_m)$ for each tuple $(\bb^{(1)},\dots,\bb^{(r)})\in S^{\A'}$.
We say that $\A'$ is an $m$-fold \textit{pp-power} of $\A$ if $\operatorname{vec}_m(\A')$ is pp-definable from $\A$.

It is not hard to verify that the $k$-th tensor power $\Ak$ of a $\sigma$-structure $\A$ is a specific ($k$-fold) pp-power of $\A$. Indeed, consider the structure $\operatorname{vec}_k(\Ak)$, and notice that the relation corresponding to a symbol $R\in\sigma$ (having arity $r$ in $\sigma$ and arity $r^k$ in $\sigma^\tensor{k}$) has arity $k\cdot r^k$ in $\operatorname{vec}_k(\Ak)$. Such relation is easily seen to be pp-definable from $\A$. As an example, suppose for concreteness that $k=2$ and $\sigma$ contains a single ternary symbol $R$. Then, we have
\begin{align*}
    R^{\A^\tensor{2}}&=\left\{\left[\begin{array}{ccc}
         (a_1,a_1)&(a_1,a_2)&(a_1,a_3)  \\
         (a_2,a_1)&(a_2,a_2)&(a_2,a_3)  \\
         (a_3,a_1)&(a_3,a_2)&(a_3,a_3)  \\
    \end{array}\right]:(a_1,a_2,a_3)\in R^\A\right\}\quad\mbox{and}\\
    R^{\operatorname{vec}_2(\A^\tensor{2})}&=\{(a_1,a_1,a_1,a_2,a_1,a_3,a_2,a_1,a_2,a_2,a_2,a_3,a_3,a_1,a_3,a_2,a_3,a_3):(a_1,a_2,a_3)\in R^\A\}.
\end{align*}
Clearly, there exists a pp-formula $\psi$ over $\sigma$ such that the $18$-ary relation of $\operatorname{vec}_2(\A^\tensor{2})$ (where $18=2\cdot 3^2$) satisfies $R^{\operatorname{vec}_2(\A^\tensor{2})}=\psi(\A)$. This easily generalises to arbitrary structures and arbitrary powers $k$.
In the general case, the pp-formula $\psi_R$ satisfying $R^{\operatorname{vec}_k(\Ak)}=\psi_R(\A)$ (for some symbol $R$ of arity $r$ in $\sigma$) is
\begin{align*}
    \psi_R(x_1,\dots,x_{k\cdot r^k})=\exists_{z_1,\dots,z_r}\;\;(z_1,\dots,z_r)\in R\;\;\wedge\;\;\bigwedge_{\bi\in [r]^k}\bigwedge_{j\in [k]}x_{k\cdot (p(\bi)-1)+j}=z_{i_j},
\end{align*}
where $p$ is a fixed bijection from $[r]^k$ to $[r^k]$.

It also follows that, given a PCSP template $(\A,\B)$, the PCSP template $(\Ak,\Bk)$ is a pp-power of  $(\A,\B)$ in the sense of~\cite[Definition~4.7]{BBKO21}---i.e., $\operatorname{vec}_k(\Ak)$ and $\operatorname{vec}_k(\Bk)$ are definable from $\A$ and $\B$, respectively, using the \textit{same} pp-formulas. (Note that $(\Ak,\Bk)$ is a proper PCSP template by virtue of part $(i)$ of  Proposition~\ref{prop_correspondence_morphisms_tensor_structures}.) It then immediately follows from~\cite[Corollary~4.10]{BBKO21} that there exists a minion homomorphism $\Pol(\A,\B)\to\Pol(\Ak,\Bk)$. In fact, it was proved in~\cite[Theorem~37]{cz22arxiv:tensors} that $\Pol(\A,\B)$ and $\Pol(\Ak,\Bk)$ are actually homomorphically equivalent (and even isomorphic when $\A$ is $k$-enhanced).
\end{rem}

Recall now the description of the free structure of a minion $\Mminion$ given in Section~\ref{subsec_prelimns_pcsps}. In the case that the free structure is applied to the $k$-th tensor power $\Ak$ of a $\sigma$-structure $\A$, it follows from Definition~\ref{defn_tensorisation} that the domain of $\freeM(\Ak)$ is the set $\Mminion^{(n^k)}$ (where, as usual, we are denoting $n=|A|$). Furthermore, given a
relation symbol $R\in\sigma$ whose arity in $\sigma$ is $r$, a tuple $(M_\bi)_{\bi\in [r]^k}$ of elements
of $\Mminion^{(n^k)}$ belongs to $R^{\bF_{\sM}(\Ak)}$ if and only if there is some
$Q\in \sM^{(|R^\A|)}$ such that $M_\bi=Q_{/\pi_\bi}$ for each $\bi\in[r]^k$, where
$\pi_\bi:R^\A\to A^k$ maps $\ba\in R^\A$ to its $\bi$-th projection $\ba_\bi$.

It is well known that each of the hierarchies of relaxations described in Section~\ref{sec_relaxations_hierarchies} has the property that higher levels are at least as powerful as lower levels. As the next result shows, this is in fact a property of all hierarchies of minion tests.
\begin{prop}
\label{prop_hierarchies_minion_tests_get_stronger}
Let $\Mminion$ be a minion, let $k,p\in\N$ be such that $k>p$, and let $\X,\A$ be two $k$- and $p$-enhanced $\sigma$-structures. If $\Test{\Mminion}{k}(\X,\A)=\YES$ then $\Test{\Mminion}{p}(\X,\A)=\YES$.
\end{prop}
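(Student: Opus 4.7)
The plan is to promote a homomorphism $h\colon\X^\tensor{k}\to\freeM(\A^\tensor{k})$, which exists by hypothesis, to a homomorphism $g\colon\X^\tensor{p}\to\freeM(\A^\tensor{p})$, thereby witnessing $\Test{\Mminion}{p}(\X,\A)=\YES$. The construction of $g$ will combine two simple operations that together absorb the drop in tensor order: first, padding a variable $\by\in X^p$ to a variable $\by^\ast\in X^k$ before feeding it into $h$; second, applying a minor that projects the resulting element of $\Mminion^{(|A|^k)}$ down to $\Mminion^{(|A|^p)}$ by forgetting the extra $A$-coordinates.

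Concretely, for $\by=(y_1,\dots,y_p)\in X^p$ I set $\by^\ast=(y_1,\dots,y_p,y_1,\dots,y_1)\in X^k$, padding with the first entry of $\by$, and let $\pi\colon A^k\to A^p$ denote the projection onto the first $p$ coordinates, regarded as a map $[|A|^k]\to[|A|^p]$ between index sets. I then define $g(\by)=h(\by^\ast)_{/\pi}\in\Mminion^{(|A|^p)}$. To verify that $g$ is a homomorphism, I fix $R\in\sigma$ of arity $r$ and $\bx\in R^\X$. The homomorphism $h$ sends $\bx^\tensor{k}\in R^{\X^\tensor{k}}$ into $R^{\freeM(\A^\tensor{k})}$, which by definition of the free structure furnishes a single witness $Q\in\Mminion^{(|R^\A|)}$ such that $h((x_{i_1},\dots,x_{i_k}))=Q_{/\tau_\bi}$ for every $\bi\in[r]^k$, where $\tau_\bi\colon R^\A\to A^k$ sends $\ba$ to $(a_{i_1},\dots,a_{i_k})$.

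Given $\bj\in[r]^p$, I will pick the diagonal index $\bi=(j_1,\dots,j_p,j_1,\dots,j_1)\in[r]^k$, which makes the padded tuple $(x_{j_1},\dots,x_{j_p})^\ast$ coincide with $(x_{i_1},\dots,x_{i_k})$. The minor-composition axiom then yields
\[
g\bigl((x_{j_1},\dots,x_{j_p})\bigr)\;=\;\bigl(Q_{/\tau_\bi}\bigr)_{/\pi}\;=\;Q_{/\pi\circ\tau_\bi},
\]
and $(\pi\circ\tau_\bi)(\ba)=(a_{j_1},\dots,a_{j_p})$ is precisely the projection $\rho_\bj\colon R^\A\to A^p$ that certifies membership in $R^{\freeM(\A^\tensor{p})}$. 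Since the \emph{same} $Q$ serves all $\bj\in[r]^p$ simultaneously, the tuple $(g((x_{j_1},\dots,x_{j_p})))_{\bj}$ belongs to $R^{\freeM(\A^\tensor{p})}$, so $g$ is the desired homomorphism.

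The main obstacle is the joint design of the padding and the projection so that, for every $\bj\in[r]^p$ simultaneously, their composition collapses to $\rho_\bj$ using one common witness $Q$. A constant padding by some fixed element of $X$ need not agree with any coordinate of a given $\bx\in R^\X$, and then no $\bi\in[r]^k$ would make the padded tuple match an entry of $\bx^\tensor{k}$; duplicating the first entry of the input is the minimal fix that makes $\pi\circ\tau_\bi$ independent of the padding slots and equal to $\rho_\bj$. Once this alignment is set up, the rest is just the minor-composition identity $(M_{/\pi})_{/\tilde\pi}=M_{/\tilde\pi\circ\pi}$ together with the definition of the free structure.
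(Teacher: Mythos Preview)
Your proof is correct and follows essentially the same approach as the paper. Both construct $g(\by)=h(\by_\bw)_{/\tau}$ where $\bw\in[p]^k$ pads the $p$-tuple to a $k$-tuple and $\tau\colon A^k\to A^p$ projects onto the first $p$ coordinates; the only cosmetic difference is that you pad with the first entry ($\bw=(1,\dots,p,1,\dots,1)$) while the paper pads with the last ($\bw=(1,\dots,p,p,\dots,p)$), and both verify the homomorphism property via the minor-composition identity $(Q_{/\tau_\bi})_{/\pi}=Q_{/\pi\circ\tau_\bi}=Q_{/\rho_\bj}$.
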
 
\begin{proof}
Let $\xi:\Xk\to\freeM(\Ak)$ be a homomorphism witnessing that
  $\Test{\Mminion}{k}(\X,\A)=\YES$. 
Since $k>p$, we can choose two tuples
  $\bv\in [k]^p$ and $\bw\in [p]^k$ such that $\bw_\bv=(1,\dots,p)$. (For instance, we may
  take $\bv=(1,\ldots,p)$ and
  $\bw=(1,\ldots,p,\ldots,p)$.) Consider the map $\tau:A^k\to A^p$ defined by $\ba\mapsto \ba_\bv$. We claim that the map $\vartheta:X^p\to\Mminion^{(n^p)}$ defined by $\bx\mapsto \xi(\bx_\bw)_{/\tau}$ yields a homomorphism from $\X^\tensor{p}$ to $\freeM(\A^\tensor{p})$, thus witnessing that $\Test{\Mminion}{p}(\X,\A)=\YES$. To that end, for $R\in\sigma$, take $\bx\in R^\X$ and observe that, since $\xi$ is a homomorphism and $\bx^\tensor{k}\in R^{\Xk}$, $\xi(\bx^\tensor{k})\in R^{\freeM(\Ak)}$. Therefore, there exists $Q\in\Mminion^{(|R^\A|)}$ satisfying $\xi(\bx_\bi)=Q_{/\pi_\bi}$ for each $\bi\in [r]^k$. If we manage to show that 
$\vartheta(\bx_\bj)=Q_{/\pi_\bj}$
for each $\bj\in [r]^p$,
we would deduce that $\vartheta
(\bx^\tensor{p})\in R^{\freeM(\A^\tensor{p})}$, thus proving the claim. Observe that 
\begin{align*}
\vartheta(\bx_\bj)
&=
\xi(\bx_{\bj_{\bw}})_{/\tau}
=
(Q_{/\pi_{\bj_\bw}})_{/\tau}
=
Q_{/\tau\,\circ\,\pi_{\bj_\bw}},
\end{align*}
so we are left to show that $\tau\circ\pi_{\bj_\bw}=\pi_\bj$. Indeed, given any $\ba\in R^\A$,
\begin{align*}
(\tau\circ\pi_{\bj_\bw})(\ba)
&=
\tau(\pi_{\bj_\bw}(\ba))
=
\tau(\ba_{\bj_\bw})
=
\ba_{\bj_{\bw_\bv}}
=
\ba_\bj
=
\pi_\bj(\ba),
\end{align*}
as required.
\end{proof}
\noindent It follows from Proposition~\ref{prop_hierarchies_minion_tests_get_stronger} that, if some level of a minion test is sound for a template $(\A,\B)$ (equivalently, if it solves $\PCSP(\A,\B)$), then any higher level is sound for $(\A,\B)$ (equivalently, it solves $\PCSP(\A,\B)$).

The next theorem shows that the framework defined above is general enough to capture each of the five hierarchies for $\parPCSPs$ described in Section~\ref{sec_relaxations_hierarchies}.
\begin{graytbox}	
\begin{thm}[informal]
\label{thm_main_multilinear_tests}
If $k\in\N$ is at least the maximum arity of the template,
\begin{align*}
    \begin{array}{lll}
         \bullet\;\BW^k=\Test{\Hminion}{k}\\[4pt]
         \bullet\;\SA^k=\Test{\Qconv}{k}\\[4pt]
         \bullet\;\AIP^k=\Test{\Zaff}{k}\\[4pt]
         \bullet\;\SoS^k=\Test{\Sminion}{k}\\[4pt]
         \bullet\;\BA^k=\Test{\Mblpaip}{k}.
    \end{array}
\end{align*}
\end{thm}
\end{graytbox}

We will prove Theorem~\ref{thm_main_multilinear_tests} in Section~\ref{sec_proof_of_main_thm}. First, it will be convenient to focus on hierarchies of tests corresponding to minions having specific characteristics---which we call linear and conic.

\section{Linear minions}
\label{sec_hierarchies_of_linear_minion_tests}
By the Definition~\ref{defn_hierarchy_minion_test} of a hierarchy of minion tests, $\Test{\Mminion}{k}$ applied to an instance $\X$ of $\PCSP(\A,\B)$ checks for the existence of a homomorphism from $\Xk$ to $\freeM(\Ak)$. Therefore, to describe the hierarchy and get knowledge on its functioning it is necessary to study the structure $\freeM(\Ak)$.
Certain features of the hierarchies of minion tests -- in particular, that they
are complete (Proposition~\ref{prop_hierarchies_are_complete}) and their power does not decrease while the level increases (Proposition~\ref{prop_hierarchies_minion_tests_get_stronger}) -- hold
true for any minion, as they only depend on basic properties of the
tensorisation construction. In order to prove
Theorem~\ref{thm_main_multilinear_tests}, however, it is necessary to dig
deeper by investigating how the tensorisation construction interacts with the free structure. 
To that end, we isolate a property shared by all minions mentioned in this work: Their objects can be interpreted as matrices, and their minor operations can be expressed as matrix multiplications. We call such minions 
\emph{linear}. 

\begin{defn}
\label{defn_linear_minion}
A minion $\Mminion$ is \emph{linear} if there exists a semiring
  $\mathcal{S}$
 with additive identity $0_{\mathcal{S}}$ and multiplicative identity $1_{\mathcal{S}}$ and a number
${d}\in\N\cup\{\ale\}$ (called \emph{depth}) such that
\begin{enumerate}
\item
the elements of $\Mminion^{(L)}$ are $L\times d$ matrices whose entries belong to $\mathcal{S}$, for each $L\in\N$;
\item
given $L,L'\in\N$, $\pi:[L]\to [L']$, and $M\in\Mminion^{(L)}$, $M_{/\pi}=PM$,
    where $P$ is the $L'\times L$ matrix such that, for $i\in [L']$ and $j\in
    [L]$, the $(i,j)$-th entry of $P$ is $1_{\mathcal{S}}$ if $\pi(j)=i$, and $0_{\mathcal{S}}$ otherwise.
\end{enumerate}
\end{defn}
Observe that pre-multiplying a matrix $M$ by $P$ amounts to performing a combination of the following three elementary operations to the rows of $M$: swapping two rows, replacing two rows with their sum, and inserting a zero row. Hence, we may equivalently define a linear minion as a collection of matrices over $\mathcal{S}$ that is closed under such elementary operations. 

\begin{rem}
\label{rem_linear_to_function_minion}
We now show that any linear minion can be naturally interpreted as a minion of functions. Given two (potentially infinite) sets $A,B$ and an integer $L\in\N$, let $\mathscr{F}_{A,B}^{(L)}$ be the set of all functions $f:A^L\to B$.  For $\pi:[L]\to[L']$, we define $f_{/\pi}\in\mathscr{F}_{A,B}^{(L')}$ to be the function given by 
\[
(a_1,\dots,a_{L'})\mapsto f(a_{\pi(1)},\dots,a_{\pi(L)}).
\]
It is easy to verify that the disjoint union $\mathscr{F}_{A,B}=\bigsqcup_{L\in\N}\mathscr{F}_{A,B}^{(L)}$ equipped with such minor operations is a minion. We let a \textit{function minion} over $A,B$ be any subminion of $\mathscr{F}_{A,B}$ (i.e., a non-empty subset of $\mathscr{F}_{A,B}$ that is closed under the minor operations)~\cite{BBKO21}. For example, the polymorphism minion $\Pol(\A,\B)$ of a PCSP template $(\A,\B)$ described in Example~\ref{example_minion_pol_A_B} is a function minion over the sets $A,B$. 
Now, given a linear minion $\Mminion$ of depth $d$ over a semiring $\mathcal{S}$, we can naturally see $\Mminion$ as a function minion over the sets $\mathcal{S},\mathcal{S}^d$ as follows. Consider the map $\xi:\Mminion\to\mathscr{F}_{\mathcal{S},\mathcal{S}^d}$ that associates with a matrix $M\in\Mminion^{(L)}$ the linear operator $\xi(M)$ from $\mathcal{S}^L$ to $\mathcal{S}^d$ corresponding to the matrix $M^T$.
It is not hard to verify that $\xi$ preserves arities and minors, and it is thus a minion homomorphism.
Indeed, given a map $\pi:[L]\to[L']$ and a tuple $\bs\in\mathcal S^{L'}$, and letting $P$ be the $L'\times L$ Boolean matrix associated with $\pi$ as per Definition~\ref{defn_linear_minion}, we have
\begin{align*}
    \xi(M_{/\pi})(\bs)
    =
    (M_{/\pi})^T\bs=
    M^TP^T\bs
    =
    \xi(M)(P^T\bs)
    =
    \xi(M)_{/\pi}(\bs).
\end{align*}
Furthermore, $\xi$ is injective,
so it induces a minion isomorphism from $\Mminion$ to a subminion of $\mathscr{F}_{\mathcal{S},\mathcal{S}^d}$. As a consequence, $\xi$ witnesses that $\Mminion$ is isomorphic to
a function minion.
\end{rem}

As illustrated by the next proposition, the family of linear minions is rich enough to include the minions associated with all minion tests studied in the literature on $\PCSP$s, including $\SDP$.

\begin{prop}
\label{prop_examples_linear_minions}
The following minions are (isomorphic to) linear minions:
\begin{align*}
&\bullet\;\;\Hminion, \mbox{ with } \mathcal{S}=(\{0,1\},\vee,\wedge) \mbox{ and } {d}=1
&&
\bullet\;\;\Qconv, \mbox{ with } \mathcal{S}=\Q \mbox{ and } {d}=1
\\
&
\bullet\;\;\Zaff, \mbox{ with } \mathcal{S}=\Z \mbox{ and } {d}=1
&&
\bullet\;\;\Sminion, \mbox{ with } \mathcal{S}=\R \mbox{ and } {d}=\ale
\\
&\bullet\;\;\Mblpaip, \mbox{ with } \mathcal{S}=\Q \mbox{ and } {d}=2.
\end{align*}
\end{prop}

\begin{proof}
The result for $\Qconv$, $\Zaff$, and $\Mblpaip$ directly follows from their definitions in Example~\ref{examples_famous_minions}, while the result for $\Sminion$ is clear from Definition~\ref{defn_minion_SDP}.

We now turn to $\Hminion$. Recall that, in Example~\ref{examples_famous_minions}, we described $\Hminion$ as a set of functions rather than a set of matrices. We now prove that $\Hminion$ is isomorphic to a linear minion. Given $L\in\N$ and $\emptyset\neq
  {Z}\subseteq [L]$, we identify the Boolean function $f_{Z}=\bigwedge_{{z}\in
  {Z}}x_{z}\in\Hminion^{(L)}$ with the indicator vector $\bv_{Z}\in \{0,1\}^L$ whose
  $i$-th entry, for $i\in [L]$, is $1$ if $i\in {Z}$, and $0$ otherwise. To conclude, we need to show that, under this identification, the minor operations of $\Hminion$ correspond to the minor operations given in Definition~\ref{defn_linear_minion}. In other words, we claim that the function ${f_{Z}}_{/\pi}$ corresponds to the vector $P\bv_{Z}$ for any $L'\in\N$ and any $\pi:[L]\to [L']$, where $P$ is the $L'\times L$ matrix described in Definition~\ref{defn_linear_minion}. First, observe that
\begin{align*}
{f_{Z}}_{/\pi}(x_1,\dots,x_{L'})
&=
f_{Z}(x_{\pi(1)},\dots,x_{\pi(L)})
=
\bigwedge_{{z}\in {Z}}x_{\pi({z})}
=
\bigwedge_{t\in \pi({Z})}x_{t}
=
f_{\pi({Z})}(x_1,\dots,x_{L'}),
\end{align*}
so ${f_{Z}}_{/\pi}=f_{\pi({Z})}$. To conclude, we need to show that $P\bv_{Z}=\bv_{\pi({Z})}$, where $\bv_{\pi({Z})}$ is the indicator vector of the nonempty set $\pi({Z})\subseteq [L']$. Notice that the matrix multiplication is performed in the semiring $\mathcal{S}=(\{0,1\},\vee,\wedge)$.
For any $i\in [L']$, we have
\begin{align*}
\be_i^TP\bv_{Z}
&=
\bigvee_{j\in [L]}\left((\be_i^TP\be_j)\wedge(\be_j^T\bv_{Z})\right)
=
\bigvee_{\substack{j\in {Z}\\\pi(j)=i}}1
=
\bigvee_{i\in \pi({Z})}1
=
\be_i^T\bv_{\pi({Z})},
\end{align*}
as required.
\end{proof}

As a consequence, the machinery we build in this section (and in Section~\ref{sec_hierarchies_of_conic_minion_tests}, where we consider an even more specialised minion class) shall be crucial to show that the framework of hierarchies of minion tests captures all hierarchies of relaxations in Theorem~\ref{thm_main_multilinear_tests}.

Recall that, as per Definition~\ref{defn_minion_test}, the minion test associated with a minion $\Mminion$ works by checking whether a given instance is homomorphic to the free structure of $\Mminion$; in other words, $\Test{\Mminion}{}$ for a template $(\A,\B)$ is $\CSP(\freeM(\A))$. It is then worth checking what the latter object looks like in the case that $\Mminion$ is linear. The next remark shows that, in this case, $\freeM(\A)$ has a simple matrix-theoretic description.

\begin{rem}
\label{rem_free_structure_linear_minion}
Given a linear minion $\Mminion$ with semiring $\mathcal{S}$ and depth $d$, and a $\sigma$-structure $\A$, the free structure $\mathbb{F}_{\Mminion}(\A)$ of $\Mminion$ generated by $\A$ has the following description:
\begin{itemize}
\item
The elements of its domain $\Mminion^{(|A|)}$ are $|A|\times d$ matrices having entries in $\mathcal{S}$.
\item
For $R\in\sigma$ of arity $r$, the elements of $R^{\mathbb{F}_{\Mminion}(\A)}$ are tuples of the form $(P_1{Q},\dots,P_r{Q})$, where ${Q}\in\Mminion^{(|R^\A|)}$ is a $|R^\A|\times d$ matrix having entries in $\mathcal{S}$ and, for $i\in [r]$, $P_i$ is the $|A|\times |R^\A|$ matrix whose $(a,\ba)$-th entry is $1_{\mathcal{S}}$ if $a_i=a$, and $0_{\mathcal{S}}$ otherwise.
\end{itemize}
\end{rem}

\subsection{Multilinear tests}

We say that a test is \emph{multilinear} if it can be expressed as $\Test{\Mminion}{k}$ for some linear minion $\Mminion$ and some integer $k$. In the same way as, for a template $(\A,\B)$, $\Test{\Mminion}{}$ is $\CSP(\freeM(\A))$, it follows from Definition~\ref{defn_hierarchy_minion_test} that $\Test{\Mminion}{k}$ corresponds to $\CSP(\freeM(\Ak))$, as it checks for the existence of a homomorphism between the tensor power of the instance and the free structure of $\Mminion$ generated by the tensor power of $\A$. (However, recall that $\Test{\Mminion}{k}$ requires that $\X$ and $\A$ be $k$-enhanced, as per Definition~\ref{defn_hierarchy_minion_test}.)
As we have seen in Remark~\ref{rem_free_structure_linear_minion}, when $\Mminion$ is linear, the structure $\freeM(\A)$ consists in a space of matrices with relations defined through specific matrix products. Similarly, we now show that 
$\freeM(\Ak)$ is a space of tensors, endowed with relations that can be described through the tensor contraction operation. 

Given a semiring $\mathcal{S}$, a symbol $R\in\sigma$ of arity $r$, and a tuple $\bi\in [r]^k$, consider the tensor $
P_\bi\in \cT^{n\cdot\bone_k,|R^\A|}(\mathcal{S})
$ 
defined by 
\begin{align}
\label{defn_tensor_Pi}
E_\ba\ast P_\bi\ast E_{\ba'}=
\left\{
\begin{array}{cl}
1 & \mbox{if }\ba'_\bi=\ba\\
0 & \mbox{otherwise}
\end{array}
\right.
 \hspace{1cm}
\forall \ba\in {A}^k, \ba'\in R^\A.
\end{align}
Observe that the tensor $P_\bi$ is the multilinear equivalent of the matrix $P_i$ from Remark~\ref{rem_free_structure_linear_minion}. We point out that, just like $P_i$, the tensor $P_\bi$ depends on the symbol $R$. We leave this dependence implicit to avoid introducing additional notation; the symbol $R$ shall always be clear from the context. We also observe that, in the expression~\eqref{defn_tensor_Pi}, the tuples $\ba$ and $\ba'$ have different roles as coordinates of $P_\bi$: The former is a tuple of $k$-many coordinates in $[n]=A$, while the latter is a single coordinate in $[|R^\A|]$.

Let $\Mminion$ be a linear minion with semiring $\mathcal{S}$ and depth $d$.
The domain of $\freeM(\Ak)$ is $\Mminion^{(n^k)}$, which we visualise as a subset of $\cT^{n\cdot\bone_k,d}(\mathcal{S})$. Given a symbol $R\in\sigma$ of arity $r$, consider a block tensor $M=(M_\bi)_{\bi\in [r]^k}\in \cT^{r\cdot\bone_k}(\cT^{n\cdot\bone_k,d}(\mathcal{S}))=\cT^{rn\cdot\bone_k,d}(\mathcal{S})$. From the definition of free structure, we have that $M\in R^{\freeM(\Ak)}$ if and only if there exists $Q\in\Mminion^{(|R^\A|)}$ such that $M_\bi=Q_{/\pi_\bi}=P_\bi\cont{1} Q$ for each $\bi\in [r]^k$. 

To give a first glance of this object, we illustrate below the structure of $\freeM(\Ak)$ in the case that $\Mminion=\Qconv$, $k=3$, and $\A=\K_3$. 

\begin{example}
\label{example_free_structure}
Let us denote $\mathbb{F}_{\Qconv}(\K_3^\tensor{3})$ by $\textbf{F}$.
The domain of $\textbf{F}$ is the set of nonnegative tensors in
  $\mathcal{T}^{3\cdot\bone_3}(\Q)$
  whose entries sum up to $1$. The relation $R^{\textbf{F}}$ is the set of those tensors $M\in\mathcal{T}^{2\cdot\bone_3}(\mathcal{T}^{3\cdot\bone_3}(\Q))=\mathcal{T}^{6\cdot\bone_3}(\Q)$ such that there exists a stochastic vector $\bq=(q_1,\dots,q_6)\in\Qconv^{(6)}$
(which should be interpreted as a probability distribution over the elements of $R^{\K_3}$, i.e., over the directed edges in $\K_3$) for which the $\bi$-th block $M_\bi$ of $M$ satisfies $M_\bi=\bq_{/\pi_\bi}$ for each $\bi\in [2]^3$. The $(1,1,1)$-th and the $(2,1,2)$-th entries of $M$ are given below:
\begin{small}
\begin{align*}
M_{(1,1,1)}&=
\left[\begin{array}{@{}ccc|ccc|ccc@{}}
q_1+q_6&0&0&0&0&0&0&0&0\\
0&0&0&0&q_2+q_3&0&0&0&0\\
0&0&0&0&0&0&0&0&q_4+q_5
\end{array}\right]\!\!,\\
M_{(2,1,2)}&=
\left[\begin{array}{@{}ccc|ccc|ccc@{}}
0&0&0       &0&q_1&0      &0&0&q_6\\
q_2&0&0     &0&0&0        &0&0&q_3\\
q_5&0&0     &0&q_4&0      &0&0&0
\end{array}\right]\!\!.
\end{align*}
\end{small}
Figure~\ref{fig_hypermatrix} illustrates the tensor $M\in R^{\textbf{F}}$ corresponding to the uniform distribution $\bq=\frac{1}{6}\cdot\bone_6$.
\end{example}

\begin{figure}
\centering
\begin{tikzpicture}
\begin{scope}[3d view={110}{15},local bounding box=C,scale=.650]
\begin{scope}[shift={(0,0,0)}]
  \foreach \x in {0,...,2}
    \foreach \y in {0,...,2} 
    	\foreach \z in {0,...,2}
    		\cube{0}{\opacityDefault}{\x}{\y}{\z};
\cube{0}{.9}{0}{0}{0};
\cube{0}{.9}{1}{1}{1};
\cube{0}{.9}{2}{2}{2};
\end{scope}

%
%
%
%
\begin{scope}[shift={(0,\bigShift,0)}]
  \foreach \x in {0,...,2}
    \foreach \y in {0,...,2} 
    	\foreach \z in {0,...,2}
    		\cube{0}{\opacityDefault}{\x}{\y}{\z};
\cube{0}{.4}{2}{2}{1};  
\cube{0}{.4}{2}{2}{0};  
\cube{0}{.4}{1}{1}{2};	
\cube{0}{.4}{1}{1}{0};
\cube{0}{.4}{0}{0}{2};	
\cube{0}{.4}{0}{0}{1};
\end{scope}
%
%
%
%
%
\begin{scope}[shift={(0,0,-\bigShift)}]
  \foreach \x in {0,...,2}
    \foreach \y in {0,...,2} 
    	\foreach \z in {0,...,2}
    		\cube{0}{\opacityDefault}{\x}{\y}{\z};
\cube{0}{.4}{2}{1}{2};  
\cube{0}{.4}{2}{0}{2};  
\cube{0}{.4}{1}{2}{1};	
\cube{0}{.4}{1}{0}{1};
\cube{0}{.4}{0}{2}{0};	
\cube{0}{.4}{0}{1}{0};
\end{scope}
%
%
%
%
%
%
\begin{scope}[shift={(0,\bigShift,-\bigShift)}]
  \foreach \x in {0,...,2}
    \foreach \y in {0,...,2} 
    	\foreach \z in {0,...,2}
    		\cube{0}{\opacityDefault}{\x}{\y}{\z};
\cube{0}{.4}{2}{1}{1};  
\cube{0}{.4}{2}{0}{0};  
\cube{0}{.4}{1}{2}{2};	
\cube{0}{.4}{1}{0}{0};
\cube{0}{.4}{0}{2}{2};	
\cube{0}{.4}{0}{1}{1};
\end{scope}
%
%
%
%
%
\begin{scope}[shift={(-\bigShift-2,0,0)}]
  \foreach \x in {0,...,2}
    \foreach \y in {0,...,2} 
    	\foreach \z in {0,...,2}
    		\cube{0}{\opacityDefault}{\x}{\y}{\z};
\cube{0}{.4}{2}{1}{1};  
\cube{0}{.4}{2}{0}{0};  
\cube{0}{.4}{1}{2}{2};	
\cube{0}{.4}{1}{0}{0};
\cube{0}{.4}{0}{2}{2};	
\cube{0}{.4}{0}{1}{1};
\end{scope}
%
%
%
%
%
\begin{scope}[shift={(-\bigShift-2,\bigShift,0)}]
  \foreach \x in {0,...,2}
    \foreach \y in {0,...,2} 
    	\foreach \z in {0,...,2}
    		\cube{0}{\opacityDefault}{\x}{\y}{\z};
\cube{0}{.4}{2}{1}{2};  
\cube{0}{.4}{2}{0}{2};  
\cube{0}{.4}{1}{2}{1};	
\cube{0}{.4}{1}{0}{1};
\cube{0}{.4}{0}{2}{0};	
\cube{0}{.4}{0}{1}{0};
\end{scope}
%
%
%
%
\begin{scope}[shift={(-\bigShift-2,0,-\bigShift)}]
  \foreach \x in {0,...,2}
    \foreach \y in {0,...,2} 
    	\foreach \z in {0,...,2}
    		\cube{0}{\opacityDefault}{\x}{\y}{\z};
\cube{0}{.4}{2}{2}{1};  
\cube{0}{.4}{2}{2}{0};  
\cube{0}{.4}{1}{1}{2};	
\cube{0}{.4}{1}{1}{0};
\cube{0}{.4}{0}{0}{2};	
\cube{0}{.4}{0}{0}{1};
\end{scope}
%
%
%
%
\begin{scope}[shift={(-\bigShift-2,\bigShift,-\bigShift)}]
  \foreach \x in {0,...,2}
    \foreach \y in {0,...,2} 
    	\foreach \z in {0,...,2}
    		\cube{0}{\opacityDefault}{\x}{\y}{\z};
\cube{0}{.9}{0}{0}{0};
\cube{0}{.9}{1}{1}{1};
\cube{0}{.9}{2}{2}{2};
\end{scope}

\end{scope}

\end{tikzpicture}
\caption{A tensor $M\in R^{\textbf{F}}$ from Example~\ref{example_free_structure}, corresponding to the uniform distribution on the set of edges of $\K_3$. The opacity of a cell is proportional to the value of the corresponding entry:
{
{\protect\begin{tikzpicture}[baseline=-1.2ex] \begin{scope}[3d view={110}{15},local bounding box=C,scale=.4]\protect\cube{0}{.9}{0}{0}{0};\end{scope}\end{tikzpicture}} = $\frac{1}{3}$,\;\;
{\protect\begin{tikzpicture}[baseline=-1.2ex] \begin{scope}[3d view={110}{15},local bounding box=C,scale=.4]\protect\cube{0}{.4}{0}{0}{0};\end{scope}\end{tikzpicture}} = $\frac{1}{6}$,\;\;
{\protect\begin{tikzpicture}[baseline=-1.2ex] \begin{scope}[3d view={110}{15},local bounding box=C,scale=.4]\protect\cube{0}{.05}{0}{0}{0};\end{scope}\end{tikzpicture}} = $0$.
}
}
\label{fig_hypermatrix}
\end{figure}

We now show that the entries of $P_\bi$ satisfy the following simple equality, analogous to Lemma~\ref{lem_basic_P_i}.

\begin{lem}
\label{lem_multiplication_rule_P}
Let $k\in\N$, let $\A$ be a $\sigma$-structure, let $R\in\sigma$ of arity $r$, and consider the tuples $\ba\in {A}^k$ and $\bi\in [r]^k$. Then
\begin{align*}
E_\ba\ast P_\bi=\sum_{\substack{\bb\in R^\A\\ \bb_\bi=\ba}}E_\bb.
\end{align*}
\end{lem}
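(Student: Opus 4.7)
The plan is to verify the claimed identity entry-by-entry. Both sides are tensors in $\cT^{|R^\A|}(\mathcal{S})$ (a single mode of size $|R^\A|$, indexed by tuples in $R^\A$), so it suffices to check that they agree when contracted with $E_\bb$ for an arbitrary $\bb \in R^\A$.

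For the left-hand side, I would use the fact (noted in Section~\ref{sec_tensors_tuples}) that contractions taken over disjoint sets of modes can be performed in any order. This lets me rewrite $(E_\ba \ast P_\bi)\ast E_\bb$ as $E_\ba \ast P_\bi \ast E_\bb$, which by the defining equation~\eqref{defn_tensor_Pi} of $P_\bi$ equals $1$ if $\bb_\bi = \ba$ and $0$ otherwise.

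For the right-hand side, distributing the contraction over the sum gives
\[
E_\bb \ast \sum_{\substack{\bb'\in R^\A\\ \bb'_\bi = \ba}} E_{\bb'} \;=\; \sum_{\substack{\bb'\in R^\A\\ \bb'_\bi = \ba}} E_\bb \ast E_{\bb'},
\]
and since $E_\bb \ast E_{\bb'}$ equals $1$ when $\bb = \bb'$ and $0$ otherwise, this sum equals $1$ precisely when $\bb$ itself satisfies $\bb_\bi = \ba$, i.e., exactly the same condition as for the left-hand side. Comparing the two expressions completes the proof.

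There is no real obstacle here — the lemma is a direct unfolding of the definition of $P_\bi$ combined with the basic behaviour of standard unit tensors under contraction. The only minor subtlety is to invoke the correct associativity convention for contractions to justify the unambiguous bracketing $E_\ba \ast P_\bi \ast E_\bb$.
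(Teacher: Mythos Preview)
Your proposal is correct and follows essentially the same approach as the paper: contract both sides with an arbitrary standard unit tensor $E_{\ba'}$ (your $E_\bb$), evaluate each side using the definition of $P_\bi$ and the orthogonality of standard unit tensors, and observe that both reduce to the indicator of the condition $\ba'_\bi=\ba$. The only unnecessary flourish is your appeal to associativity of contractions over disjoint modes: by the paper's left-to-right convention, $E_\ba\ast P_\bi\ast E_{\ba'}$ already means $(E_\ba\ast P_\bi)\ast E_{\ba'}$, so the defining equation~\eqref{defn_tensor_Pi} applies directly.
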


\begin{proof}
For any $\ba'\in R^\A$, we have 
\begin{align*}
\left(\sum_{\substack{\bb\in R^\A\\ \bb_\bi=\ba}}E_\bb\right)\ast E_{\ba'}
=
\sum_{\substack{\bb\in R^\A\\ \bb_\bi=\ba}}(E_\bb\ast E_{\ba'})
=
\sum_{\substack{\bb\in R^\A\\ \bb_\bi=\ba\\ \bb=\ba'}}1
=
\left\{
\begin{array}{cl}
1 & \mbox{if }\ba'_\bi=\ba\\
0 & \mbox{otherwise}
\end{array}
\right.
=
(E_\ba\ast P_\bi)\ast E_{\ba'},
\end{align*}
from which the result follows.
\end{proof}

The next lemma shows that certain entries of a tensor in the relation $R^{\mathbb{F}_{\Mminion}(\A^{\tensor{k}})}$ (i.e., the interpretation of $R$ in the free structure of $\Mminion$ generated by $\A^\tensor{k}$) need to be zero.

\begin{lem}
\label{lem_vanishing_free_structure_0510}
Let $\Mminion$ be a linear minion of depth $d$, let $k\in \N$, let $\A$ be a $\sigma$-structure, let $R\in \sigma$ of arity $r$, and suppose $M=(M_\bi)_{\bi\in [r]^k}\in R^{\freeM(\Ak)}$. Then $E_\ba\ast M_\bi=\bzero_d$ for any $\bi\in [r]^k$, $\ba\in {A}^k$ such that $\bi\not\prec\ba$.\footnote{Recall that $\bi\prec\ba$ means that, for each $\alpha,\beta\in [k]$, $i_\alpha=i_\beta$ implies $a_\alpha=a_\beta$.}
\end{lem}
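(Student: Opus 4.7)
The plan is to unpack the definition of the free-structure relation, reduce the claim to a statement about the tensors $P_\bi$, and then show the relevant indexing set is empty using the meaning of $\prec$.

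First, since $M = (M_\bi)_{\bi\in [r]^k} \in R^{\freeM(\Ak)}$ and $\Mminion$ is linear, Remark~\ref{rem_free_structure_linear_minion} (extended to tensors as described in the paragraph right after equation~\eqref{defn_tensor_Pi}) gives a generator $Q \in \Mminion^{(|R^\A|)}$ such that $M_\bi = P_\bi \cont{1} Q$ for every $\bi \in [r]^k$. I would then fix $\bi \in [r]^k$ and $\ba \in A^k$ with $\bi \not\prec \ba$, and compute
\begin{align*}
E_\ba \ast M_\bi \;=\; E_\ba \ast (P_\bi \cont{1} Q) \;=\; (E_\ba \ast P_\bi) \cont{1} Q,
\end{align*}
where the reassociation is legitimate because the two contractions act on disjoint modes (the modes contracted in $E_\ba \ast P_\bi$ are the first $k$ modes of $P_\bi$, while the contraction with $Q$ is over the remaining mode of size $|R^\A|$).

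Next, I would invoke Lemma~\ref{lem_multiplication_rule_P} to rewrite
\begin{align*}
E_\ba \ast P_\bi \;=\; \sum_{\substack{\bb \in R^\A \\ \bb_\bi = \ba}} E_\bb.
\end{align*}
The heart of the argument is to show that the indexing set of this sum is empty whenever $\bi \not\prec \ba$. Indeed, unfolding the negation of $\prec$, there exist $\alpha, \beta \in [k]$ with $i_\alpha = i_\beta$ but $a_\alpha \neq a_\beta$; any $\bb \in R^\A$ with $\bb_\bi = \ba$ would satisfy $b_{i_\alpha} = a_\alpha$ and $b_{i_\beta} = a_\beta$, and since $i_\alpha = i_\beta$ this forces $a_\alpha = a_\beta$, a contradiction.

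Consequently $E_\ba \ast P_\bi$ is the zero tensor in $\cT^{|R^\A|}(\mathcal{S})$ (using the convention, recalled in Section~\ref{sec_prelims}, that the empty sum in a semiring is $0_\mathcal{S}$). Contracting with $Q$ then yields $E_\ba \ast M_\bi = \bzero_d$, as claimed. I do not foresee any serious obstacle here; the only subtle point is ensuring the contraction reassociation is justified and handling the empty-sum convention carefully, both of which follow from material already set up in Section~\ref{sec_tensors_tuples}.
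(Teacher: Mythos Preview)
Your proof is correct and follows essentially the same approach as the paper: both obtain the generator $Q$ with $M_\bi = P_\bi \cont{1} Q$, apply Lemma~\ref{lem_multiplication_rule_P} to get $E_\ba \ast P_\bi = \sum_{\bb\in R^\A,\,\bb_\bi=\ba} E_\bb$, and then observe that the sum is empty because $\bb_\bi=\ba$ would force $\bi\prec\ba$. Your version is slightly more explicit about the reassociation of contractions and the empty-sum convention, but the argument is the same.
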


\begin{proof}
Observe that there exists $Q\in\Mminion^{(|R^\A|)}$ such that $M_\bi=Q_{/\pi_\bi}$ for each $\bi\in [r]^k$. Using Lemma~\ref{lem_multiplication_rule_P}, we obtain
\begin{align*}
E_\ba\ast M_\bi 
&=
E_\ba\ast Q_{/\pi_\bi}
=
E_\ba\ast (P_\bi\cont{1} Q)
=
(E_\ba\ast P_\bi)\ast Q
=
\sum_{\substack{\bb\in R^\A\\ \bb_\bi=\ba}}E_\bb \ast Q
=
\sum_{\substack{\bb\in\emptyset}}E_\bb \ast Q
=
\bzero_d,
\end{align*}
where the fifth equality follows from the fact that $\bb_\bi=\ba$ implies $\bi\prec\ba$;
indeed, in that case, $i_\alpha=i_\beta$ implies $a_\alpha=b_{i_\alpha}=b_{i_\beta}=a_\beta$.
\end{proof}

It follows from Lemma~\ref{lem_vanishing_free_structure_0510} and the previous discussion that, if $\Mminion$ is linear, $\freeM(\Ak)$ can be visualised as a space of sparse and highly symmetric tensors, whose nonzero entries form regular patterns. This feature becomes more
evident for higher values of the level $k$. In turn, the structure of
$\freeM(\Ak)$ is reflected in the properties of the homomorphisms $\xi$ from
$\Xk$ to it -- which, by virtue of Definition~\ref{defn_hierarchy_minion_test},
are precisely the solutions sought by $\Test{\Mminion}{k}$.  
Next, we highlight certain features of such homomorphisms that will be used to prove Theorem~\ref{thm_main_multilinear_tests} in Section~\ref{sec_proof_of_main_thm}.

\begin{lem}
\label{cor_vanishing_contractions}
Let $\Mminion$ be a linear minion, let $k\in\N$, let $\X,\A$ be two $k$-enhanced $\sigma$-structures, and let $\xi:\X^{\tensor{k}}\to\freeM(\Ak)$ be a homomorphism. Then $E_\ba\ast \xi(\bx)=\bzero_d$ for any $\bx\in X^k$, $\ba\in A^k$ such that $\bx\not\prec\ba$.  
\end{lem}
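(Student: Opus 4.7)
\textbf{Proof plan for Lemma~\ref{cor_vanishing_contractions}.}

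The plan is to leverage the $k$-enhancement of $\X$ in order to feed the tensor power of an arbitrary tuple $\bx\in X^k$ into the symbol $R_k$, transport it through $\xi$ to obtain a tuple in the corresponding relation of $\freeM(\Ak)$, and then invoke Lemma~\ref{lem_vanishing_free_structure_0510}. The subtle point, which I address below, is that Lemma~\ref{lem_vanishing_free_structure_0510} needs to be applied to a cleverly chosen block index $\bi\in[k]^k$, not to the trivial one.

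First, since $\X$ is $k$-enhanced, $R_k^\X=X^k$, so any $\bx\in X^k$ lies in $R_k^\X$, and therefore $\bx^{\tensor{k}}\in R_k^{\X^{\tensor{k}}}$ by Definition~\ref{defn_tensorisation}. Because $\xi$ is a homomorphism, the tuple obtained by applying $\xi$ entrywise to $\bx^{\tensor{k}}$ lies in $R_k^{\freeM(\Ak)}$. Recalling that the $\bi$-th entry of $\bx^{\tensor{k}}$ is $\bx_\bi=(x_{i_1},\dots,x_{i_k})$, this means the block tensor $M=(M_\bi)_{\bi\in[k]^k}$ with $M_\bi=\xi(\bx_\bi)$ belongs to $R_k^{\freeM(\Ak)}$. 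Lemma~\ref{lem_vanishing_free_structure_0510} (applied with $R=R_k$, $r=k$, $R_k^\A=A^k$) then yields $E_\ba\ast\xi(\bx_\bi)=\bzero_d$ for every $\bi\in[k]^k$ and every $\ba\in A^k$ with $\bi\not\prec\ba$.

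Now fix $\bx\in X^k$ and $\ba\in A^k$ with $\bx\not\prec\ba$. By definition, there exist indices $\alpha,\beta\in[k]$ with $x_\alpha=x_\beta$ and $a_\alpha\neq a_\beta$. Define $\bi=(i_1,\dots,i_k)\in[k]^k$ by setting $i_\alpha=i_\beta=\alpha$ and $i_j=j$ for $j\notin\{\alpha,\beta\}$. Then $x_{i_j}=x_j$ for all $j\in[k]$ (for $j=\beta$ this uses $x_\alpha=x_\beta$), so $\bx_\bi=\bx$ and hence $M_\bi=\xi(\bx)$. Moreover, $i_\alpha=i_\beta$ while $a_\alpha\neq a_\beta$, so $\bi\not\prec\ba$. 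Applying the vanishing conclusion of the previous paragraph with this $\bi$ yields $E_\ba\ast\xi(\bx)=\bzero_d$, as required.

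The only mildly non-obvious step is the construction of $\bi$: one might first be tempted to use $\bi=(1,\dots,k)$, which gives $\bx_\bi=\bx$ but always satisfies $\bi\prec\ba$, so Lemma~\ref{lem_vanishing_free_structure_0510} is vacuous. The observation that drives the proof is that whenever $\bx$ has a repetition ``witnessed'' by $\ba$, one can collapse the corresponding coordinates of the index tuple $\bi$ — still leaving $\bx_\bi=\bx$ intact — so as to force $\bi\not\prec\ba$. Beyond this small combinatorial trick, the argument is a direct application of the $k$-enhancement and of Lemma~\ref{lem_vanishing_free_structure_0510}.
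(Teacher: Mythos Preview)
Your proof is correct and follows essentially the same approach as the paper: use $k$-enhancement to place $\bx^{\tensor{k}}$ in $R_k^{\X^{\tensor{k}}}$, apply Lemma~\ref{lem_vanishing_free_structure_0510} to $\xi(\bx^{\tensor{k}})\in R_k^{\freeM(\Ak)}$, and then choose the index tuple $\bi'$ obtained from $(1,\dots,k)$ by replacing the $\beta$-th entry with $\alpha$ so that $\bx_{\bi'}=\bx$ while $\bi'\not\prec\ba$. Your construction of $\bi$ and the paper's $\bi'$ coincide, and your commentary on why the naive choice $\bi=(1,\dots,k)$ fails is exactly the point.
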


\begin{proof}
From $\bx\in X^k=R_k^\X$, we derive $\bx^\tensor{k}\in R_k^{\X^\tensor{k}}$; since $\xi$ is a homomorphism, this yields $\xi(\bx^\tensor{k})\in R_k^{\freeM(\Ak)}$. Writing $\xi(\bx^\tensor{k})$ in block form as $\xi(\bx^\tensor{k})=(\xi(\bx_\bi))_{\bi\in [k]^k}$ and applying Lemma~\ref{lem_vanishing_free_structure_0510}, we obtain $E_\ba\ast\xi(\bx_\bi)=\bzero_d$ for any $\bi\in [k]^k$ such that $\bi\not\prec\ba$. Write $\bx=(x_1,\dots,x_k)$ and $\ba=(a_1,\dots,a_k)$. Since $\bx\not\prec\ba$, there exist $\alpha,\beta\in [k]$ such that $x_\alpha=x_\beta$ and $a_\alpha\neq a_\beta$. Let $\bi'\in [k]^k$ be the tuple obtained from $(1,\dots,k)$ by replacing the $\beta$-th entry with $\alpha$. Observe that $\bx_{\bi'}=\bx$ and $\bi'\not\prec\ba$. Hence,
\begin{align*}
\bzero_d=E_\ba\ast\xi(\bx_{\bi'})=E_\ba\ast\xi(\bx),
\end{align*}
as required.
\end{proof}

Using Lemma~\ref{cor_vanishing_contractions}, we obtain some more information on the image of a homomorphism from $\X^{\tensor{k}}$ to $\freeM(\Ak)$. Given a signature $\sigma$, we let $\armax(\sigma)$ denote the maximum arity of a relation symbol in $\sigma$. 

\begin{lem}
\label{lem_ea_ast_Q}
Let $\Mminion$ be a linear minion, let $k\in\N$, let $\X,\A$ be two $k$-enhanced $\sigma$-structures such that $k\geq\armax(\sigma)$, and let $\xi:\X^{\tensor{k}}\to\freeM(\Ak)$ be a homomorphism. For $R\in\sigma$ of arity $r$, let $\bx\in R^\X$ and $\ba\in R^\A$ be such that $\bx\not\prec\ba$. Let $Q\in\Mminion^{(|R^\A|)}$ be such that $Q_{/\pi_\bi}=\xi(\bx_\bi)$ for each $\bi\in [r]^k$. Then $E_\ba\ast Q=\bzero_d$.
\end{lem}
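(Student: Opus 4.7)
The plan is to exploit the hypothesis $k\geq\armax(\sigma)\geq r$ to choose an index tuple $\bi\in [r]^k$ that is surjective as a function $[k]\to[r]$ (e.g.~$\bi=(1,2,\dots,r,1,1,\dots,1)$), and then extract $E_\ba\ast Q$ from the already-known vanishing statement of Lemma~\ref{cor_vanishing_contractions} applied to the pair $(\bx_\bi,\ba_\bi)$.

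First, I would check that $\bx_\bi\not\prec\ba_\bi$. Since $\bx\not\prec\ba$, there exist $\alpha,\beta\in[r]$ with $x_\alpha=x_\beta$ and $a_\alpha\neq a_\beta$. By surjectivity of $\bi$, there are positions $p,q\in[k]$ with $i_p=\alpha$ and $i_q=\beta$; then $(\bx_\bi)_p=x_\alpha=x_\beta=(\bx_\bi)_q$ while $(\ba_\bi)_p=a_\alpha\neq a_\beta=(\ba_\bi)_q$, so indeed $\bx_\bi\not\prec\ba_\bi$. Lemma~\ref{cor_vanishing_contractions} then yields
\[
E_{\ba_\bi}\ast\xi(\bx_\bi)=\bzero_d.
\]

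Next, I would use the hypothesis $Q_{/\pi_\bi}=P_\bi\cont{1}Q=\xi(\bx_\bi)$ and rearrange the contraction. The tensor $E_{\ba_\bi}$ contracts with the first $k$ modes of $P_\bi$ (of total length $n\cdot\bone_k$), while $Q$ contracts with its last mode (of length $|R^\A|$); these mode sets are disjoint, so the contractions commute and
\[
E_{\ba_\bi}\ast\xi(\bx_\bi)=E_{\ba_\bi}\ast(P_\bi\cont{1}Q)=(E_{\ba_\bi}\ast P_\bi)\ast Q.
\]
By Lemma~\ref{lem_multiplication_rule_P}, $E_{\ba_\bi}\ast P_\bi=\sum_{\bb\in R^\A,\,\bb_\bi=\ba_\bi}E_\bb$. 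The surjectivity of $\bi$ now plays its second role: for any $j\in [r]$, choosing $p\in[k]$ with $i_p=j$ gives $b_j=(\bb_\bi)_p=(\ba_\bi)_p=a_j$, so the only $\bb\in R^\A$ with $\bb_\bi=\ba_\bi$ is $\bb=\ba$ itself. Hence $E_{\ba_\bi}\ast P_\bi=E_\ba$, and combining the two displays gives $E_\ba\ast Q=\bzero_d$, completing the proof.

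No step is really an obstacle; the only subtle point is making the associativity of the contraction precise, since the paper's left-to-right convention does not \emph{a priori} allow rewriting $E_{\ba_\bi}\ast(P_\bi\cont{1}Q)$ as $(E_{\ba_\bi}\ast P_\bi)\ast Q$. The remark after the definition of contraction legitimises this move because the two contractions act on disjoint mode sets of $P_\bi$. The rest is routine bookkeeping, and the role of the arity bound $k\geq\armax(\sigma)$ is exactly to guarantee the existence of a surjective index tuple in $[r]^k$.
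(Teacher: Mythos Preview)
Your proof is correct and follows essentially the same route as the paper's: pick a surjective index tuple $\bi\in[r]^k$ (the paper uses $(1,\dots,r,r,\dots,r)$ rather than your $(1,\dots,r,1,\dots,1)$, which is immaterial), verify $\bx_\bi\not\prec\ba_\bi$, apply Lemma~\ref{cor_vanishing_contractions}, and then use Lemma~\ref{lem_multiplication_rule_P} together with surjectivity of $\bi$ to collapse the resulting sum to $E_\ba\ast Q$. Your explicit remark on the associativity of the contraction is a step the paper performs silently.
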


\begin{proof}
Write $\bx=(x_1,\dots,x_r)$ and $\ba=(a_1,\dots,a_r)$. Since $\bx\not\prec\ba$, there exist $\alpha,\beta\in [r]$ such that $x_\alpha=x_\beta$ and $a_\alpha\neq a_\beta$. Using that $k\geq r$, we can take the tuple $\bi=(1,2,\dots,r,r,\dots,r)\in [r]^k$. Consider $\bx_\bi\in X^k$, $\ba_\bi\in A^k$. Notice that $x_{i_\alpha}=x_\alpha=x_\beta=x_{i_\beta}$ and $a_{i_\alpha}=a_\alpha\neq a_\beta=a_{i_\beta}$, so $\bx_\bi\not\prec\ba_\bi$. Applying Lemma~\ref{cor_vanishing_contractions}, we find 
\begin{align*}
\bzero_d=E_{\ba_\bi}\ast\xi(\bx_\bi)=E_{\ba_\bi}\ast Q_{/\pi_\bi}
=
E_{\ba_\bi}\ast P_\bi \ast Q.
\end{align*}
Using Lemma~\ref{lem_multiplication_rule_P}, we conclude that 
\begin{align*}
\bzero_d
&=
\sum_{\substack{\bb\in R^\A\\\bb_\bi=\ba_\bi}}E_\bb\ast Q
=
E_\ba\ast Q,
\end{align*}
where the last equality follows from the fact that $\bb_\bi=\ba_\bi$ if and only if $\bb=\ba$.
\end{proof}

If $\X$ and $\A$ are $k$-enhanced $\sigma$-structures, any homomorphism $\xi$ from $\Xk$ to $\freeM(\Ak)$ must satisfy certain symmetries
that ultimately depend on the fact that $\xi$ preserves $R_k$. As shown below in Proposition~\ref{lem_a_symmetry}, these symmetries can be concisely expressed through a list of tensor equations. Given a tuple $\bi\in [k]^k$, we let $\Pi_\bi\in \cT^{n\cdot\bone_{2k}}(\mathcal{S})$ be the tensor defined by 
\begin{align}
\label{defn_tensor_Pi_i}
E_\ba\ast \Pi_\bi\ast E_{\ba'}=
\left\{
\begin{array}{ll}
1 & \mbox{if }\ba'_\bi=\ba\\
0 & \mbox{otherwise}
\end{array}
\right.
\hspace{1cm}
\forall \ba,\ba'\in {A}^k.
\end{align}
Observe that, unlike for the tensor $P_\bi$ defined in~\eqref{defn_tensor_Pi}, the tuples $\ba$ and $\ba'$ have now the same role as coordinates of $\Pi_\bi$.
The tensor defined above satisfies the following simple identity, which should be compared to the one in Lemma~\ref{lem_multiplication_rule_P} concerning $P_\bi$. 
\begin{lem}
\label{lem_multiplication_rule}
For any $\ba\in {A}^k$ and $\bi\in [k]^k$, 
\begin{align*}
E_\ba\ast \Pi_\bi=\sum_{\substack{\bb\in {A}^k\\\bb_\bi=\ba}}E_\bb.
\end{align*}
\end{lem}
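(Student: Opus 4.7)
The plan is to establish the claimed equality entrywise, mirroring the proof of Lemma~\ref{lem_multiplication_rule_P} but using the defining identity \eqref{defn_tensor_Pi_i} of $\Pi_\bi$ in place of \eqref{defn_tensor_Pi}. Both sides of the purported equality are elements of $\cT^{n\cdot\bone_k}(\mathcal{S})$, so they coincide if and only if they agree after contraction (from the right) with $E_{\ba'}$ for every $\ba' \in A^k$.

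First I would handle the right-hand side. Using bilinearity of the contraction operation together with the fact that $E_\bb \ast E_{\ba'}$ equals $1_{\mathcal{S}}$ when $\bb = \ba'$ and $0_{\mathcal{S}}$ otherwise, one gets
\[
\Bigl(\sum_{\substack{\bb \in A^k \\ \bb_\bi = \ba}} E_\bb\Bigr) \ast E_{\ba'} \;=\; \sum_{\substack{\bb \in A^k \\ \bb_\bi = \ba}} E_\bb \ast E_{\ba'} \;=\; \begin{cases} 1 & \text{if } \ba'_\bi = \ba, \\ 0 & \text{otherwise.} \end{cases}
\]

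Next I would compute the corresponding contraction on the left-hand side. Since the two contractions in $E_\ba \ast \Pi_\bi \ast E_{\ba'}$ are taken over disjoint sets of modes of $\Pi_\bi$, the order in which they are performed is irrelevant (as noted in Section~\ref{sec_tensors_tuples}), so $(E_\ba \ast \Pi_\bi) \ast E_{\ba'} = E_\ba \ast \Pi_\bi \ast E_{\ba'}$. By the defining equation \eqref{defn_tensor_Pi_i} of $\Pi_\bi$, this scalar equals $1$ if $\ba'_\bi = \ba$ and $0$ otherwise, matching the expression obtained for the right-hand side. Since $\ba'$ was arbitrary, the two tensors agree on every entry and the lemma follows.

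There is no substantive obstacle: the argument is a one-line bookkeeping check that boils down to the observation that both sides have the same support (namely $\{\bb \in A^k : \bb_\bi = \ba\}$) and constant value $1$ on that support. The only mild point worth being careful about is invoking associativity of contraction, which holds here precisely because the contractions are over disjoint modes.
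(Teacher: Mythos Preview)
Your proposal is correct and follows essentially the same approach as the paper: both verify the identity entrywise by contracting each side with $E_{\ba'}$ for arbitrary $\ba'\in A^k$, reducing to the defining equation~\eqref{defn_tensor_Pi_i} of $\Pi_\bi$. The paper's proof is slightly terser and does not explicitly comment on the associativity point, but the argument is the same.
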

\begin{proof}
For any $\ba'\in {A}^k$, we have
\begin{align*}
\left(\sum_{\substack{\bb\in {A}^k\\\bb_\bi=\ba}}E_\bb\right)\ast E_{\ba'}=
\sum_{\substack{\bb\in {A}^k\\\bb_\bi=\ba}}(E_\bb\ast E_{\ba'})
=
\sum_{\substack{\bb\in {A}^k\\\bb_\bi=\ba\\ \bb=\ba'}}1=
\left\{
\begin{array}{ll}
1 & \mbox{if }\ba'_\bi=\ba\\
0 & \mbox{otherwise}
\end{array}
\right.
=
\left(E_\ba\ast \Pi_\bi\right)\ast E_{\ba'},
\end{align*}
from which the result follows.
\end{proof}

\begin{rem}
\label{rem_Pi_i_equals_P_i_sometimes}
It is clear from the expressions~\eqref{defn_tensor_Pi} and~\eqref{defn_tensor_Pi_i} that, for any $\bi\in [k]^k$, $\Pi_\bi$ coincides with the tensor $P_\bi$ associated with the relation symbol $R_k$. 
\end{rem}

\begin{prop}
\label{lem_a_symmetry}
Let $\Mminion$ be a linear minion, let $k\in\N$, let $\X,\A$ be two $k$-enhanced
  $\sigma$-structures, and let $\xi:X^k\to\Mminion^{(n^k)}$ be a map. Then, $\xi$ preserves $R_k$ (interpreted as a symbol in $\sigma^{\tensor{k}}$) if and only if 
\begin{align}
\label{eqn_consistency}
\xi(\bx_\bi)=\Pi_\bi\cont{k} \xi(\bx)
\hspace{2cm}
\mbox{for any }\; \bx\in X^k, \bi\in [k]^k.
\end{align}
\end{prop}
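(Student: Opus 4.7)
The plan is to unfold the definitions of both sides and reduce the equivalence to the fact that $\Pi_{(1,2,\dots,k)}$ behaves as the identity under the contraction $\cont{k}$. Since $\X$ is $k$-enhanced, $R_k^\X = X^k$, and by Definition~\ref{defn_tensorisation} the elements of $R_k^{\X^\tensor{k}}$ are precisely the tensors $\bx^\tensor{k}$ for $\bx \in X^k$, whose $\bi$-th component (for $\bi \in [k]^k$) is $\bx_\bi$. Hence $\xi$ preserves $R_k$ if and only if, for every $\bx \in X^k$, the $k^k$-tuple $(\xi(\bx_\bi))_{\bi \in [k]^k}$ belongs to $R_k^{\freeM(\A^\tensor{k})}$. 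By the definition of the free structure, together with the linear-minion description in Definition~\ref{defn_linear_minion} and Remark~\ref{rem_Pi_i_equals_P_i_sometimes} (which in the $R_k$-case lets us express the minor operation as a contraction against $\Pi_\bi$), this translates into the condition that there exist some $Q \in \Mminion^{(n^k)}$ with $\xi(\bx_\bi) = \Pi_\bi \cont{k} Q$ for every $\bi \in [k]^k$.

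The single ingredient I need beyond this translation is that, for $\bi_0 = (1, 2, \dots, k)$ and any tensor $Q$ of the appropriate shape, $\Pi_{\bi_0} \cont{k} Q = Q$. This follows directly from Lemma~\ref{lem_multiplication_rule}, since $\bb_{\bi_0} = \ba$ is equivalent to $\bb = \ba$, so $E_\ba \ast \Pi_{\bi_0} = E_\ba$; a one-line contraction then recovers $Q$ entrywise.

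With this in place, both implications are immediate. For the ``if'' direction, assume~\eqref{eqn_consistency}; for each $\bx \in X^k$, setting $Q := \xi(\bx) \in \Mminion^{(n^k)}$ yields $\xi(\bx_\bi) = \Pi_\bi \cont{k} Q$ for every $\bi$, which is exactly what preservation of $R_k$ requires. For the ``only if'' direction, assume $\xi$ preserves $R_k$ and fix $\bx \in X^k$; let $Q$ be the witness supplied by the preservation property. Specialising $\xi(\bx_\bi) = \Pi_\bi \cont{k} Q$ to $\bi = \bi_0$, and using that $\bx_{\bi_0} = \bx$ together with the identity $\Pi_{\bi_0} \cont{k} Q = Q$, forces $Q = \xi(\bx)$; substituting back produces~\eqref{eqn_consistency}.

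The only subtlety is bookkeeping: one must identify the tensor $\Pi_\bi \in \cT^{n\cdot\bone_{2k}}(\mathcal{S})$ (after grouping its first and last $k$ modes) with the $n^k \times n^k$ matrix $P$ from Definition~\ref{defn_linear_minion} associated to the map $\pi_\bi \colon A^k \to A^k$, $\ba \mapsto \ba_\bi$, and verify that the minor operation on $Q$ coincides with $\Pi_\bi \cont{k} (\cdot)$. Once these identifications are made, the proof reduces essentially to the observation about $\Pi_{(1,\dots,k)}$ being an identity for $\cont{k}$, and no further technical obstacle remains.
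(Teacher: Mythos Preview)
Your proposal is correct and follows essentially the same approach as the paper's proof: both arguments unfold the free-structure definition via Remark~\ref{rem_Pi_i_equals_P_i_sometimes} to rewrite the minor operation as $\Pi_\bi\cont{k}(\cdot)$, and both pivot on the observation that $\Pi_{(1,\dots,k)}$ acts as the identity under $\cont{k}$, specialising to $\bi=(1,\dots,k)$ to identify the witness $Q$ with $\xi(\bx)$. The paper's proof is slightly terser in that it simply asserts the identity property of $\Pi_{(1,\dots,k)}$ rather than deriving it from Lemma~\ref{lem_multiplication_rule}, but the logical content is the same.
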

\begin{proof}
Suppose that $\xi$ preserves $R_k$, and take $\bx\in X^k=R_k^\X$. It follows that $\bx^\tensor{k}\in R_k^{\Xk}$, so $\xi(\bx^\tensor{k})\in R_k^{\freeM(\Ak)}$. This means that there exists $Q\in\Mminion^{(|R_k^\A|)}=\Mminion^{(n^k)}$ such that $\xi(\bx_\bi)=Q_{/\pi_\bi}
=\Pi_\bi\cont{k} Q$ for each $\bi\in [k]^k$
(where we have used Remark~\ref{rem_Pi_i_equals_P_i_sometimes}). 
Consider now the tuple $\bj=(1,\dots,k)\in [k]^k$, and observe that $\bx_{\bj}=\bx$. Noticing that the contraction by $\Pi_{\bj}$ acts as the identity, we conclude that 
\begin{align*}
\xi(\bx)=\xi(\bx_{\bj})=\Pi_{\bj}\cont{k}{Q}={Q},
\end{align*}
which concludes the proof of~\eqref{eqn_consistency}.

Conversely, suppose~\eqref{eqn_consistency} holds and take $\bx\in R_k^\X=X^k$. We need to show that $\xi(\bx^\tensor{k})\in R_k^{\freeM(\Ak)}$. Take $Q=\xi(\bx)\in \Mminion^{(n^k)}=\Mminion^{(|R_k^\A|)}$. Using again Remark~\ref{rem_Pi_i_equals_P_i_sometimes}, we observe that, for any $\bi\in [k]^k$,
\begin{align*}
\xi(\bx_\bi)
&=
\Pi_\bi\cont{k}\xi(\bx)
=
\xi(\bx)_{/\pi_\bi}
=
Q_{/\pi_\bi},
\end{align*}
whence the result follows.
\end{proof}

Proposition~\ref{lem_a_symmetry}
distils the requirements of the $\BW^k$, $\SA^k$, $\AIP^k$, $\SoS^k$, and $\BA^k$ hierarchies enforcing compatibility between partial assignments\footnote{Cf.~the ``closure under restriction'' property of $\BW^k$ and the requirements $\clubsuit 2$ and $\spadesuit 3$ in Section~\ref{sec_relaxations_hierarchies} applied to $R=R_k$.} from $\X$ to $\A$ into a single list of tensor equations 
\[\xi(\bx_\bi)=\Pi_\bi\cont{k} \xi(\bx)
\]
for $\bx\in X^k$ and $\bi\in [k]^k$.
For $k=1$, the equation is vacuous, since in this case $\Pi_\bi$ is the identity matrix and $\bx_\bi=\bx$. As $k$ increases, it produces a progressively richer system of symmetries that must be satisfied by $\xi$, which corresponds to a progressively stronger relaxation.

\section{Conic minions}
\label{sec_hierarchies_of_conic_minion_tests}
A primary message of this work is that the tensorisation construction establishes a correspondence between the algebraic properties of a minion and the algorithmic properties of the hierarchy of tests built on the minion. For example, we have seen that if the minion is linear some general properties of the solutions of the hierarchy can be deduced by studying the structure of the tensors in $\freeM(\Ak)$. Now, the bounded-width hierarchy has the property that it only seeks assignments that are partial homomorphisms; similarly, the Sherali--Adams, Sum-of-Squares, and $\BA^k$ hierarchies only assign a positive weight to solutions satisfying local constraints.
The next definition identifies the minion property guaranteeing this algorithmic feature.
\begin{defn}
\label{defn_conic_minion}
A linear minion $\Mminion$ of depth $d$ is \emph{conic} if, for any $L\in\N$ and for any $M\in\Mminion^{(L)}$,
$(i)$
$M\neq O_{L,d}$, and
$(ii)$
for any $V\subseteq [L]$, the following implication is true:\footnote{As usual, the sum, product, $0$, and $1$ operations appearing in this definition are to be meant in the semiring $\mathcal{S}$ associated with the linear minion $\Mminion$.}
\begin{align*}
\begin{array}{ll}
\sum_{i\in V}M^T\be_i=\bzero_d \quad\Rightarrow\quad M^T\be_i=\bzero_d\;\;\forall i\in V.
\end{array}
\end{align*}
\end{defn}
Paraphrasing Definition~\ref{defn_conic_minion}, a linear minion $\Mminion$ is conic if any matrix in $\Mminion$ is nonzero and has the property that, whenever some of its rows sum up to the zero vector, each of those rows is the zero vector. It turns out that all minions appearing in Proposition~\ref{prop_examples_linear_minions} are conic, with the notable exception of $\Zaff$.
\begin{prop}
\label{prop_some_minions_are_conic}
The minions $\Hminion$, $\Qconv$, $\Sminion$, and $\Mblpaip$ are conic, while the minion $\Zaff$ is not.
\end{prop}
\begin{proof}
The fact that $\Qconv$ is conic trivially follows by noting that its elements
  are nonnegative vectors whose entries sum up to $1$. 
  
  Similarly, using the
  description of $\Hminion$ as a linear minion on the semiring
  $(\{0,1\},\vee,\wedge)$ (cf.~the proof of
  Proposition~\ref{prop_examples_linear_minions}), the fact that $\Hminion$ is
  conic follows from the fact that $\bigvee_{i\in V}x_i=0$ means that $x_i=0$
  for each $i\in V$. (Observe also that the vectors in $\Hminion$ are nonzero,
  as they are the indicator vectors of nonempty sets.) 
  
  To show that $\Sminion$
  is conic, take $L\in\N$ and $M\in\Sminion^{(L)}$, and notice first that $M\neq
  O_{L,\ale}$ by (C3) in Definition~\ref{defn_minion_SDP}.
Take now $V\subseteq [L]$. If $\sum_{i\in V}M^T\be_i=\bzero_\ale$, using that
  $MM^T$ is a diagonal matrix by (C2), we find
\begin{align*}
0
=
(\sum_{i\in V}M^T\be_i)^T(\sum_{j\in V}M^T\be_j)
=
\sum_{i,j\in V}\be_i^TMM^T\be_j
=
\sum_{i\in V}\be_i^TMM^T\be_i
=
\sum_{i\in V}\|M^T\be_i\|^2,
\end{align*}
which means that $M^T\be_i=\bzero_\ale$ for any $i\in V$, as required.

As for $\Mblpaip$, we shall see in Section~\ref{sec_semi_direct_product} (cf.~Example~\ref{example_minion_BA_semidirect}) that this minion can be obtained as the semi-direct product of $\Qconv$ and $\Zaff$. Then, the fact that $\Mblpaip$ is conic is a direct consequence of the fact that semi-direct products of minions are always conic (cf.~Proposition~\ref{prop_semidirect_product_makes_sense}).

Finally, the element $(1,-1,1)\in\Zaff$ witnesses that $\Zaff$ is not conic.
\end{proof}

\begin{rem}
It is not hard to verify that also the minion  capturing the power of the $\CLAP$ algorithm from~\cite{cz23sicomp:clap} is linear (with $\mathcal{S}=\Q$ and ${d}=\ale$) and, in fact, conic. Since an algorithmic hierarchy built on top of $\CLAP$ has never been studied in the literature, we do not consider the analogue of Theorem~\ref{thm_main_multilinear_tests} for $\CLAP$.
\end{rem}

\begin{rem}
\label{rem_abstract_conic_minion}
We point out that the concept of conic minions can be extended to arbitrary abstract minions (as opposed to just linear) via the notion of essential coordinates. Given an element $M$ of a minion $\Mminion$ of arity, say, $L$, we say that a coordinate $i\in [L]$ is \textit{essential} for $M$ if there exist an integer $L'\in \N$ and two minor maps $\pi,\pi':[L]\to[L']$ such that $\pi|_{[L]\setminus\{i\}}=\pi'|_{[L]\setminus\{i\}}$ but $M_{/\pi}\neq M_{/\pi'}$. (This straightforwardly extends the analogous notion described in~\cite[Definition~5.14]{BBKO21} for function minions.) Now, if $\Mminion$ is linear of width $d$, it is easy to check that a coordinate $i$ is essential for $M\in\Mminion$ precisely when $M^T\be_i\neq\bzero_d$. Hence, the requirements $(i)$ and $(ii)$ in Definition~\ref{defn_conic_minion} can be extended to abstract minions as follows: $(i)$ asks that each $M\in\Mminion$ should have at least one essential coordinate, and $(ii)$ asks that the image of each essential coordinate of $M$ under a minor map $\pi$ should be essential in $M_{/\pi}$. 
\end{rem}

It turns out that this simple property guarantees that the hierarchies of tests built on conic minions only look at assignments yielding partial homomorphisms, as we shall see in Proposition~\ref{lem_partial_homo}. It also follows that conic hierarchies are not fooled by small instances: Proposition~\ref{prop_sherali_adams_exact} establishes that the $k$-th level of such hierarchies is able to correctly classify instances on $k$ (or fewer) elements, as it is well known for the bounded-width, Sherali--Adams, and Sum-of-Squares hierarchies; we may informally express this property by saying that conic hierarchies are ``sound in the limit''.
Moreover, we shall see in the next section that any linear minion can be transformed into a conic minion -- whose hierarchy enjoys the features mentioned above -- via the \emph{semi-direct product} construction.

First of all, we prove that Lemma~\ref{lem_ea_ast_Q} can be slightly strengthened if we are dealing with conic minions, in that the level $k$ for which it holds can be decreased down to $2$. As a consequence, the algebraic description of the Sherali--Adams and Sum-of-Squares hierarchies in terms of the tensorisation construction can be extended to lower levels, cf.~Remark~\ref{rem_for_SA_and_SoS_no_need_high_level}. As in Section~\ref{sec_hierarchies_of_linear_minion_tests}, the letter $d$ shall denote the depth of the relevant minion in all results of this section. 
\begin{lem}
\label{lem_ea_ast_Q_conic}
Let $\Mminion$ be a conic minion, let $2\leq k\in\N$, let $\X,\A$ be two $k$-enhanced $\sigma$-structures, and let $\xi:\X^{\tensor{k}}\to\freeM(\Ak)$ be a homomorphism. For $R\in\sigma$ of arity $r$, let $\bx\in R^\X$ and $\ba\in R^\A$ be such that $\bx\not\prec\ba$. Let $Q\in\Mminion^{(|R^\A|)}$ be such that $Q_{/\pi_\bi}=\xi(\bx_\bi)$ for each $\bi\in [r]^k$. Then $E_\ba\ast Q=\bzero_d$.
\end{lem}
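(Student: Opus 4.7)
The plan is to mimic the proof of Lemma~\ref{lem_ea_ast_Q}, but exploit conicity to compensate for the fact that we can no longer choose a tuple $\bi \in [r]^k$ whose entries exhaust $[r]$ (which is what forced $k \geq \armax(\sigma)$ there). The key insight is that the conic condition allows us to pass from a \emph{sum} of rows of $Q$ being zero to each individual summand being zero -- so we do not need to isolate the single row indexed by $\ba$ via a cleverly chosen $\bi$.

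First, since $\bx\not\prec\ba$, pick $\alpha,\beta\in [r]$ with $x_\alpha=x_\beta$ and $a_\alpha\neq a_\beta$; note $\alpha\neq\beta$, so in particular $r\geq 2$. Using that $k\geq 2$, form the tuple $\bi=(\alpha,\beta,\beta,\dots,\beta)\in [r]^k$. Then $\bx_\bi=(x_\beta,x_\beta,\dots,x_\beta)$ is constant while $\ba_\bi=(a_\alpha,a_\beta,\dots,a_\beta)$ has distinct first and second entries, so $\bx_\bi\not\prec\ba_\bi$. Apply Lemma~\ref{cor_vanishing_contractions} to get $E_{\ba_\bi}\ast\xi(\bx_\bi)=\bzero_d$. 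Expanding $\xi(\bx_\bi)=Q_{/\pi_\bi}=P_\bi\cont{1}Q$ and invoking Lemma~\ref{lem_multiplication_rule_P} rewrites this as
\begin{align*}
\bzero_d \;=\; \sum_{\substack{\bb\in R^\A\\ \bb_\bi=\ba_\bi}} E_\bb\ast Q.
\end{align*}

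The set $V=\{\bb\in R^\A:\bb_\bi=\ba_\bi\}=\{\bb\in R^\A: b_\alpha=a_\alpha,\; b_\beta=a_\beta\}$ contains $\ba$ but in general also other tuples, so unlike in the proof of Lemma~\ref{lem_ea_ast_Q} we cannot conclude directly. Here is where conicity enters: identifying $R^\A$ with $[|R^\A|]$ so that $E_\bb\ast Q=Q^T\be_\bb$, the displayed equation says exactly that a sum of rows of $Q\in\Mminion^{(|R^\A|)}$ indexed by $V$ vanishes. Definition~\ref{defn_conic_minion} then forces $E_\bb\ast Q=\bzero_d$ for \emph{every} $\bb\in V$, and in particular for $\bb=\ba$, as required.

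The only delicate point -- which is really where the gain over Lemma~\ref{lem_ea_ast_Q} comes from -- is verifying that the ``offending'' pair of indices $\alpha,\beta$ witnessing $\bx\not\prec\ba$ can be packed into a length-$k$ tuple in a way that (a) keeps $\bx_\bi$ a constant tuple (so that Lemma~\ref{cor_vanishing_contractions} still applies in the $k$-enhanced setting) and (b) leaves $\ba$ inside the resulting set $V$ so that conicity actually yields information about $E_\ba\ast Q$. Both requirements are handled by the choice $\bi=(\alpha,\beta,\beta,\dots,\beta)$, which only consumes two coordinates and is therefore available as soon as $k\geq 2$.
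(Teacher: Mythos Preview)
Your proof is correct and follows essentially the same approach as the paper's: pick the offending pair $\alpha,\beta$, build a length-$k$ tuple from them (the paper uses $(\alpha,\dots,\alpha,\beta)$ rather than your $(\alpha,\beta,\dots,\beta)$, an immaterial difference), apply Lemma~\ref{cor_vanishing_contractions} and Lemma~\ref{lem_multiplication_rule_P} to obtain a vanishing sum of rows of $Q$, and then invoke conicity to isolate the row $E_\ba\ast Q$. One small expository quibble: in your point (a), what Lemma~\ref{cor_vanishing_contractions} actually requires is $\bx_\bi\not\prec\ba_\bi$, not constancy of $\bx_\bi$ per se; constancy is just the mechanism by which that failure of $\prec$ is witnessed here.
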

\begin{proof}
Take $\alpha,\beta\in [r]$ such that $x_\alpha=x_\beta$ and $a_\alpha\neq a_\beta$, and consider the tuple $\bj=(\alpha,\dots,\alpha,\beta)\in [r]^k$. Using that $k\geq 2$, we have $\bx_\bj\not\prec\ba_\bj$, since $x_{j_{k-1}}=x_\alpha=x_\beta=x_{j_k}$ and $a_{j_{k-1}}=a_\alpha\neq a_\beta=a_{j_k}$. 
From Lemma~\ref{cor_vanishing_contractions} and Lemma~\ref{lem_multiplication_rule_P} we obtain
\begin{align*}
\bzero_d
=
E_{\ba_\bj}\ast\xi(\bx_\bj)
=
E_{\ba_\bj}\ast Q_{/\pi_\bj}
=
E_{\ba_\bj}\ast P_\bj\ast Q
=
\sum_{\substack{\bb\in R^\A\\\bb_\bj=\ba_\bj}}E_\bb\ast Q.
\end{align*}
Using that $\Mminion$ is a conic minion, we deduce that $E_\bb\ast Q=\bzero_d$ for any $\bb\in R^\A$ such that $\bb_\bj=\ba_\bj$. In particular, $E_\ba\ast Q=\bzero_d$, as required.
\end{proof}

The following result shows that hierarchies of tests built on conic minions only give a nonzero weight to those assignments that yield partial homomorphisms. 

\begin{prop}
\label{lem_partial_homo}
Let $\Mminion$ be a conic minion, let $k\in\N$, let $\X,\A$ be two $k$-enhanced $\sigma$-structures such that $k\geq\min(2,\armax(\sigma))$, and let $\xi:\X^{\tensor{k}}\to\mathbb{F}_{\Mminion}(\A^{\tensor{k}})$ be a homomorphism. Let $R\in\sigma$ have arity $r$, and take $\bx\in X^k$, $\ba\in A^k$, and $\bi\in [k]^r$. If $\bx_\bi\in R^\X$ and $\ba_\bi\not\in R^\A$, then $E_\ba\ast\xi(\bx)=\bzero_d$.
\end{prop}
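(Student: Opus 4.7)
The plan is to apply the conic property of $\Mminion$ to extract the vanishing of a single entry of $\xi(\bx)$ from the vanishing of a sum of entries, which itself will come from identifying two different descriptions of the same tensor---one via the free-structure witness for $R$, the other via the ``$R_k$-symmetry'' of Proposition~\ref{lem_a_symmetry}.

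First, since $\bx_\bi\in R^\X$ we have $(\bx_\bi)^\tensor{k}\in R^{\X^\tensor{k}}$, so $\xi((\bx_\bi)^\tensor{k})\in R^{\freeM(\A^\tensor{k})}$ yields $Q\in\Mminion^{(|R^\A|)}$ with $\xi((\bx_\bi)_\bj)=Q_{/\pi_\bj}=P_\bj\cont{1}Q$ for every $\bj\in[r]^k$. On the other hand, since $\X$ is $k$-enhanced, Proposition~\ref{lem_a_symmetry} gives $\xi(\bx_{\bj'})=\Pi_{\bj'}\cont{k}\xi(\bx)$ for every $\bj'\in[k]^k$. Setting $\bj'=\bi_\bj\in[k]^k$ and using $(\bx_\bi)_\bj=\bx_{\bi_\bj}$, I deduce $\Pi_{\bi_\bj}\cont{k}\xi(\bx)=P_\bj\cont{1}Q$. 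Contracting both sides with $E_{\ba_{\bi_\bj}}$ and invoking Lemmas~\ref{lem_multiplication_rule_P} and~\ref{lem_multiplication_rule} produces
\[
\sum_{\bb\in A^k:\,\bb_{\bi_\bj}=\ba_{\bi_\bj}}E_\bb\ast\xi(\bx)
\;=\;
\sum_{\ba'\in R^\A:\,\ba'_\bj=(\ba_\bi)_\bj}E_{\ba'}\ast Q
\qquad\text{for every }\bj\in[r]^k.
\]

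Next, I choose $\bj=\bj^*$ so that the right-hand side vanishes. Let $I=\bi([r])=\{y_1,\dots,y_m\}\subseteq[k]$, pick $j_\ell\in\bi^{-1}(y_\ell)$ for each $\ell\in[m]$, and set $\bj^*=(j_1,\dots,j_m,j_1,\dots,j_1)\in[r]^k$ (padded to length $k$, which is possible since $m\leq k$). If $k\geq 2$, Lemma~\ref{lem_ea_ast_Q_conic} restricts the right-hand sum at $\bj^*$ to those $\ba'\in R^\A$ with $\bx_\bi\prec\ba'$. Any such $\ba'$ satisfying also $\ba'_{\bj^*}=(\ba_\bi)_{\bj^*}$ obeys $a'_{j_\ell}=a_{y_\ell}$ for each $\ell\in[m]$; for any $s\in[r]$, writing $i_s=y_{\ell(s)}$, the identity $i_{j_{\ell(s)}}=y_{\ell(s)}=i_s$ combined with $\bx_\bi\prec\ba'$ forces $a'_s=a'_{j_{\ell(s)}}=a_{y_{\ell(s)}}=a_{i_s}$, yielding $\ba'=\ba_\bi\notin R^\A$---a contradiction. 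If instead $k=1$, the hypothesis $k\geq\min(2,\armax(\sigma))$ forces $r=1$, so $\bj^*=(1)$ covers $[r]$ and $\ba'_{\bj^*}=(\ba_\bi)_{\bj^*}$ alone forces $\ba'=\ba_\bi\notin R^\A$. In either case, the right-hand side is $\bzero_d$.

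Finally, the left-hand side at $\bj^*$ equals $\sum_{\bb\in V}E_\bb\ast\xi(\bx)$, where $V=\{\bb\in A^k:\bb_{\bi_{\bj^*}}=\ba_{\bi_{\bj^*}}\}\ni\ba$. Applying the conic property of $\Mminion$ (Definition~\ref{defn_conic_minion}) to $\xi(\bx)\in\Mminion^{(n^k)}$ with index set $V$ extracts $E_\bb\ast\xi(\bx)=\bzero_d$ for every $\bb\in V$, and in particular $E_\ba\ast\xi(\bx)=\bzero_d$, as required. I expect the main obstacle to be the design of $\bj^*$: the two ingredients, namely Lemma~\ref{lem_ea_ast_Q_conic} (which survives only $\ba'$ with $\bx_\bi\prec\ba'$) and the projection constraint $\ba'_{\bj^*}=(\ba_\bi)_{\bj^*}$, must together pin down $\ba'$ entirely. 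Covering the image $I=\bi([r])$ by a sequence of preimages in $[r]$ is the device that propagates the projection constraint through the fibres of $\bi$ and interacts with $\bx_\bi\prec\ba'$ to force $\ba'=\ba_\bi$.
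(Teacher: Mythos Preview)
Your proof is correct and follows essentially the same approach as the paper's: derive the identity $\Pi_{\bi_\bj}\cont{k}\xi(\bx)=P_\bj\cont{1}Q$, choose an auxiliary $\bj$ so that the $Q$-side collapses (via Lemma~\ref{lem_ea_ast_Q_conic} and the assumption $\ba_\bi\notin R^\A$), then apply the conic property on the $\xi(\bx)$-side. The only cosmetic difference is the particular recipe for $\bj$: the paper defines $j_\alpha$ as a chosen preimage of $\alpha$ under $\bi$ (yielding the clean identity $\bi_{\bj_\bi}=\bi$), whereas you list one preimage per element of the image $I=\bi([r])$ and pad; both constructions serve the identical purpose of letting the projection constraint together with $\bx_\bi\prec\ba'$ force $\ba'=\ba_\bi$.
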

\begin{proof}
From $\bx_\bi\in R^\X$ we have $\bx_\bi^\tensor{k}\in R^{\X^\tensor{k}}$ and, thus, $\xi(\bx_\bi^\tensor{k})\in R^{\mathbb{F}_{\Mminion}(\A^{\tensor{k}})}$. It follows that there exists ${Q}\in\Mminion^{(|R^\A|)}$ such that $\xi(\bx_{\bi_\bj})={Q}_{/\pi_\bj}$ for each $\bj\in [r]^k$. Proposition~\ref{lem_a_symmetry} then yields
\begin{align}
\label{eqn_1038_2702}
\Pi_{\bi_\bj}\cont{k}\xi(\bx)
&=
\xi(\bx_{\bi_\bj})
=
{Q}_{/\pi_{\bj}}
=
P_\bj\cont{1}{Q}.
\end{align}
Consider, for each $\alpha\in [k]$, the set $S_\alpha=\{\beta\in [r]:i_\beta=\alpha\}$, and fix an element $\hat{\beta}\in [r]$. The tuple $\bj\in [r]^k$ defined by setting $j_\alpha=\min S_\alpha$ if $S_\alpha\neq\emptyset$, $j_\alpha=\hat{\beta}$ otherwise satisfies $\bi_{\bj_\bi}=\bi$. Indeed, for any $\beta\in [r]$, we have $S_{i_\beta}\neq\emptyset$ since $\beta\in S_{i_\beta}$, so $j_{i_\beta}=\min S_{i_\beta}\in S_{i_\beta}$, which means that $i_{j_{i_\beta}}=i_\beta$, as required.
We obtain
\begin{align}
\label{eqn_2903_1640}
E_{\ba_{\bi_\bj}}\ast P_\bj\ast {Q}
&=
\sum_{\substack{\bb\in R^\A\\\bb_\bj=\ba_{\bi_\bj}}}E_\bb\ast {Q}
=
\sum_{\substack{\bb\in R^\A\\ \bx_\bi\prec\bb \\\bb_\bj=\ba_{\bi_\bj}}}E_\bb\ast {Q},
\end{align}
where the first equality comes from Lemma~\ref{lem_multiplication_rule_P} and the second from Lemma~\ref{lem_ea_ast_Q} or Lemma~\ref{lem_ea_ast_Q_conic} (depending on whether $k\geq \armax(\sigma)$ or $k\geq 2$). We claim that the sum on the right-hand side of~\eqref{eqn_2903_1640} equals $\bzero_d$. Indeed, let $\bb\in A^r$ satisfy $\bx_\bi\prec\bb$ and $\bb_\bj=\ba_{\bi_\bj}$. Since $\bi_{\bj_\bi}=\bi$, for any $\alpha\in [r]$ we have $x_{i_\alpha}=x_{i_{j_{i_\alpha}}}$ and, hence, $b_\alpha=b_{j_{i_\alpha}}$. It follows that $\bb=\bb_{\bj_\bi}=\ba_{\bi_{\bj_\bi}}=\ba_\bi\not\in R^\A$, which proves the claim. Combining this with~\eqref{eqn_1038_2702},~\eqref{eqn_2903_1640}, and Lemma~\ref{lem_multiplication_rule}, we find
\begin{align*}
\bzero_d
&=
E_{\ba_{\bi_\bj}}\ast P_\bj\ast {Q}
=
E_{\ba_{\bi_\bj}}\ast(P_\bj\cont{1}Q)
=
E_{\ba_{\bi_\bj}}\ast(\Pi_{\bi_\bj}\cont{k}\xi(\bx))
=
E_{\ba_{\bi_\bj}}\ast
\Pi_{\bi_\bj}\ast\xi(\bx)\\
&=
\sum_{\substack{\bb\in A^k\\ \bb_{\bi_\bj}=\ba_{\bi_\bj}}}E_\bb\ast\xi(\bx)
\end{align*}
so, in particular, $E_\ba\ast\xi(\bx)=\bzero_d$ since $\Mminion$ is a conic minion.
\end{proof}

The next result shows that hierarchies of tests built on conic minions are sound in the limit, in that they correctly classify all instances whose domain size is less than or equal to the hierarchy level. 

\begin{prop}
\label{prop_sherali_adams_exact}
Let $\Mminion$ be a conic minion, let $k\in\N$, 
let $\X,\A$ be two $k$-enhanced $\sigma$-structures such that $k\geq\min(2,\armax(\sigma))$ and
$k\geq|X|$, and suppose that $\Test{\Mminion}{k}(\X,\A)=\YES$. Then $\X\to\A$.
\end{prop}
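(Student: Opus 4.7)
The plan is to distill an honest homomorphism $h:\X\to\A$ out of the witness $\xi:\Xk\to\freeM(\Ak)$ of $\Test{\Mminion}{k}(\X,\A)=\YES$, by reading it off a single well-chosen value of $\xi$. Since $|X|\leq k$, I would enumerate the (distinct) elements of $X$ as $x_1,\ldots,x_n$ with $n=|X|$ and form the tuple $\bx=(x_1,\ldots,x_n,x_n,\ldots,x_n)\in X^k$, which visits every vertex of $\X$ at least once. Viewing $\xi(\bx)\in\Mminion^{(|A|^k)}$ as a tensor in $\cT^{|A|\cdot\bone_k,d}(\mathcal{S})$, the conic property of $\Mminion$ guarantees $\xi(\bx)\neq 0$, so I can pick $\ba\in A^k$ with $E_\ba\ast\xi(\bx)\neq\bzero_d$.

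Next I would apply Lemma~\ref{cor_vanishing_contractions} in its contrapositive form: the nonvanishing of $E_\ba\ast\xi(\bx)$ forces $\bx\prec\ba$, meaning that $x_i=x_j$ always implies $a_i=a_j$. In particular, the assignment $h:X\to A$ defined by $h(x_i)= a_i$ for $i\in[n]$ is well defined on $X$, and in fact $h(x_i)=a_i$ for every $i\in[k]$ (again by $\bx\prec\ba$, since if $x_i=x_{i'}$ for some $i'\in[n]$ then $a_i=a_{i'}=h(x_{i'})=h(x_i)$).

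To verify that $h$ is a homomorphism, I would take any $R\in\sigma$ of arity $r$ and any $\by\in R^\X$; since $\bx$ hits every element of $X$, there exists $\bi\in[k]^r$ with $\bx_\bi=\by$. Then Proposition~\ref{lem_partial_homo} applies, its hypothesis $k\geq\min(2,\armax(\sigma))$ being met because $k\geq 2$: were $\ba_\bi\notin R^\A$, the proposition would yield $E_\ba\ast\xi(\bx)=\bzero_d$, contradicting the choice of $\ba$. Hence $\ba_\bi\in R^\A$, and since $h(\bx_\bi)=(h(x_{i_1}),\ldots,h(x_{i_r}))=(a_{i_1},\ldots,a_{i_r})=\ba_\bi$, we conclude $h(\by)\in R^\A$.

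The substantive step is the opening one: choosing a single tuple $\bx\in X^k$ that simultaneously witnesses every vertex of $\X$, and then leveraging the conic property to extract a nonzero ``slice'' $E_\ba\ast\xi(\bx)$ which serves as the candidate homomorphism. Once this is in hand, Lemma~\ref{cor_vanishing_contractions} takes care of well-definedness and Proposition~\ref{lem_partial_homo} takes care of the homomorphism check, so no further surprises are expected; the condition $k\geq 2$ enters only via the invocation of Proposition~\ref{lem_partial_homo}, which is the only place the conic strengthening of Lemma~\ref{lem_ea_ast_Q} is needed.
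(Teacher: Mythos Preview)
Your proposal is correct and follows the same overall strategy as the paper: pick a tuple $\bx\in X^k$ covering all of $X$, use the conic property to find $\ba$ with $E_\ba\ast\xi(\bx)\neq\bzero_d$, and read off the homomorphism $h(x_i)=a_i$. The one noteworthy difference is in the verification step: you invoke Proposition~\ref{lem_partial_homo} directly, which dispatches the homomorphism check in one line, whereas the paper's own proof does not cite that proposition and instead reproves the needed implication inline via Lemma~\ref{lem_ea_ast_Q_conic}, Proposition~\ref{lem_a_symmetry}, and a contradiction argument. Your route is shorter and arguably the intended payoff of having stated Proposition~\ref{lem_partial_homo} beforehand; the paper's longer argument is self-contained but redundant with that proposition.
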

\begin{proof}
Let $\xi:\Xk\to\freeM(\Ak)$ be a homomorphism witnessing that $\Test{\Mminion}{k}(\X,\A)=\YES$, and assume without loss of generality that $X=[\ell]$ with $\ell\in [k]$. Take the tuple $\bv=(1,\dots,\ell,\ell,\dots,\ell)\in [\ell]^k$, and notice that $\xi(\bv)\neq O_{n^k,d}$ since $\Mminion$ is a conic minion. Therefore, there exists some $\ba\in A^k$ such that $E_\ba\ast\xi(\bv)\neq\bzero_d$. Consider the function $f:X\to A$ defined by $x\to a_x$ for each $x\in X$. We claim that $f$ yields a homomorphism from $\X$ to $\A$.
Let $R\in\sigma$ be a relation symbol of arity $r$ and take a tuple $\bx\in R^\X$.
Notice that $\bx\in [\ell]^r\subseteq [k]^r$ and
$f(\bx)=\ba_\bx$. Furthermore, it holds that $\bv_\bx=\bx$. Applying Proposition~\ref{lem_partial_homo}, we deduce that $f(\bx)\in R^\A$, whence it follows that $f$ is indeed a homomorphism.
\end{proof}

\section{The semi-direct product of minions}
\label{sec_semi_direct_product}

Is it possible for multiple linear minions to ``join forces'', to obtain a new linear minion corresponding to a stronger relaxation? The natural way to do so is to take as the elements of the new linear minion block matrices, whose blocks are the elements of the original minions. Let $\Mminion$ and $\Nminion$ be two linear minions that we wish to ``merge''. A zero row in a matrix of $\Mminion$ corresponds to zero weight assigned to the variable associated with the row by the relaxation given by $\Mminion$. Ideally, we would like to preserve this information when we run the relaxation given by the second minion $\Nminion$. In other words, we require that zero rows in $\Mminion$ should be associated with zero rows in $\Nminion$. For this to make sense (i.e., for the resulting object to be a linear minion), we need to assume that $\Mminion$ is conic. Under this assumption, it turns out that the new linear minion is conic, too. Therefore, this construction yields a method to transform a linear minion into a conic one, by taking its product with a fixed conic minion (for instance, $\Qconv$). Equivalently, the semi-direct product provides a way to turn a hierarchy of linear tests into a more powerful hierarchy of conic tests -- which enjoys the appealing properties described in Section~\ref{sec_hierarchies_of_conic_minion_tests}. 

\begin{prop}
\label{prop_semidirect_product_makes_sense}
Let $\Mminion$ be a conic minion with semiring $\mathcal{S}$ and depth $d$, let $\Nminion$ be a linear minion with semiring $\mathcal{S}$ and depth $d'$,
and consider, for each $L\in\N$, the set $(\Mminion\ltimes\Nminion)^{(L)}=\{[\begin{array}{cc}
    M & N
\end{array}]:M\in \Mminion^{(L)}, N\in\Nminion^{(L)}, \mbox{ and }N^T\be_i=\bzero_{d'} \mbox{ for any }i\in [L] \mbox{ such that }M^T\be_i=\bzero_{d}\}$. Then $\Mminion\ltimes\Nminion=\bigsqcup_{L\in\N}(\Mminion\ltimes\Nminion)^{(L)}$ is a linear minion with semiring $\mathcal{S}$ and depth $d+d'$. Moreover, it is conic.
\end{prop}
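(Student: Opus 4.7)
The plan is to verify in sequence that $\Mminion\ltimes\Nminion$ is (i) a linear minion with the claimed semiring and depth, and (ii) conic. The main conceptual point is that the ``row-compatibility'' condition built into the definition is preserved by the minor operation precisely because $\Mminion$ is conic; everything else is bookkeeping.

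First I would check that the candidate minor operation is well defined. Given $[\begin{array}{cc}M & N\end{array}]\in(\Mminion\ltimes\Nminion)^{(L)}$ and $\pi:[L]\to[L']$, the natural definition is $[\begin{array}{cc}M & N\end{array}]_{/\pi}=P[\begin{array}{cc}M & N\end{array}]=[\begin{array}{cc}PM & PN\end{array}]$, where $P$ is the $L'\times L$ matrix from Definition~\ref{defn_linear_minion}. Since $\Mminion$ and $\Nminion$ are linear with the same semiring $\mathcal{S}$, we already know $PM=M_{/\pi}\in\Mminion^{(L')}$ and $PN=N_{/\pi}\in\Nminion^{(L')}$. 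The nontrivial task is to verify the row-compatibility condition for the pair $(PM,PN)$. Using the identity $P^T\be_i=\sum_{j:\pi(j)=i}\be_j$, we have $(PM)^T\be_i=\sum_{j:\pi(j)=i}M^T\be_j$ and similarly for $N$. If this sum is $\bzero_d$, then because $\Mminion$ is conic (applied to $V=\pi^{-1}(i)$), each $M^T\be_j=\bzero_d$; by the compatibility of the original pair, each $N^T\be_j=\bzero_{d'}$, and hence $(PN)^T\be_i=\bzero_{d'}$. This shows $[\begin{array}{cc}PM & PN\end{array}]\in(\Mminion\ltimes\Nminion)^{(L')}$.

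Next I would observe that the minor axioms $M_{/\id}=M$ and $(M_{/\pi})_{/\tilde\pi}=M_{/\tilde\pi\circ\pi}$ follow immediately from the corresponding axioms in $\Mminion$ and $\Nminion$, since the action is diagonal on the two blocks. Together with the matrix form $PM$ of the minor operation, this establishes that $\Mminion\ltimes\Nminion$ is a linear minion of depth $d+d'$ over $\mathcal{S}$ in the sense of Definition~\ref{defn_linear_minion}.

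Finally I would check the two defining properties of a conic minion. Nonzeroness of $[\begin{array}{cc}M & N\end{array}]$ is inherited directly from $\Mminion$, since $M\neq O_{L,d}$ already forces $[\begin{array}{cc}M & N\end{array}]\neq O_{L,d+d'}$. For the row-sum property, suppose $V\subseteq[L]$ satisfies $\sum_{i\in V}[\begin{array}{cc}M & N\end{array}]^T\be_i=\bzero_{d+d'}$. Reading the block decomposition, this gives $\sum_{i\in V}M^T\be_i=\bzero_d$ and $\sum_{i\in V}N^T\be_i=\bzero_{d'}$. Since $\Mminion$ is conic, the first equation forces $M^T\be_i=\bzero_d$ for every $i\in V$; by the compatibility condition in the definition of $(\Mminion\ltimes\Nminion)^{(L)}$, this in turn forces $N^T\be_i=\bzero_{d'}$ for every $i\in V$, so $[\begin{array}{cc}M & N\end{array}]^T\be_i=\bzero_{d+d'}$ as required.

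The only step that is not purely formal is the preservation of the row-compatibility condition under minors; this is where coniciy of $\Mminion$ (rather than mere linearity) is used essentially, and it is the reason the construction is asymmetric in the two minions. Once this is dispatched, the rest reduces to transporting the linearity and conicity axioms blockwise through the direct sum.
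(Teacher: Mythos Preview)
Your proof is correct and follows essentially the same approach as the paper: verify closure under the minor operation using conicity of $\Mminion$ to propagate zero rows from $PM$ back to $M$ and then forward to $N$, and check the conic axioms for $\Mminion\ltimes\Nminion$ by reading off the $M$-block and invoking the row-compatibility condition. The paper omits the explicit check of the minor axioms (as they are automatic once closure under $P$-multiplication is established for a linear minion), but otherwise the arguments match line for line.
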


\begin{defn}
\label{defn_semidirect_product_minions}
Let $\Mminion$ and $\Nminion$ be a conic minion and a linear minion, respectively, over the same semiring.
The \emph{semi-direct product}
of $\Mminion$ and $\Nminion$ is the conic minion $\Mminion\ltimes\Nminion$ described in Proposition~\ref{prop_semidirect_product_makes_sense}.
\end{defn}

\begin{rem}
\label{rem_semidirect_product_different_semirings}
If the minions $\Mminion$ and $\Nminion$ in Definition~\ref{defn_semidirect_product_minions} have different semirings $\mathcal{S}$ and $\mathcal{S}'$, we cannot in general use the definition to build their semi-direct product. However, it is immediate to check that a linear minion over $\mathcal{S}$ is also a linear minion over any semiring of which $\mathcal{S}$ is a sub-semiring. Hence, if $\mathcal{S}$ is a sub-semiring of $\mathcal{S}'$ (or vice-versa), $\MNminion$ is well defined (see Example~\ref{example_minion_BA_semidirect} below). In general, however, we might not be able to find a common semiring of which $\mathcal{S}$ and $\mathcal{S}'$ are both sub-semirings. In particular, it is not true in general that the direct sum of semirings admits homomorphic injections from the components, see~\cite{dale1977direct}. To be able to define $\MNminion$ also in this case, we would need to redefine linear minions by allowing each of the $d$ columns of the matrices in a linear minion of depth $d$ to contain entries from a possibly different semiring. 
In this way, the requirement in Definition~\ref{defn_semidirect_product_minions} can be circumvented -- which, for example, makes it possible to define the minion $\Hminion\ltimes\Zaff$ (see Remark~\ref{rem:aip}).
\end{rem}

\begin{example}
    \label{example_minion_BA_semidirect}
    Since $\Z$ is a sub-semiring of $\Q$, we can view $\Qconv$ and $\Zaff$ as a conic minion and a linear minion over the same semiring $\Q$, respectively. It is easy to check that their semi-direct product $\Qconv\ltimes\Zaff$ is precisely the minion $\Mblpaip$ from~\cite{bgwz20}.   
\end{example}

\begin{proof}[Proof of Proposition~\ref{prop_semidirect_product_makes_sense}]
We start by showing that $\MNminion$ is a linear minion of depth $d+d'$. Notice that each set $(\MNminion)^{(L)}$ consists of $L\times (d+d')$ matrices having entries in $\mathcal{S}$. Given $\pi:[L]\to[L']$ and $[\begin{array}{cc}M & N\end{array}]\in (\MNminion)^{(L)}$, we claim that $P[\begin{array}{cc}M & N\end{array}]=[\begin{array}{cc}PM & PN\end{array}]$ belongs to $(\MNminion)^{(L')}$, where $P$ is the $L'\times L$ matrix corresponding to $\pi$ as per Definition~\ref{defn_linear_minion}. First, since $\Mminion$ and $\Nminion$ are both linear minions, we have that $PM=M_{/\pi}\in\Mminion^{(L')}$ and $PN=N_{/\pi}\in\Nminion^{(L')}$. Let $j\in [L']$ be such that $(PM)^T\be_j=\bzero_d$.
We find
\begin{align*}
    \bzero_d
    &=
    M^TP^T\be_j
    =
    \sum_{i\in\pi^{-1}(j)}M^T\be_i.
\end{align*}
Using that $\Mminion$ is conic, we obtain $M^T\be_i=\bzero_d$ for each $i\in\pi^{-1}(j)$. By the definition of $(\MNminion)^{(L)}$, this means that $N^T\be_i=\bzero_{d'}$ for each $i\in\pi^{-1}(j)$. Therefore,
\begin{align*}
    \bzero_{d'}
    &=
    \sum_{i\in\pi^{-1}(j)}N^T\be_i
    =
    N^TP^T\be_j
    =
    (PN)^T\be_j,
\end{align*}
which proves the claim. It follows that $\MNminion$ is indeed a linear minion.

To show that $\MNminion$ is conic, we first note that no element of $\MNminion$ is the all-zero matrix, since $\Mminion$ is conic. If $\sum_{i\in V} [\begin{array}{cc}M & N\end{array}]^T\be_i =\bzero_{d+d'}$ for some $L\in \N$, $[\begin{array}{cc}M & N\end{array}]\in(\MNminion)^{(L)}$, and $V\subseteq [L]$, we find in particular that $\sum_{i\in V}M^T\be_i=\bzero_d$, which implies that $M^T\be_i=\bzero_d$ for each $i\in V$ by the fact that $\Mminion$ is conic. By the definition of $(\MNminion)^{(L)}$, this yields $N^T\be_i=\bzero_{d'}$ for each $i\in V$, so 
\begin{align*}
    [
    \begin{array}{cc}M & N\end{array}]^T\be_i
    &=
    \left[\begin{array}{cc}M^T\be_i\\N^T\be_i\end{array}\right]
    =
    \left[\begin{array}{cc}\bzero_d\\\bzero_{d'}\end{array}\right]
    =
    \bzero_{d+d'}
\end{align*}
for each $i\in V$, as required.
\end{proof}

The next result -- crucial for the characterisation of the $\BA^k$ hierarchy in Theorem~\ref{thm_main_multilinear_tests}, cf.~the proof of Proposition~\ref{prop_BAk_acceptance} -- shows that homomorphisms corresponding to the semi-direct product of two minions factor into homomorphisms corresponding to the components.

\begin{prop}
\label{prop_decoupling_BLPAIP_general}
Let $\Mminion$ be a conic minion with semiring $\mathcal{S}$ and depth $d$, let $\Nminion$ be a linear minion with semiring $\mathcal{S}$ and depth $d'$, let $k\in\N$, and let $\X,\A$ be $k$-enhanced $\sigma$-structures such that $k\geq\armax(\sigma)$. Then there exists a homomorphism $\vartheta:\Xk\to\freeMN(\Ak)$ if and only if there exist homomorphisms $\xi:\Xk\to\freeM(\Ak)$ and $\zeta:\Xk\to\freeN(\Ak)$ such that, for any $\bx\in X^k$ and $\ba\in A^k$, $E_\ba\ast\xi(\bx)=\bzero_d$ implies $E_\ba\ast\zeta(\bx)=\bzero_{d'}$. 
\end{prop}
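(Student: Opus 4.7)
The plan is to establish both directions of the equivalence through a columnwise block construction on the $d+d'$ columns of elements of $\MNminion$. The key structural observation is that, since $\MNminion$ is a linear minion of depth $d+d'$, its minor operation acts as $[\,M\mid N\,]_{/\pi}=[\,M_{/\pi}\mid N_{/\pi}\,]$; in other words, the minor operation and the free-structure construction commute with the horizontal block split, and the only nontrivial ingredient tying the two blocks together is the compatibility condition built into the definition of $\MNminion$.

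For the forward direction, I would define $\xi(\bx)$ and $\zeta(\bx)$ to be the first $d$ and last $d'$ columns of $\vartheta(\bx)$, respectively. Preservation of any relation $R\in\sigma$ of arity $r$ is then immediate: a witness $[\,M\mid N\,]\in(\MNminion)^{(|R^\A|)}$ for $\vartheta(\bx^{\tensor{k}})\in R^{\freeMN(\Ak)}$ splits into $M\in\Mminion^{(|R^\A|)}$ witnessing $\xi(\bx^{\tensor{k}})\in R^{\freeM(\Ak)}$ and $N\in\Nminion^{(|R^\A|)}$ witnessing $\zeta(\bx^{\tensor{k}})\in R^{\freeN(\Ak)}$. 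The compatibility condition would then follow by applying the same splitting to $R_k$ for an arbitrary $\bx\in X^k=R_k^\X$: using Remark~\ref{rem_Pi_i_equals_P_i_sometimes} and the tuple $\bi=(1,\dots,k)\in[k]^k$, the witness $[\,M\mid N\,]$ for $\vartheta(\bx^{\tensor{k}})$ must in fact equal $\vartheta(\bx)$ itself, so $\xi(\bx)=M$ and $\zeta(\bx)=N$, and the defining membership condition of $(\MNminion)^{(n^k)}$ delivers the required implication for every $\ba\in A^k$.

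For the backward direction, I would set $\vartheta(\bx)=[\,\xi(\bx)\mid\zeta(\bx)\,]$. The compatibility hypothesis places $\vartheta(\bx)$ directly in $(\MNminion)^{(n^k)}$. To show that $\vartheta$ preserves each relation $R\in\sigma$ of arity $r$, given $\bx\in R^\X$ I would pick witnesses $M\in\Mminion^{(|R^\A|)}$ for $\xi(\bx^{\tensor{k}})$ and $N\in\Nminion^{(|R^\A|)}$ for $\zeta(\bx^{\tensor{k}})$; the blockwise behaviour of the minor operation then yields $[\,M\mid N\,]_{/\pi_\bi}=\vartheta(\bx_\bi)$ for every $\bi\in[r]^k$. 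The only remaining step, and the main obstacle of the proof, is to verify that the block matrix $[\,M\mid N\,]$ itself lies in $(\MNminion)^{(|R^\A|)}$; equivalently, that $E_\bb\ast N=\bzero_{d'}$ whenever $E_\bb\ast M=\bzero_d$, for every $\bb\in R^\A$.

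I would handle this verification by a case analysis on whether $\bx\prec\bb$. When $\bx\not\prec\bb$, Lemma~\ref{lem_ea_ast_Q}, which applies because $k\geq\armax(\sigma)\geq r$, gives $E_\bb\ast N=\bzero_{d'}$ unconditionally, independently of whether $M$ vanishes there. When $\bx\prec\bb$, I would exploit the $k$-enhancement by selecting $\bi=(1,2,\dots,r,r,\dots,r)\in[r]^k$ and setting $\ba=\bb_\bi\in A^k$; since $\bi$ takes each value in $[r]$ at least once, $\bb$ is the unique element of $R^\A$ whose projection along $\bi$ equals $\ba$. Lemma~\ref{lem_multiplication_rule_P} then reduces $E_\ba\ast\xi(\bx_\bi)$ to $E_\bb\ast M$ and, analogously, $E_\ba\ast\zeta(\bx_\bi)$ to $E_\bb\ast N$. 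If $E_\bb\ast M=\bzero_d$, the first reduction gives $E_\ba\ast\xi(\bx_\bi)=\bzero_d$, an application of the compatibility hypothesis to $\bx_\bi\in X^k$ and $\ba\in A^k$ produces $E_\ba\ast\zeta(\bx_\bi)=\bzero_{d'}$, and the second reduction then yields $E_\bb\ast N=\bzero_{d'}$, closing the argument.
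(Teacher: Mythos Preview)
Your proposal is correct and follows essentially the same approach as the paper: blockwise splitting of $\vartheta$ in the forward direction, and assembling $[\,\xi(\bx)\mid\zeta(\bx)\,]$ in the backward direction, with the key step being to check that the combined witness $[\,M\mid N\,]$ lies in $(\MNminion)^{(|R^\A|)}$ by reducing $E_\bb\ast M$ and $E_\bb\ast N$ to values of $\xi$ and $\zeta$ via an index tuple covering all of $[r]$. The only difference is cosmetic: the paper uses $\bj=(1,2,\dots,r,1,\dots,1)$ and handles all $\bb\in R^\A$ uniformly, whereas you use $\bi=(1,2,\dots,r,r,\dots,r)$ and insert a case split on whether $\bx\prec\bb$; the split is harmless but unnecessary, since your uniqueness argument for the projection already works regardless of whether $\bx\prec\bb$ holds.
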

\begin{proof}
To prove the ``if'' part, take two homomorphisms $\xi$ and $\zeta$ as in the statement of the proposition, and consider the map
\begin{align*}
    \vartheta:X^k&\to (\MNminion)^{(n^k)}\\
    \bx&\mapsto \pair{\xi(\bx)}{\zeta(\bx)}.
\end{align*}
Observe that $\vartheta$ is well defined since we are assuming that $E_\ba\ast\xi(\bx)=\bzero_d$ implies $E_\ba\ast\zeta(\bx)=\bzero_{d'}$ for any $\bx\in X^k$ and $\ba\in A^k$. We claim that $\vartheta$ yields a homomorphism from $\Xk$ to $\freeMN(\Ak)$.
To this end, take $R\in\sigma$ of arity $r$ and $\bx\in R^\X$, so $\bx^{\tensor{k}}\in R^{\Xk}$. We need to show that $\vartheta(\bx^\tensor{k})\in R^{\freeMN(\Ak)}$; equivalently, we need to find some $W\in(\MNminion)^{(m)}$ such that $\vartheta(\bx_\bi)=W_{/\pi_\bi}$ for each $\bi\in [r]^k$, where $m=|R^\A|$. Using that $\xi$ is a homomorphism, we have that $\xi(\bx^\tensor{k})\in R^{\freeM(\Ak)}$, so there exists $Q\in \Mminion^{(m)}$ for which $\xi(\bx_\bi)=Q_{/\pi_\bi}$ for each $\bi\in [r]^k$. Similarly, using that $\zeta$ is a homomorphism, we can find $Z\in \Nminion^{(m)}$ for which $\zeta(\bx_\bi)=Z_{/\pi_\bi}$ for each $\bi\in [r]^k$. The crucial part is to show that $\pair{Q}{Z}\in(\MNminion)^{(m)}$. To this end, take $\ba\in R^\A$ such that $E_\ba\ast Q=\bzero_d$; we need to prove that $E_\ba\ast Z=\bzero_{d'}$. Using the assumption that $k\geq r$, let us pick the tuple $\bj=(1,2,\dots,r,1,1,\dots,1)\in [r]^k$. Notice that this choice guarantees that
    $\{\bb\in R^\A:\bb_\bj=\ba_\bj\}=\{\ba\}$.
Hence,
\begin{align*}
    E_{\ba_\bj}\ast \xi(\bx_\bj)
    &=
    E_{\ba_\bj}\ast Q_{/\pi_\bj}
    =
    E_{\ba_\bj}\ast P_{\bj}\ast Q
    =
    \sum_{\substack{\bb\in R^\A\\\bb_\bj=\ba_\bj}}E_\bb\ast Q
    =
    E_\ba\ast Q,
\end{align*}
where the third equality follows from Lemma~\ref{lem_multiplication_rule_P}. Similarly,
$E_{\ba_\bj}\ast \zeta(\bx_\bj)=E_\ba\ast Z$.
Then, from our assumption $E_\ba\ast Q=\bzero_d$ it follows that $E_{\ba_\bj}\ast \xi(\bx_\bj)=\bzero_d$. Using the hypothesis of the proposition, we deduce that $E_{\ba_\bj}\ast \zeta(\bx_\bj)=\bzero_{d'}$, and we thus conclude that $E_\ba\ast Z=\bzero_{d'}$, as wanted. Call $W=\pair{Q}{Z}$. For each $\bi\in [r]^k$, we find
\begin{align*}
    \vartheta(\bx_\bi)
    =
    \pair{\xi(\bx_\bi)}{\zeta(\bx_\bi)}
    =
    \pair{Q_{/\pi_\bi}}{Z_{/\pi_\bi}}
    =
    \pair{P_{\bi}\cont{1} Q}{P_{\bi}\cont{1} Z}
    =
    P_{\bi}\cont{1}\pair{Q}{Z}
    =
    W_{/\pi_\bi},
\end{align*}
as required. This concludes the proof that $\vartheta$ is a homomorphism.

Conversely, let $\vartheta:\Xk\to\freeMN(\Ak)$ be a homomorphism. For each $\bx\in X^k$, write $\vartheta(\bx)\in(\MNminion)^{(n^k)}$ as $\vartheta(\bx)=\pair{M_{(\bx)}}{N_{(\bx)}}$, where $M_{(\bx)}\in\Mminion^{(n^k)}$ and $N_{(\bx)}\in\Nminion^{(n^k)}$. Using the same argument as in the previous part of the proof, we check that the assignment $\bx\mapsto M_{(\bx)}$ (resp. $\bx\mapsto N_{(\bx)}$) yields a homomorphism from $\Xk$ to $\freeM(\Ak)$ (resp. to $\freeN(\Ak)$), and that the implication $E_\ba\ast\xi(\bx)=\bzero_d$ $\;\Rightarrow\;$ $E_\ba\ast\zeta(\bx)=\bzero_{d'}$ is met for each $\bx\in X^k$ and $\ba\in A^k$. 
\end{proof}

\begin{rem}\label{rem:do23}
In recent work~\cite{Dalmau24:lics}, Dalmau and Opr\v{s}al studied
reductions between PCSPs. For a minion $\Mminion$, they construct a new minion $\omega(\Mminion)$ that they use to characterise the applicability of arc-consistency reductions. If $\Mminion$ is linear, $\omega(\Mminion)$ coincides with the semi-direct product between $\Hminion$ and $\Mminion$ (cf.~Remark~\ref{rem_semidirect_product_different_semirings}). It is not hard to show that a linear minion $\Mminion$ satisfies $\Mminion\to\Hminion\ltimes\Mminion$ if and only if it is homomorphically equivalent to a conic minion.
Using this fact, it can be shown\footnote{Personal communication with Jakub
  Opr\v{s}al.} that the $k$-consistency reductions from~\cite{Dalmau24:lics} and the $k$-th level of a hierarchy of minion tests as defined in this paper are equivalent for conic minions.
  We also point out that the definition of semi-direct product of minions can be extended to abstract minions via the notion of essential coordinates (see  Remark~\ref{rem_abstract_conic_minion}). Indeed, we can define $(\Mminion\ltimes\Nminion)^{(L)}$ as the set of pairs $(M,N)$ such that $M\in \Mminion^{(L)}$, $N\in\Nminion^{(L)}$, and any essential coordinate for $N$ is essential for $M$,
    with minor maps defined coordinate-wise. Using essentially the same argument as in the proof of Proposition~\ref{prop_semidirect_product_makes_sense}, we can verify that the set $\Mminion\ltimes\Nminion=\bigsqcup_{L\in\N}(\Mminion\ltimes\Nminion)^{(L)}$ is closed under minor maps provided that $\Mminion$ satisfies the abstract conicity properties of Remark~\ref{rem_abstract_conic_minion}.
\end{rem}

\section{A proof of Theorem~\ref{thm_main_multilinear_tests}}
\label{sec_proof_of_main_thm}
In this section, we prove Theorem~\ref{thm_main_multilinear_tests} using the machinery developed in Sections~\ref{sec_hierarchies_of_minion_tests},~\ref{sec_hierarchies_of_linear_minion_tests},~\ref{sec_hierarchies_of_conic_minion_tests}, and~\ref{sec_semi_direct_product}. 

\begin{thm*}[Theorem~\ref{thm_main_multilinear_tests} restated]
If $k\in\N$ is at least the maximum arity of the template,
\begin{align*}
    \begin{array}{lll}
         \bullet\;\BW^k=\Test{\Hminion}{k}\\[4pt]
         \bullet\;\SA^k=\Test{\Qconv}{k}\\[4pt]
         \bullet\;\AIP^k=\Test{\Zaff}{k}\\[4pt]
         \bullet\;\SoS^k=\Test{\Sminion}{k}\\[4pt]
         \bullet\;\BA^k=\Test{\Mblpaip}{k}.
    \end{array}
\end{align*}
\end{thm*}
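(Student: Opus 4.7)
The plan is to prove each of the five equalities by establishing an explicit correspondence between solutions of the relaxation on the left-hand side and homomorphisms $\xi\colon \Xk \to \freeM(\Ak)$ for the appropriate minion $\Mminion$. The unifying observation is this: by Remark~\ref{rem_free_structure_linear_minion} and Definition~\ref{defn_hierarchy_minion_test}, such a homomorphism $\xi$ assigns to each $\bx \in R^\X$ (with $R$ of arity $r$) a matrix $Q_{R,\bx} \in \Mminion^{(|R^\A|)}$ satisfying $\xi(\bx_\bi) = (Q_{R,\bx})_{/\pi_\bi} = P_\bi \cont{1} Q_{R,\bx}$ for all $\bi \in [r]^k$; in particular, for $R = R_k$ and $\bi = (1,\ldots,k)$ one has $Q_{R_k, \bx} = \xi(\bx)$, so that the tuples of matrices $\{Q_{R,\bx}\}$ are coherent across different relations.

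The key translation, common to all five cases, is to set $\lambda_{R, \bx, \ba} = E_\ba \ast Q_{R,\bx}$ (for $\SA^k$, $\AIP^k$, $\BW^k$) or to set $\blambda_{R, \bx, \ba}$ equal to the $\ba$-th row of $Q_{R, \bx}$, suitably embedded in $\R^\omega$ via a Gram--Schmidt projection (for $\SoS^k$). Verifying the hierarchy axioms then reduces to the following steps: the minion's normalisation (stochastic entries for $\Qconv$, sum-to-one for $\Zaff$, unit trace for $\Sminion$, and nonemptiness for $\Hminion$) yields $\clubsuit 1$ and $\spadesuit 1$; Lemma~\ref{lem_multiplication_rule_P} combined with $\xi(\bx_\bi) = P_\bi \cont{1} Q_{R,\bx}$ applied to $R = R_k$ yields the compatibility equations $\clubsuit 2$ and $\spadesuit 3$; Lemma~\ref{lem_ea_ast_Q} (where the hypothesis $k \geq \armax(\sigma)$ is used) yields the vanishing conditions $\clubsuit 3$ and $\spadesuit 4$; and the diagonality property C2 of $\Sminion$, together with the behaviour of diagonality under left-multiplication by $P_\bi$, yields the orthogonality condition $\spadesuit 2$. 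The converse direction is symmetric: given a hierarchy solution, define $\xi(\bx)$ for $\bx \in X^k$ to be the matrix whose $\ba$-th row is $\lambda_{R_k, \bx, \ba}$ (or the projection of $\blambda_{R_k, \bx, \ba}$), verify that $\xi(\bx) \in \Mminion^{(n^k)}$, and use the compatibility axioms to exhibit the witnessing matrices $Q_{R,\bx}$ directly from the $\lambda_{R,\bx,\ba}$ (or $\blambda_{R,\bx,\ba}$).

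The individual cases $\SA^k$, $\AIP^k$, $\SoS^k$ then proceed by specialising the above correspondence. For $\BW^k$, which is phrased combinatorially rather than via weights, the family $\mathcal{F}$ of partial homomorphisms is extracted by taking, for each $\bx \in X^k$ and $\ba \in A^k$ with $E_\ba \ast \xi(\bx) \neq 0$ in the Boolean semiring, the partial assignment $x_i \mapsto a_i$; Proposition~\ref{lem_partial_homo} guarantees these are genuine partial homomorphisms, closure under restriction follows by choosing appropriate projection tuples $\bi$, and the extension property up to $k$ follows from the tensor symmetry of Proposition~\ref{lem_a_symmetry} applied to a suitable reparametrisation that introduces the new variables. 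The case $\BA^k = \Test{\Mblpaip}{k}$ is then derived as a consequence of the cases $\SA^k$ and $\AIP^k$ via Proposition~\ref{prop_decoupling_BLPAIP_general}: using Example~\ref{example_minion_BA_semidirect} to identify $\Mblpaip$ with $\Qconv \ltimes \Zaff$, a homomorphism $\Xk \to \freeMN(\Ak)$ decouples into a pair of homomorphisms into $\freeQ(\Ak)$ and $\freeZ(\Ak)$ satisfying precisely the refinement condition featured in the definition of $\BA^k$.

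The main obstacle will be the case of $\SoS^k$: elements of $\Sminion$ are $L \times \ale$ real matrices, so the translation between the $\R^\omega$-valued hierarchy vectors $\blambda$ and the rows of $\xi(\bx)$ requires a Gram--Schmidt embedding/projection argument analogous to that in the proof of Proposition~\ref{prop_acceptance_SDP}, and must be extended so that the witness matrices $Q_{R,\bx}$ inherit the diagonality property C2 from the orthogonality conditions $\spadesuit 2$. A secondary obstacle lies in the $\BW^k$ case, where the combinatorial description (a family of partial homomorphisms closed under restriction with the extension property) must be shown to be fully captured by the tensor-theoretic existence of $\xi$; the subtle point is to verify that, whenever $E_\ba \ast \xi(\bx) \neq 0$, the assignment $x_i \mapsto a_i$ is a well-defined partial function (which uses Lemma~\ref{cor_vanishing_contractions}) and to recover the extension property from the $R_k$-preservation of $\xi$ expressed tensorially through Proposition~\ref{lem_a_symmetry}.
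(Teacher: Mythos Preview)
Your plan is correct and follows essentially the same approach as the paper: the paper proves the five cases separately (Propositions~\ref{prop_BW_acceptance}--\ref{prop_BAk_acceptance}) using exactly the correspondence you describe, invoking the same lemmas (Lemma~\ref{lem_multiplication_rule_P} for the compatibility equations, Lemma~\ref{lem_ea_ast_Q} for the vanishing conditions, Proposition~\ref{lem_a_symmetry} and Proposition~\ref{lem_partial_homo} for $\BW^k$, the Gram--Schmidt projection for $\SoS^k$, and Proposition~\ref{prop_decoupling_BLPAIP_general} for $\BA^k$). One small addition in the paper's treatment of $\BA^k$ is an explicit argument (using $k\geq\armax(\sigma)$ and a tuple $\bi=(1,\ldots,r,1,\ldots,1)$) that the refinement condition need only be checked for $R=R_k$, which is what allows the decoupling proposition to apply directly.
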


The five parts of the theorem will be formally stated and proved separately, in Propositions~\ref{prop_SA_acceptance},~\ref{prop_BW_acceptance},~\ref{prop_AIPk_acceptance},~\ref{prop_SoS_acceptance}, and~\ref{prop_BAk_acceptance}.
The statements in Propositions~\ref{prop_SA_acceptance} and~\ref{prop_SoS_acceptance}, concerning $\SA^k$ and $\SoS^k$, are actually slightly stronger than Theorem~\ref{thm_main_multilinear_tests}, as they do not require that the level $k$ of the hierarchy be at least the maximum arity of the template (cf.~Remark~\ref{rem_for_SA_and_SoS_no_need_high_level}). We start with $\SA^k$, whose proof is slightly simpler than (and provides intuition for) the proof for $\BW^k$.

\begin{prop}
\label{prop_SA_acceptance}
Let $k\in\N$ and let $\X,\A$ be $k$-enhanced $\sigma$-structures such that $k\geq\min(2,\armax(\sigma))$. Then $\SA^k(\X,\A)=\Test{\Qconv}{k}(\X,\A)$.
\end{prop}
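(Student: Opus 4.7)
The plan is to set up a two-way correspondence between rational nonnegative solutions of the system~\eqref{eqn_SA_and_AIPk} and homomorphisms $\xi:\Xk\to\freeQ(\Ak)$, via the dictionary
\[
\lambda_{R,\bx,\ba}\;\longleftrightarrow\;E_\ba\ast Q_{R,\bx},
\]
where $Q_{R,\bx}\in\Qconv^{(|R^\A|)}$ is the witness for $\xi(\bx^\tensor{k})\in R^{\freeQ(\Ak)}$ provided by Remark~\ref{rem_free_structure_linear_minion}. Since $\Qconv$ is linear of depth~$1$, elements of $\Qconv^{(n^k)}$ are honest nonnegative rational vectors summing to~$1$, so the dictionary is transparent and the SA variables are precisely the entries of the $Q_{R,\bx}$.

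For the forward direction, given an SA witness $\{\lambda_{R,\bx,\ba}\}$, I would define $\xi:X^k\to\Qconv^{(n^k)}$ by $E_\ba\ast\xi(\bx)=\lambda_{R_k,\bx,\ba}$ for $\ba\in A^k$; membership in $\Qconv^{(n^k)}$ is ensured by $(\clubsuit 1)$ applied to $R_k$ together with nonnegativity. To check that $\xi$ is a homomorphism, for each $R\in\sigma$ of arity $r$ and $\bx\in R^\X$ I would let $Q\in\Qconv^{(|R^\A|)}$ have entries $E_\ba\ast Q=\lambda_{R,\bx,\ba}$. Combining Lemma~\ref{lem_multiplication_rule_P} (to expand $E_\ba\ast Q_{/\pi_\bi}$ as a sum over $\bb\in R^\A$ with $\bb_\bi=\ba$) with $(\clubsuit 2)$ should then yield $\xi(\bx_\bi)=Q_{/\pi_\bi}$ for every $\bi\in[r]^k$, exhibiting $\xi(\bx^\tensor{k})\in R^{\freeQ(\Ak)}$.

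For the backward direction, given a homomorphism $\xi$, I would pick for each $R\in\sigma$ and $\bx\in R^\X$ a witness $Q_{R,\bx}$, taking in particular $Q_{R_k,\by}=\xi(\by)$ (a legitimate choice by Proposition~\ref{lem_a_symmetry}), and set $\lambda_{R,\bx,\ba}=E_\ba\ast Q_{R,\bx}$. Then $(\clubsuit 1)$ is built into $Q_{R,\bx}\in\Qconv$, and $(\clubsuit 2)$ falls out of the previous computation read in reverse. The hard part will be $(\clubsuit 3)$, which insists that $\lambda_{R,\bx,\ba}=0$ whenever $\bx\not\prec\ba$: for $R=R_k$ this is a direct consequence of Lemma~\ref{cor_vanishing_contractions}, but for a general symbol $R$ it is precisely where conicity of $\Qconv$ (Proposition~\ref{prop_some_minions_are_conic}) has to enter, through Lemma~\ref{lem_ea_ast_Q_conic} -- which requires $k\geq 2$, thereby explaining the hypothesis $k\geq\min(2,\armax(\sigma))$. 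The degenerate regime $\armax(\sigma)=1$ forces $k=1$, in which case tensorisation acts as the identity and the $\not\prec$ constraints are vacuous on unary tuples, so the argument collapses.
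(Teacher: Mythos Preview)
Your proposal is correct and matches the paper's proof essentially step for step: the same dictionary $\lambda_{R,\bx,\ba}\leftrightarrow E_\ba\ast Q_{R,\bx}$, the same forward construction of $\xi$ from the $R_k$-variables with $(\clubsuit 1)$ giving membership in $\Qconv^{(n^k)}$ and Lemma~\ref{lem_multiplication_rule_P} plus $(\clubsuit 2)$ yielding the homomorphism, and the same backward construction with $Q_{R_k,\by}=\xi(\by)$ and $(\clubsuit 3)$ handled via Lemma~\ref{lem_ea_ast_Q_conic} (or Lemma~\ref{lem_ea_ast_Q} when $k\geq\armax(\sigma)$, which subsumes your degenerate $k=1$ case).
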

\begin{proof}
Suppose $\SA^k(\X,\A)=\YES$ and let the rational numbers $\lambda_{R,\bx,\ba}$ witness it, for $R\in\sigma$, $\bx\in R^\X$, and $\ba\in R^\A$. Consider the map $\xi:X^k\to\cT^{n\cdot\bone_k}(\Q)$ defined by $E_\ba\ast\xi(\bx)=\lambda_{R_k,\bx,\ba}$ for any $\bx\in X^k$, $\ba\in A^k$. We claim that $\xi$ yields a homomorphism from $\Xk$ to $\freeQ(\Ak)$. Notice first that, for any $\bx\in X^k$, $\xi(\bx)$ is an entrywise nonnegative tensor in the space $\cT^{n\cdot\bone_k}(\Q)$ (which can be identified with $\cT^{n\cdot\bone_k,1}(\Q)$). Moreover, using $\clubsuit 1$, we find
\begin{align*}
\sum_{\ba\in A^k}E_\ba\ast \xi(\bx)
=
\sum_{\ba\in R_k^\A}\lambda_{R_k,\bx,\ba}
=
1.
\end{align*}
It follows that $\xi(\bx)\in\Qconv^{(n^k)}$. We now prove that $\xi$ yields a homomorphism from $\Xk$ to $\freeQ(\Ak)$. Take a symbol $R\in\sigma$ of arity $r$ and a tuple $\bx\in R^\X$, so that $\bx^\tensor{k}\in R^{\Xk}$. We need to show that $\xi(\bx^\tensor{k})\in R^{\freeQ(\Ak)}$. Equivalently, we seek some vector $\bq\in\Qconv^{(|R^\A|)}$ such that $\xi(\bx_\bi)=\bq_{/\pi_\bi}$ for any $\bi\in [r]^k$. Consider the vector $\bq\in \cT^{|R^\A|}(\Q)$ defined by $E_\ba\ast \bq=\lambda_{R,\bx,\ba}$ for any $\ba\in R^\A$. Similarly as before, $\clubsuit 1$ implies that $\bq\in\Qconv^{(|R^\A|)}$. For $\bi\in [r]^k$ and $\ba\in A^k$, we have
\begin{align*}
E_\ba\ast\xi(\bx_\bi)
&=
\lambda_{R_k,\bx_\bi,\ba}
=
\sum_{\substack{\bb\in R^\A\\\bb_\bi=\ba}}\lambda_{R,\bx,\bb}
=
\sum_{\substack{\bb\in R^\A\\\bb_\bi=\ba}}E_\bb\ast\bq
=
E_\ba\ast P_\bi\ast \bq
=
E_\ba\ast \bq_{/\pi_\bi},
\end{align*}
where the second equality is $\clubsuit 2$ and the fourth follows from Lemma~\ref{lem_multiplication_rule_P}. We deduce that $\xi$ is a homomorphism, as claimed, which means that $\Test{\Qconv}{k}(\X,\A)=\YES$.

Conversely, suppose that $\xi$ is a homomorphism from $\Xk$ to $\freeQ(\Ak)$ witnessing that $\Test{\Qconv}{k}(\X,\A)=\YES$. We associate with any pair $(R,\bx)$ such that $R\in\sigma$ and $\bx\in R^\X$ a vector $\bq_{R,\bx}\in\Qconv^{(|R^\A|)}$ defined as follows. Using that $\bx^\tensor{k}\in R^{\Xk}$ and $\xi$ is a homomorphism, we deduce that $\xi(\bx^\tensor{k})\in R^{\freeQ(\Ak)}$ -- i.e., there exists $\bq\in\Qconv^{(|R^\A|)}$ such that $\xi(\bx_\bi)=\bq_{/\pi_\bi}=P_\bi\ast \bq$ for each $\bi\in [r]^k$, where $r$ is the arity of $R$. We set $\bq_{R,\bx}=\bq$. We now build a solution to $\SA^k(\X,\A)$ as follows: For any $R\in\sigma$, $\bx\in R^\X$, and $\ba\in R^\A$, we set $\lambda_{R,\bx,\ba}= E_\ba\ast\bq_{R,\bx}$. Notice that each $\lambda_{R,\bx,\ba}$ is a rational number in the interval $[0,1]$. Moreover, for $R\in\sigma$ and $\bx\in R^\X$, we have
\begin{align*}
\sum_{\ba\in R^\A}\lambda_{R,\bx,\ba}
&=
\sum_{\ba\in R^\A}E_\ba\ast\bq_{R,\bx}
=
1
\end{align*}
since $\bq_{R,\bx}\in\Qconv$, thus yielding $\clubsuit 1$. Observe that, for any $\by\in X^k$, we have $\bq_{R_k,\by}=\xi(\by)$. Indeed, letting $\bj=(1,\dots,k)\in [k]^k$, we have
\begin{align}
\label{eqn_1645_16062022_NEW}
\bq_{R_k,\by}
&=
\Pi_{\bj}\ast \bq_{R_k,\by}
=
P_\bj\ast \bq_{R_k,\by}
=
\bq_{{R_k,\by}_{/\pi_{\bj}}}
=
\xi(\by_{\bj})
=
\xi(\by), 
\end{align}
where 
the first equality follows from the fact that the contraction by $\Pi_\bj$ acts as the identity (cf.~the proof of Proposition~\ref{lem_a_symmetry})
and 
the second  
from 
Remark~\ref{rem_Pi_i_equals_P_i_sometimes}.
Then, $\clubsuit 2$ follows by noticing that, for $\bi\in [r]^k$ and $\bb\in A^k$, 
\begin{align*}
\sum_{\substack{\ba\in R^\A\\\ba_\bi=\bb}}\lambda_{R,\bx,\ba}
=
\sum_{\substack{\ba\in R^\A\\\ba_\bi=\bb}}E_\ba\ast\bq_{R,\bx}
=
E_\bb\ast P_\bi\ast\bq_{R,\bx}
=
E_\bb\ast\xi(\bx_\bi)
=
E_\bb\ast\bq_{R_k,\bx_\bi}
=
\lambda_{R_k,\bx_\bi,\bb},
\end{align*}
where the second and fourth equalities follow from Lemma~\ref{lem_multiplication_rule_P} and~\eqref{eqn_1645_16062022_NEW}, respectively. Recall from Proposition~\ref{prop_some_minions_are_conic} that $\Qconv$ is a conic minion. Using either Lemma~\ref{lem_ea_ast_Q} or Lemma~\ref{lem_ea_ast_Q_conic} (depending on whether $k\geq\armax(\sigma)$ or $k\geq 2$), if $\ba\in R^\A$ is such that $\bx\not\prec\ba$, we obtain
\begin{align*}
\lambda_{R,\bx,\ba}
&=
E_\ba\ast\bq_{R,\bx}
=
0,
\end{align*}
thus showing that $\clubsuit 3$ is satisfied, too. It follows that $\SA^k(\X,\A)=\YES$, as required.
\end{proof}
\begin{rem}
\label{rem_redundancy_zero_condition_hierarchies}
    Observe that condition $\clubsuit 3$ is not used in the first implication of the proof of Proposition~\ref{prop_SA_acceptance}, showing that $\SA^k(\X,\A)=\YES$ implies $\Xk\to\freeQ(\Ak)$. Since, however, $\clubsuit 3$ is implied by $\Xk\to\freeQ(\Ak)$ (as showed in the second part of the proof), it follows that $\clubsuit 3$ is redundant in the system~\eqref{eqn_SA_and_AIPk} for $\SA^k$ when $k\geq\min(2,\armax(\sigma))$. 
    This fact can be easily seen from Lemma~\ref{lem_ea_ast_Q} and Lemma~\ref{lem_ea_ast_Q_conic}, which guarantee that the $\ba$-th entry of the witness that a homomorphism $\xi: \Xk\to\freeM(\Ak)$ (i.e., a hierarchy solution) preserves a tuple $\bx\in R^\X$ is zero whenever $\bx\not\prec\ba$---which precisely corresponds to condition $\clubsuit 3$.
    Similarly, the proofs of Propositions~\ref{prop_AIPk_acceptance}  and~\ref{prop_BAk_acceptance} show that $\clubsuit 3$ is redundant for both $\AIP^k$ and $\BA^k$, while the proof of Proposition~\ref{prop_SoS_acceptance} shows that $\spadesuit 4$ is redundant for $\SoS^k$. 
    We do not omit such conditions from the definition of the hierarchies in Section~\ref{sec_relaxations_hierarchies} since this is the way they are commonly described in the literature, with variables corresponding to sets $V\subseteq X$ and partial assignments from $V$ to $A$.
\end{rem}
\begin{prop}
\label{prop_BW_acceptance}
Let $k\in\N$ and let $\X,\A$ be $k$-enhanced $\sigma$-structures such that $k\geq\armax(\sigma)$. Then $\BW^k(\X,\A)=\Test{\Hminion}{k}(\X,\A)$.
\end{prop}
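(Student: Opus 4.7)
The plan is to establish both implications by translating between collections of partial homomorphisms (the $\BW^k$ witnesses) and homomorphisms $\xi:\Xk\to\freeH(\Ak)$ (the $\Test{\Hminion}{k}$ witnesses). By Proposition~\ref{prop_examples_linear_minions}, $\Hminion$ is linear over $(\{0,1\},\vee,\wedge)$ with depth $1$, so each element of $\Hminion^{(n^k)}$ is canonically identified with a nonempty subset of $A^k$; in particular, I shall write $S_\bx\subseteq A^k$ for the nonempty set corresponding to $\xi(\bx)$. The key technical ingredients are Proposition~\ref{lem_a_symmetry}, which over the Boolean semiring yields the identity $S_{\bx_\bi}=\{\bb_\bi:\bb\in S_\bx\}$ for $\bi\in[k]^k$; Lemma~\ref{cor_vanishing_contractions}, which guarantees $\bx\prec\ba$ whenever $\ba\in S_\bx$; and Proposition~\ref{lem_partial_homo}, which needs the hypothesis $k\geq\armax(\sigma)$.

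For the forward direction, starting from $\xi$, I would define $\mathcal{F}$ to consist of all partial maps $f:\{\bx\}\to A$, for $\bx\in X^k$, of the form $f(x_j)=a_j$ for some $\ba\in S_\bx$. Well-definedness of such $f$ follows from $\bx\prec\ba$, and Proposition~\ref{lem_partial_homo} shows that $f$ is indeed a partial homomorphism. The identity $S_{\bx_\bi}=\{\bb_\bi:\bb\in S_\bx\}$ ensures the definition of $\mathcal{F}$ does not depend on the choice of $\bx$ with given support, and simultaneously provides both closure under restriction (pick $\bi$ with $\{\bx_\bi\}$ equal to the target subset, replace $\ba$ by $\ba_\bi$) and the extension property (given $f$ with domain $S\subseteq V$ with $|V|\leq k$, choose $\bx'$ with $\{\bx'\}=V$ and $\bi$ so that $\bx'_\bi=\bx$, then lift $\ba\in S_\bx=S_{\bx'_\bi}$ to some $\bb\in S_{\bx'}$ with $\bb_\bi=\ba$, giving the extension $x'_j\mapsto b_j$). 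Nonemptiness of $\mathcal{F}$ is immediate since every $S_\bx$ is nonempty.

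For the reverse direction, starting from $\mathcal{F}$, I would define $\xi(\bx)$ to be the set of all $\ba\in A^k$ satisfying $\bx\prec\ba$ such that the map $x_j\mapsto a_j$ belongs to $\mathcal{F}$. Since restriction gives the empty partial homomorphism in $\mathcal{F}$ and the extension property then produces some $g\in\mathcal{F}$ with $\dom(g)=\{\bx\}$, $\xi(\bx)$ is nonempty, hence a valid element of $\Hminion^{(n^k)}$. To show that $\xi$ preserves each relation symbol $R\in\sigma$ of arity $r$, given $\bx\in R^\X$ I would take $Q\in\Hminion^{(|R^\A|)}$ to be the indicator of the set $T$ of $\bb\in R^\A$ with $\bx\prec\bb$ such that $x_p\mapsto b_p$ lies in $\mathcal{F}$; nonemptiness of $T$ again follows from extending the empty function to domain $\{\bx\}$, using $|\{\bx\}|\leq r\leq k$. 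The required identity $\xi(\bx_\bi)=Q_{/\pi_\bi}$ for $\bi\in[r]^k$ reduces in the Boolean setting to showing $\xi(\bx_\bi)=\{\bb_\bi:\bb\in T\}$: the inclusion $\supseteq$ follows from closure under restriction, while $\subseteq$ follows from the extension property, lifting a partial homomorphism on $\{\bx_\bi\}$ witnessing $\ba\in\xi(\bx_\bi)$ to one on $\{\bx\}$.

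The main obstacle is bookkeeping: correctly translating the tensor-indexed combinatorics of $\freeH(\Ak)$ (tuples $\bx\in X^k$, symmetries induced by $\bi\in[k]^k$) into the subset-indexed combinatorics of partial homomorphisms, and ensuring at each step that $k\geq\armax(\sigma)$ is used precisely where needed — namely, to apply Proposition~\ref{lem_partial_homo} and to find $\bi\in[k]^r$ realising any given tuple $\bx\in R^\X$ as some $\bx'_\bi$ with $\bx'\in X^k$.
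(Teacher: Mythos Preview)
Your proposal is correct and follows essentially the same approach as the paper's proof: both directions translate between the collection $\mathcal{F}$ of partial homomorphisms and a homomorphism $\xi:\Xk\to\mathbb{F}_{\Hminion}(\Ak)$ via the identifications $\ba\in S_\bx \Leftrightarrow f_{\bx,\ba}\in\mathcal{F}$, using Lemma~\ref{cor_vanishing_contractions}, Proposition~\ref{lem_a_symmetry}, and Proposition~\ref{lem_partial_homo} exactly as you indicate. One cosmetic difference: in the direction $\mathcal{F}\Rightarrow\xi$, the paper first writes $\by\in R^\X$ as $\bx_\bi$ for some $\bx\in X^k$ and defines the witness $\bq\in\Hminion^{(|R^\A|)}$ via extensions of $f_{\by,\cdot}$ to $\{\bx\}$, whereas you define $T$ directly from the $r$-tuple $\bx$; the resulting subsets of $R^\A$ coincide (by restriction/extension in $\mathcal{F}$), and your version is slightly more direct. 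The only missing detail is that in the $\xi\Rightarrow\mathcal{F}$ direction you should explicitly adjoin the empty map $\epsilon$ to $\mathcal{F}$ to ensure closure under restriction to the empty domain, as the paper does.
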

\begin{proof}
Given two sets $S,T$, an integer $p\in\N$, and two tuples $\textbf{s}=(s_1,\dots,s_p)\in S^p, \textbf{t}=(t_1,\dots,t_p)\in T^p$ such that $\textbf{s}\prec\textbf{t}$, we shall consider the function $f_{\textbf{s},\textbf{t}}:\{\textbf{s}\}\to T$ defined by $f_{\textbf{s},\textbf{t}}(s_\alpha)=t_\alpha$ for each $\alpha\in [p]$. Also, we denote by $\epsilon:\emptyset\to\A$ the empty mapping.

Suppose $\BW^k(\X,\A)=\YES$ and let $\mathcal{F}$ be a nonempty collection of partial homomorphisms from $\X$ to $\A$ witnessing it.
Recall from Section~\ref{sec_tensors_tuples} that the space of tensors $\cT^{n\cdot\bone_k,1}(\{0,1\})$ can be identified with $\cT^{n\cdot\bone_k}(\{0,1\})$.
Define the map $\xi:X^k\to\mathcal{T}^{n\cdot\bone_k}(\{0,1\})$ by setting, for $\bx\in X^k$ and $\ba\in A^k$, $E_\ba\ast\xi(\bx)=1$ if $\bx\prec\ba$ and $f_{\bx,\ba}\in \mathcal{F}$, $E_\ba\ast\xi(\bx)=0$ otherwise. We claim that $\xi$ yields a homomorphism from $\X^\tensor{k}$ to $\mathbb{F}_{\mathscr{H}}(\A^\tensor{k})$. Take $R\in\sigma$ of arity $r$ and $\by\in R^\X$, so $\by^\tensor{k}\in R^{\Xk}$. We need to show that $\xi(\by^\tensor{k})\in R^{\mathbb{F}_{\mathscr{H}}(\A^\tensor{k})}$. 
Since $k\geq r$, we can write $\by=\bx_\bi$ for some $\bx\in X^k$, $\bi\in [k]^r$.
Given $\ba\in R^\A$, consider the set $B_\ba=\{\bb\in A^k:\bb_\bi=\ba \mbox{ and }\bx\prec\bb\}$. We define a vector $\bq\in\cT^{|R^\A|}(\{0,1\})$ by letting, for each $\ba\in R^\A$, the $\ba$-th entry of $\bq$ be $1$ if $f_{\bx,\bb}\in \mathcal{F}$ for some $\bb\in B_\ba$, $0$ otherwise. 
We now show that $\bq\in \mathscr{H}^{(|R^\A|)}$; i.e., that $\bq$ is not identically zero. 
Observe first that, since $\mathcal{F}$ is nonempty and closed under restrictions, it contains the empty mapping $\epsilon$. Applying the extension property to $\epsilon$, we find that there exists some $f\in\mathcal{F}$ whose domain is $\{\bx\}$ -- that is, there exists some $\bc\in A^k$ such that $\bx\prec \bc$ and $f_{\bx,\bc}=f\in\mathcal{F}$. Notice that $\by\in R^\X\cap \{\bx\}^r=R^{\X[\{\bx\}]}$ (where we recall that $\X[\{\bx\}]$ is the substructure of $\X$ induced by $\{\bx\}$). Using that $f_{\bx,\bc}$ is a partial homomorphism, we obtain $\bc_\bi=f_{\bx,\bc}(\bx_\bi)=f_{\bx,\bc}(\by)\in R^\A$. We then conclude that $\be_{\bc_\bi}\ast \bq=1$, so $\bq\in \mathscr{H}^{(|R^\A|)}$, as required.
If we manage to show that $\bq_{/\pi_{\bell}}=\xi(\by_{\bell})$ for any $\bell\in [r]^k$, we can conclude that $\xi(\by^\tensor{k})\in R^{\mathbb{F}_{\mathscr{H}}(\A^\tensor{k})}$, thus proving the claim.
Recall from Proposition~\ref{prop_examples_linear_minions} that $\Hminion$ is a linear minion on the semiring $(\{0,1\},\vee,\wedge)$.
For $\ba\in A^k$, using Lemma~\ref{lem_multiplication_rule_P}, we find
\begin{align}
\label{eqn_1405_24062022}
E_\ba\ast \bq_{/\pi_{\bell}}
=
E_\ba\ast P_{\bell}\ast \bq
=
\sum_{\substack{\bc\in R^\A\\\bc_{\bell}=\ba}}E_\bc\ast \bq
=
\bigvee_{\substack{\bc\in R^\A\\\bc_{\bell}=\ba}}E_\bc\ast \bq.
\end{align}
It follows that the expression in~\eqref{eqn_1405_24062022} equals $1$ if 
\begin{align}
\label{eqn_star}
\tag{$\star$}
\exists \bc\in R^\A \,\mbox{ s.t. }\, \bc_{\bell}=\ba \,\mbox{ and }\, f_{\bx,\bb}\in \mathcal{F} \,\mbox{ for some }\, \bb\in B_{\bc},
\end{align}
$0$ otherwise. On the other hand,
$
E_\ba\ast \xi(\by_{\bell})
$
equals $1$ if 
\begin{align}
\label{eqn_bullet}
\tag{$\bullet$}
\by_{\bell}\prec \ba \,\mbox{ and }\, f_{\by_{\bell},\ba}\in \mathcal{F},
\end{align}
$0$ otherwise. We now show that the conditions~\eqref{eqn_star} and~\eqref{eqn_bullet} are equivalent, which concludes the proof of the claim. Suppose that~\eqref{eqn_star} holds. Since $\bb\in B_\bc$, we have $\bx\prec\bb$, which yields $\bx_{\bi_\bell}\prec \bb_{\bi_\bell}$ as ``$\prec$'' is preserved under projections. Using the restriction property applied to $f_{\bx,\bb}$, we find that $f_{\bx_{\bi_{\bell}},\bb_{\bi_{\bell}}}\in \mathcal{F}$. Then,~\eqref{eqn_bullet} follows by observing that $\bx_{\bi_\bell}=\by_\bell$ and, since $\bb\in B_\bc$ and $\bc_\bell=\ba$, $\bb_{\bi_\bell}=\ba$.
Suppose now that~\eqref{eqn_bullet} holds. Using the extension property applied to $f_{\by_{\bell},\ba}=f_{\bx_{\bi_{\bell}},\ba}$ we find that $f_{\bx,\bb}\in \mathcal{F}$ for some $\bb\in A^k$ such that $\bx\prec\bb$ and $\bb_{\bi_{\bell}}=\ba$. Since $f_{\bx,\bb}$ is a partial homomorphism from $\X$ to $\A$, $R^\A\ni f_{\bx,\bb}(\by)=f_{\bx,\bb}(\bx_\bi)=\bb_\bi$. Calling $\bc=\bb_\bi$, we obtain~\eqref{eqn_star}.

Conversely, suppose that $\xi:\X^\tensor{k}\to\mathbb{F}_{\mathscr{H}}(\A^\tensor{k})$ is a homomorphism witnessing that $\Test{\Hminion}{k}(\X,\A)$ accepts, and consider the collection $\mathcal{F}=\{f_{\bx,\ba}:\bx\in X^k,\ba\in\supp(\xi(\bx))\}\cup\{\epsilon\}$. Notice that $\mathcal{F}$ is well defined by virtue of Lemma~\ref{cor_vanishing_contractions}, as $\ba\in\supp(\xi(\bx))$ implies that $\bx\prec \ba$, and it is nonempty. 
We claim that any function $f_{\bx,\ba}\in \mathcal{F}$ is a partial homomorphism from $\X$ to $\A$. (Notice that $\epsilon$ is trivially a partial homomorphism.) Indeed, given $R\in\sigma$ of arity $r$ and $\by\in R^{\X[\{\bx\}]}=R^\X\cap \{\bx\}^r$, we can write $\by=\bx_\bi$ for some $\bi\in [k]^r$. 
Then, $f_{\bx,\ba}(\by)=f_{\bx,\ba}(\bx_\bi)=\ba_\bi\in R^\A$, where we have used Proposition~\ref{lem_partial_homo} (which applies to $\Hminion$ since, by Proposition~\ref{prop_some_minions_are_conic}, $\Hminion$ is a conic minion). To show that $\mathcal{F}$ is closed under restrictions, take $f\in\mathcal{F}$ and $V\subseteq\dom(f)$; we need to show that $f|_{V}\in\mathcal{F}$. The cases $f=\epsilon$ or $V=\emptyset$ are trivial, so we can assume $f=f_{\bx,\ba}$ (which means that $\dom(f)=\{\bx\}$) and write $V=\{\bx_\bell\}$ for some $\bell\in [k]^k$. 
Observe that $f_{\bx,\ba}|_{V}=f_{\bx_\bell,\ba_\bell}$. We claim that $E_{\ba_\bell}\ast\xi(\bx_\bell)=1$. Otherwise, using Proposition~\ref{lem_a_symmetry} and Lemma~\ref{lem_multiplication_rule}, we would have
\begin{align*}
0
&=
E_{\ba_\bell}\ast\xi(\bx_\bell)
=
E_{\ba_\bell}\ast(\Pi_\bell\cont{k}\xi(\bx))
=
E_{\ba_\bell}\ast\Pi_\bell\ast\xi(\bx)
=
\bigvee_{\substack{\bb\in A^k\\ \bb_\bell=\ba_\bell}}E_\bb\ast \xi(\bx).
\end{align*}
This would imply that $E_\bb\ast\xi(\bx)=0$ whenever $\bb\in A^k$ is such that $\bb_\bell=\ba_\bell$; in particular, $E_\ba\ast\xi(\bx)=0$, a contradiction. So $E_{\ba_\bell}\ast\xi(\bx_\bell)=1$, as claimed, and it follows that $f_{\bx_\bell,\ba_\bell}\in \mathcal{F}$. 
We now claim that $\mathcal{F}$ has the extension property up to $k$. Take $f\in\mathcal{F}$ and $V\subseteq X$ such that $|V|\leq k$ and $\dom(f)\subseteq V$; we need to show that there exists $g\in\mathcal{F}$ such that $g$ extends $f$ and $\dom(g)=V$. If $f=\epsilon$ and $V=\emptyset$, the claim is trivial; if $f=\epsilon$ and $V\neq\emptyset$, we can write $V=\{\bx\}$ for some $\bx\in X^k$, and the claim follows by noticing that, by the definition of $\Hminion$, $\supp(\xi(\bx))\neq\emptyset$. Therefore, we can assume that $f\neq\epsilon$, so $f=f_{\bx,\ba}$ for some $\bx\in X^k$, $\ba\in\supp(\xi(\bx))$. Since $V\neq\emptyset$, we can write $V=\{\by\}$ for some $\by\in X^k$. Then, $\dom(f)\subseteq V$ becomes $\{\bx\}\subseteq\{\by\}$, so $\bx=\by_\bell$ for some $\bell\in [k]^k$. Using Proposition~\ref{lem_a_symmetry} and Lemma~\ref{lem_multiplication_rule}, we find
\begin{align*}
1=E_\ba\ast\xi(\bx)=E_\ba\ast\xi(\by_\bell)=E_\ba\ast \Pi_\bell\ast\xi(\by)=\bigvee_{\substack{\bb\in A^k\\\bb_\bell=\ba}}E_\bb\ast\xi(\by),
\end{align*}  
which implies that $E_\bb\ast\xi(\by)=1$ for some $\bb\in A^k$ such that $\bb_\bell=\ba$. It follows that $f_{\by,\bb}\in\mathcal{F}$. Notice that $\dom(f_{\by,\bb})=\{\by\}=V$, and $f_{\by,\bb}|_{\{\bx\}}=f_{\bx,\ba}$, so the claim is true. Hence, $\mathcal{F}$ witnesses that $\BW^k(\X,\A)=\YES$.
\end{proof}
\begin{rem}
Recall from Remark~\ref{rem_enhancement} that, unlike the other hierarchies in Theorem~\ref{thm_main_multilinear_tests}, the $\BW^k$ hierarchy does not require that the structures $\X$ and $\A$ to which it is applied should be $k$-enhanced. In particular, the proof of Proposition~\ref{prop_BW_acceptance} actually establishes the following, slightly stronger statement: \textit{Let $k\in\N$ and let $\X,\A$ be $\sigma$-structures such that $k\geq\armax(\sigma)$. Then $\BW^k(\X,\A)=\Test{\Hminion}{k}(\tilde\X,\tilde\A)$, where $\tilde\X$ (resp. $\tilde\A$) is the $k$-enhanced version of $\X$ (resp. $\A$).} 
\end{rem}

\begin{prop}
\label{prop_AIPk_acceptance}
Let $k\in\N$ and let $\X,\A$ be $k$-enhanced $\sigma$-structures such that $k\geq\armax(\sigma)$. Then $\AIP^k(\X,\A)=\Test{\Zaff}{k}(\X,\A)$.
\end{prop}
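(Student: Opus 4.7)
The plan is to mimic the strategy used in the proof of Proposition~\ref{prop_SA_acceptance}, replacing the role of $\Qconv$ with $\Zaff$ and the semiring $\Q$ with $\Z$. The main structural difference is that $\Zaff$ is linear but not conic (cf.~Proposition~\ref{prop_some_minions_are_conic}), so Lemma~\ref{lem_ea_ast_Q_conic} is unavailable and we must rely on Lemma~\ref{lem_ea_ast_Q}, which requires $k\geq\armax(\sigma)$; this is precisely the hypothesis we have.

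For the direction $\AIP^k(\X,\A)=\YES \Rightarrow \Test{\Zaff}{k}(\X,\A)=\YES$, I would fix an integer solution $\{\lambda_{R,\bx,\ba}\}$ to the system~\eqref{eqn_SA_and_AIPk} and define $\xi:X^k\to\cT^{n\cdot\bone_k}(\Z)$ by $E_\ba\ast\xi(\bx)=\lambda_{R_k,\bx,\ba}$. The entries are integers, and $\clubsuit 1$ applied to the $k$-ary tuple $\bx\in X^k=R_k^\X$ shows that they sum to $1$, hence $\xi(\bx)\in\Zaff^{(n^k)}$. To verify $\xi(\bx^\tensor{k})\in R^{\freeZ(\Ak)}$ for each $R\in\sigma$ of arity $r$ and $\bx\in R^\X$, I take the candidate tensor $\bq\in\cT^{|R^\A|}(\Z)$ with $E_\ba\ast\bq=\lambda_{R,\bx,\ba}$. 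Condition $\clubsuit 1$ places $\bq$ in $\Zaff^{(|R^\A|)}$, and, for each $\bi\in [r]^k$ and $\ba\in A^k$, the computation
\[
E_\ba\ast\xi(\bx_\bi)=\lambda_{R_k,\bx_\bi,\ba}=\sum_{\substack{\bb\in R^\A\\\bb_\bi=\ba}}\lambda_{R,\bx,\bb}=E_\ba\ast P_\bi\ast\bq=E_\ba\ast\bq_{/\pi_\bi}
\]
(using $\clubsuit 2$ and Lemma~\ref{lem_multiplication_rule_P}) shows that $\xi(\bx_\bi)=\bq_{/\pi_\bi}$, as required.

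For the converse direction, I would proceed exactly as in the proof of Proposition~\ref{prop_SA_acceptance}. Given a homomorphism $\xi:\Xk\to\freeZ(\Ak)$, for each pair $(R,\bx)$ with $\bx\in R^\X$ pick $\bq_{R,\bx}\in\Zaff^{(|R^\A|)}$ such that $\xi(\bx_\bi)=\bq_{R,\bx}{}_{/\pi_\bi}$ for all $\bi\in [r]^k$, and define $\lambda_{R,\bx,\ba}=E_\ba\ast\bq_{R,\bx}$. These are integers. Condition $\clubsuit 1$ follows from $\bq_{R,\bx}\in\Zaff$. Using Remark~\ref{rem_Pi_i_equals_P_i_sometimes} together with the identity $\Pi_{(1,\dots,k)}\ast\bq_{R_k,\by}=\bq_{R_k,\by}$, one obtains $\bq_{R_k,\by}=\xi(\by)$ for each $\by\in X^k$, and then $\clubsuit 2$ follows from Lemma~\ref{lem_multiplication_rule_P} by the same chain of equalities as in Proposition~\ref{prop_SA_acceptance}.

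The only step that genuinely differs from the $\SA^k$ case is $\clubsuit 3$. For any $R\in\sigma$ of arity $r$, $\bx\in R^\X$, and $\ba\in R^\A$ with $\bx\not\prec\ba$, I need $\lambda_{R,\bx,\ba}=E_\ba\ast\bq_{R,\bx}=0$. Here I invoke Lemma~\ref{lem_ea_ast_Q} with $Q=\bq_{R,\bx}$; this is precisely where the hypothesis $k\geq\armax(\sigma)$ is used, replacing the conicity-based argument used for $\Qconv$. The conclusion $E_\ba\ast\bq_{R,\bx}=\bzero_1=0$ gives $\clubsuit 3$ and completes the proof. I expect this to be the only delicate point of the argument; everything else is a direct transcription of the $\Qconv$ case with $\Q$ replaced by $\Z$.
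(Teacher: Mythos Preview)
Your proposal is correct and matches the paper's approach exactly: the paper's proof simply states that the argument is analogous to that of Proposition~\ref{prop_SA_acceptance}, with Lemma~\ref{lem_ea_ast_Q} replacing Lemma~\ref{lem_ea_ast_Q_conic} because $\Zaff$ is not conic, which is precisely why the hypothesis $k\geq\armax(\sigma)$ is needed. You have spelled out the details of this analogy faithfully.
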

\begin{proof}
The proof is analogous to that of Proposition~\ref{prop_SA_acceptance}, the only difference being that Lemma~\ref{lem_ea_ast_Q_conic} cannot be applied in this case since $\Zaff$ is not a conic minion (cf.~Proposition~\ref{prop_some_minions_are_conic}). As a consequence, unlike in Proposition~\ref{prop_SA_acceptance},  we now need to assume that $k\geq\armax(\sigma)$.
\end{proof}

The proof of Theorem~\ref{thm_main_multilinear_tests} for $\SoS^k$, given in Proposition~\ref{prop_SoS_acceptance} below, follows the same scheme as that of Proposition~\ref{prop_SA_acceptance}. There is, however, one additional complication due to the fact that the objects in the minion $\Sminion$ are matrices having infinitely many columns. We deal with this technical issue through the orthonormalisation argument we already used for the proof of Proposition~\ref{prop_acceptance_SDP}: We find an orthonormal basis for the finitely generated vector space defined as the sum of the row-spaces of the matrices in $\Sminion$ appearing as images of a given homomorphism.
\begin{prop}
\label{prop_SoS_acceptance}
Let $k\in\N$ and let $\X,\A$ be $k$-enhanced $\sigma$-structures such that $k\geq\min(2,\armax(\sigma))$. Then $\SoS^k(\X,\A)=\Test{\Sminion}{k}(\X,\A)$.
\end{prop}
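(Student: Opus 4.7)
The plan is to mirror the structure of the proof of Proposition~\ref{prop_SA_acceptance} for $\SA^k$, but to deal with the technical issue that elements of $\Sminion$ are matrices whose rows live in $\R^\ale$, while the vectors $\blambda$ in the definition of $\SoS^k$ live in $\R^\omega$ where $\omega=\sum_{R\in\sigma}|R^\X|\cdot|R^\A|$. This size mismatch will be handled exactly as in the proof of Proposition~\ref{prop_acceptance_SDP}, via a Gram--Schmidt argument. Since $\Sminion$ is conic by Proposition~\ref{prop_some_minions_are_conic}, the hypothesis $k\geq\min(2,\armax(\sigma))$ is enough to invoke Lemma~\ref{lem_ea_ast_Q} or Lemma~\ref{lem_ea_ast_Q_conic} as needed.

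For the implication $\SoS^k(\X,\A)=\YES\Rightarrow\Test{\Sminion}{k}(\X,\A)=\YES$, I would assume a family of vectors $\blambda_{R,\bx,\ba}\in\R^\omega$ satisfying $\spadesuit 1$--$\spadesuit 4$ and define $\xi:X^k\to\cT^{n^k,\ale}(\R)$ by setting $\be_\ba^T\xi(\bx)\be_j$ to be the $j$-th entry of $\blambda_{R_k,\bx,\ba}$ for $j\leq\omega$, and $0$ otherwise. Conditions C1, C2, C3 from Definition~\ref{defn_minion_SDP} then follow from C1 being automatic (only $\omega$ columns are nonzero), C2 from $\spadesuit 2$, and C3 from $\spadesuit 1$, showing $\xi(\bx)\in\Sminion^{(n^k)}$. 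To verify $\xi$ is a homomorphism, for each $R\in\sigma$ of arity $r$ and $\bx\in R^\X$ I would build $Q\in\Sminion^{(|R^\A|)}$ with rows encoding $\blambda_{R,\bx,\ba}$ in the same padded manner; $\spadesuit 1,\spadesuit 2$ again guarantee $Q\in\Sminion^{(|R^\A|)}$, and then Lemma~\ref{lem_basic_P_i} combined with $\spadesuit 3$ gives $\xi(\bx_\bi)=Q_{/\pi_\bi}=P_\bi\cont{1}Q$ for each $\bi\in[r]^k$, exactly as in the $\SA^k$ argument.

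For the converse, I would begin with a homomorphism $\xi:\Xk\to\freeS(\Ak)$ and, for each pair $(R,\bx)$ with $\bx\in R^\X$, fix a matrix $Q_{R,\bx}\in\Sminion^{(|R^\A|)}$ realising $\xi(\bx_\bi)=(Q_{R,\bx})_{/\pi_\bi}$ for $\bi\in[r]^k$. Collecting the rows of all such $Q_{R,\bx}$ and of all $\xi(\bx)$ into a single set $S$ of vectors in $\R^\ale$, its span $\mathcal{U}$ has dimension at most $\omega$, so I can embed it in a vector space $\mathcal{V}\subseteq\R^\ale$ of dimension exactly $\omega$ and use Gram--Schmidt to produce a matrix $Z\in\cT^{\ale,\omega}(\R)$ with $Z^TZ=I_\omega$ and $ZZ^T\bv=\bv$ for $\bv\in\mathcal{V}$. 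I would then set $\blambda_{R,\bx,\ba}=Z^TQ_{R,\bx}^T\be_\ba$; $\spadesuit 1$ follows from C3 together with $\xi(\bx)^T\be_\ba\in\mathcal{V}$, $\spadesuit 2$ follows directly from C2, and $\spadesuit 3$ follows from the identity $Q_{R_k,\by}=\xi(\by)$ (proved exactly as in equation~\eqref{eqn_1645_16062022_NEW} of the $\SA^k$ proof, using Remark~\ref{rem_Pi_i_equals_P_i_sometimes}) combined with Lemma~\ref{lem_basic_P_i}. Finally, $\spadesuit 4$, which requires $\|\blambda_{R,\bx,\ba}\|^2=0$ whenever $\bx\not\prec\ba$, reduces to $E_\ba\ast Q_{R,\bx}=\bzero_\ale$, which is precisely the content of Lemma~\ref{lem_ea_ast_Q} (if $k\geq\armax(\sigma)$) or Lemma~\ref{lem_ea_ast_Q_conic} (if $k\geq 2$), both of which apply since $\Sminion$ is conic.

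The main obstacle, already foreseen by the authors in the discussion preceding the statement, is reconciling the finite number $\omega$ of $\SoS$ coordinates with the countable ambient dimension of $\Sminion$-matrices; the Gram--Schmidt step collapsing $\mathcal{V}$ onto $\R^\omega$ is the one delicate point, but it is essentially identical to the corresponding manoeuvre in the proof of Proposition~\ref{prop_acceptance_SDP}. Everything else is bookkeeping tying $\spadesuit 1$--$\spadesuit 4$ to C1--C3 together with the tensor-contraction identity of Lemma~\ref{lem_basic_P_i}.
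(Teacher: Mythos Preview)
Your proposal is correct and follows essentially the same approach as the paper's proof: padding the $\blambda$-vectors with zeros to embed them in $\R^\ale$ for the forward direction, and using a Gram--Schmidt projection $Z$ onto a finite-dimensional subspace for the converse, with $\spadesuit 4$ handled via Lemma~\ref{lem_ea_ast_Q} or Lemma~\ref{lem_ea_ast_Q_conic}. One small correction: the lemma you want for the identity $E_\ba\ast P_\bi=\sum_{\bb\in R^\A,\bb_\bi=\ba}E_\bb$ is Lemma~\ref{lem_multiplication_rule_P} (the tensor version), not Lemma~\ref{lem_basic_P_i} (which concerns the matrix $P_i$ for a single index); also, there is no need to include the rows of $\xi(\bx)$ in your set $S$, since they already lie in the row space of $Q_{R_k,\bx}$.
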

\begin{proof}
Suppose that $\SoS^k(\X,\A)=\YES$ and let the family of vectors $\blambda_{R,\bx,\ba}\in \R^\gamma$ witness it, where $\gamma=\sum_{R\in\sigma}|R^\X|\cdot|R^\A|$. Consider the map $\xi:X^k\to\mathcal{T}^{n\cdot\bone_k,\ale}(\R)$ defined by
\begin{align*}
E_\ba\ast\xi(\bx)=
\begin{bmatrix}
\blambda_{R_k,\bx,\ba}\\
\bzero_\ale
\end{bmatrix}
\hspace{1cm}
\bx\in X^k,\ba\in A^k.
\end{align*}
We claim that $\xi$ yields a homomorphism from $\Xk$ to $\freeS(\Ak)$. First of
  all, we need to show that $\xi(\bx)\in \Sminion^{(n^k)}$ for each $\bx\in
  X^k$. The requirement (C1) is trivially satisfied since, by construction, the $j$-th entry of $E_\ba\ast\xi(\bx)$ is zero whenever $j>\gamma$. Given $\ba,\ba'\in A^k$,
\begin{align*}
(E_\ba\ast\xi(\bx))^T(E_{\ba'}\ast\xi(\bx))
&=
\blambda_{R_k,\bx,\ba}\cdot\blambda_{R_k,\bx,\ba'}.
\end{align*}
  If $\ba\neq \ba'$, this quantity is zero by $\spadesuit 2$, so (C2) is satisfied. Finally,
\begin{align*}
\sum_{\ba\in A^k}(E_\ba\ast\xi(\bx))^T(E_{\ba}\ast\xi(\bx))
&=
\sum_{\ba\in A^k}\|\blambda_{R_k,\bx,\ba}\|^2
=1
\end{align*}
by $\spadesuit 1$,
  so (C3) is also satisfied. Therefore, $\xi(\bx)\in\Sminion^{(n^k)}$. To show that $\xi$ is in fact a homomorphism, take $R\in\sigma$ of arity $r$ and $\bx\in R^\X$, so $\bx^\tensor{k}\in R^{\Xk}$. We need to show that $\xi(\bx^\tensor{k})\in R^{\freeS(\Ak)}$. Consider the matrix $Q\in\cT^{|R^\A|,\ale}(\R)$ defined by 
\begin{align*}
Q^T\be_\ba=\begin{bmatrix}
\blambda_{R,\bx,\ba}\\ \bzero_\ale
\end{bmatrix}
\hspace{1cm}
\ba\in R^\A.
\end{align*}
  Using the same arguments as above, we check that $Q$ satisfies (C1) and that
  $\be_\ba^TQQ^T\be_{\ba'}=\blambda_{R,\bx\,\ba}\cdot\blambda_{R,\bx,\ba'}$, so
  (C2) follows from $\spadesuit 2$ and (C3) from $\spadesuit 1$. Therefore, $Q\in\Sminion^{(|R^\A|)}$. We now claim that $\xi(\bx_\bi)=Q_{/\pi_\bi}$ for each $\bi\in [r]^k$. Indeed, observe that, for each $\ba\in A^k$,
\begin{align*}
E_\ba\ast Q_{/\pi_\bi}
&=
E_\ba\ast P_\bi\ast Q
=
\big(\sum_{\substack{\bb\in R^\A\\\bb_\bi=\ba}}E_\bb\big)\ast Q
=
Q^T\big(\sum_{\substack{\bb\in R^\A\\\bb_\bi=\ba}}\be_\bb\big)
=
\sum_{\substack{\bb\in R^\A\\\bb_\bi=\ba}}Q^T\be_\bb\\
&=
\begin{bmatrix}
\sum_{\substack{\bb\in R^\A\\\bb_\bi=\ba}}\blambda_{R,\bx,\bb}\\\bzero_\ale
\end{bmatrix}
=
\begin{bmatrix}
\blambda_{R_k,\bx_\bi,\ba}
\\ \bzero_\ale
\end{bmatrix}
=
E_\ba\ast\xi(\bx_\bi),
\end{align*}
where the second and sixth equalities are obtained using Lemma~\ref{lem_multiplication_rule_P}
 and $\spadesuit 3$, respectively. It follows that the claim is true, so $\xi(\bx^\tensor{k})\in R^{\freeS(\Ak)}$, which concludes the proof that $\xi$ is a homomorphism and that $\Test{\Sminion}{k}(\X,\A)=\YES$.

Conversely, let $\xi:\Xk\to\freeS(\Ak)$ be a homomorphism witnessing that $\Test{\Sminion}{k}(\X,\A)=\YES$. Take $R\in\sigma$ of arity $r$ and $\bx\in R^\X$. We have that $\bx^\tensor{k}\in R^{\Xk}$, so $\xi(\bx^\tensor{k})\in R^{\freeS(\Ak)}$ since $\xi$ is a homomorphism. As a consequence, we can fix a matrix $Q_{R,\bx}\in\Sminion^{(|R^\A|)}$ satisfying $\xi(\bx_\bi)={Q_{R,\bx}}_{/\pi_\bi}$ for each $\bi\in[r]^k$. Consider the set $S=\{Q_{R,\bx}^T\be_\ba:R\in\sigma,\bx\in R^\X,\ba\in R^\A\}$ and the vector space $\mathcal{U}=\Span(S)\subseteq\R^\ale$, and observe that $\dim(\mathcal{U})\leq |S|\leq \sum_{R\in\sigma}|R^\X|\cdot|R^\A|=\gamma$. Choose a vector space $\mathcal{V}$ of dimension $\gamma$ such that $\mathcal{U}\subseteq\mathcal{V}\subseteq \R^\ale$. Using the Gram–Schmidt process, we find a projection matrix ${Z}\in\R^{\ale,\gamma}$ such that ${Z}^T{Z}=I_\gamma$ and ${Z}{Z}^T\bv=\bv$ for any $\bv\in\mathcal{V}$. Consider the family of vectors
\begin{align}
\label{eqn_1111_1904}
\blambda_{R,\bx,\ba}={Z}^TQ_{R,\bx}^T\be_\ba\in\R^\gamma \hspace{1cm} R\in\sigma,\bx\in R^\X,\ba\in R^\A.
\end{align}
We claim that~\eqref{eqn_1111_1904} witnesses that $\SoS^k(\X,\A)=\YES$. Take $R\in\sigma$ of arity $r$ and $\bx\in R^\X$. Recall from Proposition~\ref{prop_some_minions_are_conic} that $\Sminion$ is a conic minion. Using either Lemma~\ref{lem_ea_ast_Q} or Lemma~\ref{lem_ea_ast_Q_conic} (depending on whether $k\geq\armax(\sigma)$ or $k\geq 2$), given $\ba\in R^\A$ such that $\bx\not\prec\ba$,
we find
\begin{align*}
\blambda_{R,\bx,\ba}
&=
{Z}^TQ_{R,\bx}^T\be_\ba
=
{Z}^T\bzero_\ale
=
\bzero_\gamma,
\end{align*}
so $\spadesuit 4$ holds. $\spadesuit 1$ follows from
\begin{align*}
\sum_{\ba\in R^\A}\|\blambda_{R,\bx,\ba}\|^2
&=
\sum_{\ba\in R^\A}
\be_\ba^TQ_{R,\bx}{Z}{Z}^TQ_{R,\bx}^T\be_\ba
=
\sum_{\ba\in R^\A}
\be_\ba^TQ_{R,\bx}Q_{R,\bx}^T\be_\ba
=
\tr(Q_{R,\bx}Q_{R,\bx}^T)
=
1,
\end{align*}
where the second equality is true since $Q_{R,\bx}^T\be_\ba\in
S\subseteq\mathcal{U}\subseteq\mathcal{V}$ and the fourth follows from (C3).
Similarly, using (C2), we find that, if $\ba\neq\ba'\in R^\A$,
\begin{align*}
\blambda_{R,\bx,\ba}\cdot\blambda_{R,\bx,\ba'}
&=
\be_\ba^TQ_{R,\bx}{Z}{Z}^TQ_{R,\bx}^T\be_{\ba'}
=
\be_\ba^TQ_{R,\bx}Q_{R,\bx}^T\be_{\ba'}
=
0,
\end{align*}
so $\spadesuit 2$ holds. We now show that $Q_{R_k,\by}=\xi(\by)$ for any $\by\in X^k$. Indeed, using the same argument as in~\eqref{eqn_1645_16062022_NEW}, letting $\bj=(1,\dots,k)\in [k]^k$, we have
\begin{align}
\label{eqn_1645_16062022}
Q_{R_k,\by}
&=
\Pi_{\bj}\cont{k} Q_{R_k,\by}
=
Q_{{R_k,\by}_{/\pi_{\bj}}}
=
\xi(\by_{\bj})
=
\xi(\by). 
\end{align}
Given $\bi\in [r]^k$ and $\bb\in A^k=R_k^\A$, we have
\begin{align*}
\sum_{\substack{\ba\in R^\A\\\ba_\bi=\bb}}\blambda_{R,\bx,\ba}
&=
\sum_{\substack{\ba\in R^\A\\\ba_\bi=\bb}}{Z}^TQ_{R,\bx}^T\be_\ba
=
\sum_{\substack{\ba\in R^\A\\\ba_\bi=\bb}}E_\ba\ast Q_{R,\bx}\ast {Z}
=
E_\bb\ast P_\bi\ast Q_{R,\bx}\ast {Z}
=
E_\bb\ast {Q_{R,\bx}}_{/\pi_\bi}\ast {Z}\\
&=
E_\bb\ast\xi(\bx_\bi)\ast {Z}
=
E_\bb\ast Q_{R_k,\bx_\bi}\ast {Z}
=
{Z}^TQ_{R_k,\bx_\bi}^T\be_\bb
=
\blambda_{R_k,\bx_\bi,\bb},
\end{align*}
where the third and sixth equalities follow from Lemma~\ref{lem_multiplication_rule_P} and~\eqref{eqn_1645_16062022}, respectively. This shows that $\spadesuit 3$ holds, too, so that~\eqref{eqn_1111_1904} yields a solution for $\SoS^k(\X,\A)$, as claimed.
\end{proof}
\begin{prop}
\label{prop_BAk_acceptance}
Let $k\in\N$ and let $\X,\A$ be $k$-enhanced $\sigma$-structures such that $k\geq\armax(\sigma)$. Then $\BA^k(\X,\A)=\Test{\Mblpaip}{k}(\X,\A)$.
\end{prop}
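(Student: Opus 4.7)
The strategy is to decouple the hierarchy $\Test{\Mblpaip}{k}$ into its $\SA^k$ and $\AIP^k$ components using the semi-direct product machinery, and then match the ``no-zero rows'' condition of the semi-direct product with the refinement condition in the definition of $\BA^k$. The starting observation is that $\Mblpaip=\Qconv\ltimes\Zaff$ (Example~\ref{example_minion_BA_semidirect}), which allows us to apply Proposition~\ref{prop_decoupling_BLPAIP_general}: since $k\geq\armax(\sigma)$, a homomorphism $\vartheta\colon\Xk\to\freeMN(\Ak)$ exists if and only if there exist homomorphisms $\xi\colon\Xk\to\freeQ(\Ak)$ and $\zeta\colon\Xk\to\freeZ(\Ak)$ such that $E_\ba\ast\xi(\bx)=0$ implies $E_\ba\ast\zeta(\bx)=0$ for every $\bx\in X^k$ and $\ba\in A^k$.

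For the forward direction $\BA^k(\X,\A)=\YES\Rightarrow\Test{\Mblpaip}{k}(\X,\A)=\YES$, I would take the rational nonnegative family $\lambda^{(\Bmat)}$ and the integer family $\lambda^{(\Amat)}$ witnessing $\BA^k$, and apply the constructions from the proofs of Propositions~\ref{prop_SA_acceptance} and~\ref{prop_AIPk_acceptance} to obtain homomorphisms $\xi\colon\Xk\to\freeQ(\Ak)$ and $\zeta\colon\Xk\to\freeZ(\Ak)$ with $E_\ba\ast\xi(\bx)=\lambda^{(\Bmat)}_{R_k,\bx,\ba}$ and $E_\ba\ast\zeta(\bx)=\lambda^{(\Amat)}_{R_k,\bx,\ba}$. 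The refinement condition specialised to the symbol $R_k$ then immediately yields the zero-compatibility hypothesis of Proposition~\ref{prop_decoupling_BLPAIP_general}, producing the desired $\vartheta$.

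The backward direction is the more delicate one, and is where I expect the main subtlety to lie: starting from $\vartheta\colon\Xk\to\mathbb{F}_{\Mblpaip}(\Ak)$, the refinement condition in $\BA^k$ must hold for every relation symbol $R\in\sigma$, not merely for $R_k$. The key point is to choose the auxiliary $\Qconv$- and $\Zaff$-witnesses in a coupled way. Concretely, for each $R\in\sigma$ of arity $r$ and each $\bx\in R^\X$, the homomorphism property of $\vartheta$ yields some $W_{R,\bx}=\pair{Q_{R,\bx}}{Z_{R,\bx}}\in\Mblpaip^{(|R^\A|)}$ with $\vartheta(\bx_\bi)=W_{R,\bx\,/\pi_\bi}$ for every $\bi\in [r]^k$. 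I would then set $\lambda^{(\Bmat)}_{R,\bx,\ba}=E_\ba\ast Q_{R,\bx}$ and $\lambda^{(\Amat)}_{R,\bx,\ba}=E_\ba\ast Z_{R,\bx}$. The refinement condition then follows \emph{by the very definition} of $\Mblpaip=\Qconv\ltimes\Zaff$: the rows of $Z_{R,\bx}$ indexed by rows of zero in $Q_{R,\bx}$ must themselves be zero.

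The remaining verifications of ($\clubsuit 1$), ($\clubsuit 2$), ($\clubsuit 3$) for both families proceed exactly as in the proofs of Propositions~\ref{prop_SA_acceptance} and~\ref{prop_AIPk_acceptance}, once one observes that $\vartheta(\by)=W_{R_k,\by}$ for every $\by\in X^k$ (the identity analogous to~\eqref{eqn_1645_16062022_NEW}), so that the relation-specific weights are consistent with the $R_k$-weights through the $\clubsuit 2$ constraints given by Lemma~\ref{lem_multiplication_rule_P}. The vanishing $\clubsuit 3$ for $\lambda^{(\Bmat)}$ uses Lemma~\ref{lem_ea_ast_Q_conic} (since $\Qconv$ is conic), while the one for $\lambda^{(\Amat)}$ requires Lemma~\ref{lem_ea_ast_Q} and therefore the hypothesis $k\geq\armax(\sigma)$ (since $\Zaff$ is not conic). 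Thus the hypothesis on $k$ in the statement of the proposition is inherited from the $\AIP^k$ component.
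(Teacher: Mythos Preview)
Your proposal is correct. Both directions go through as you describe, and the coupling trick in the backward direction---choosing a single witness $W_{R,\bx}=\pair{Q_{R,\bx}}{Z_{R,\bx}}\in\Mblpaip^{(|R^\A|)}$ for each $(R,\bx)$ and reading off the two families from its columns---does give the full refinement condition for every $R\in\sigma$ straight from the definition of the semi-direct product.

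The paper's route is organised a little differently. Rather than re-running the constructions of Propositions~\ref{prop_SA_acceptance} and~\ref{prop_AIPk_acceptance} with coupled witnesses, the paper first isolates and proves the structural observation that, under $k\geq\armax(\sigma)$, the refinement condition for \emph{all} $R\in\sigma$ is equivalent to the refinement condition for $R_k$ alone: using $\clubsuit 2$ with $\bi=(1,2,\dots,r,1,\dots,1)$ one has $\lambda^{(\Bmat)}_{R,\bx,\ba}=\lambda^{(\Bmat)}_{R_k,\bx_\bi,\ba_\bi}$ (and likewise for $\lambda^{(\Amat)}$), because that choice of $\bi$ makes the sum in $\clubsuit 2$ a single term. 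Once this reduction is in hand, the whole proof becomes a chain of black-box equivalences via Propositions~\ref{prop_SA_acceptance},~\ref{prop_AIPk_acceptance}, and~\ref{prop_decoupling_BLPAIP_general}, treated symmetrically in both directions. Your approach trades that reduction lemma for a more explicit reconstruction of the $\lambda$'s in the backward direction; the paper's approach is more modular, but both are equally valid. A minor simplification in your backward direction: since $k\geq\armax(\sigma)$ and $\Mblpaip$ is itself linear, you can apply Lemma~\ref{lem_ea_ast_Q} directly to $\vartheta$ and $W_{R,\bx}$ to obtain $E_\ba\ast W_{R,\bx}=\bzero_2$ when $\bx\not\prec\ba$, giving $\clubsuit 3$ for both families at once without splitting into the conic and non-conic cases.
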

\begin{proof}
    Recall from Section~\ref{sec_relaxations_hierarchies} that $\BA^k(\X,\A)=\YES$ is equivalent to the existence of a rational nonnegative solution (denoted by the superscript $(\Bmat)$) and an integer solution (denoted by the superscript $(\Amat)$) to the system~\eqref{eqn_SA_and_AIPk}, such that 
    \begin{align}
    \label{eqn_refinement_step_01032023}
    \lambda^{(\Bmat)}_{R,\bx,\ba}=0
    \hspace{.35cm}
    \Rightarrow
    \hspace{.5cm}
    \lambda^{(\Amat)}_{R,\bx,\ba}=0
    \end{align}
    for each $R\in\sigma$, $\bx\in R^\X$, and $\ba\in R^\A$. Note that requiring~\eqref{eqn_refinement_step_01032023} for each $R\in\sigma$ is equivalent to only requiring it for $R=R_k$. Indeed, take some $R\in\sigma$ of arity $r$, $\bx\in R^\X$, and $\ba\in R^\A$, and consider the tuple $\bi=(1,2,\dots,r,1,1,\dots,1)\in [r]^k$, which is well defined as $k\geq r$. Noting that $\{\bb\in R^\A:\bb_\bi=\ba_\bi\}=\{\ba\}$, we find from $\clubsuit 2$ that
    \begin{align*}
        \lambda^{(\Bmat)}_{R,\bx,\ba}
        =
        \sum_{\substack{\bb\in R^\A\\\bb_\bi=\ba_\bi}}\lambda^{(\Bmat)}_{R,\bx,\bb}
        =
        \lambda^{(\Bmat)}_{R_k,\bx_\bi,\ba_\bi}
    \end{align*}
    and, similarly, $\lambda^{(\Amat)}_{R,\bx,\ba}=        \lambda^{(\Amat)}_{R_k,\bx_\bi,\ba_\bi}$. As a consequence, if~\eqref{eqn_refinement_step_01032023} holds for $R_k$, it also holds for $R$.
    Therefore, it follows from Propositions~\ref{prop_SA_acceptance} and~\ref{prop_AIPk_acceptance} that $\BA^k(\X,\A)=\YES$ is equivalent to the existence of homomorphisms $\xi:\Xk\to\freeQ(\Ak)$ and $\zeta:\Xk\to\freeZ(\Ak)$ such that $\supp(\zeta(\bx))\subseteq\supp(\xi(\bx))$ for each $\bx\in X^k$. By virtue of Proposition~\ref{prop_decoupling_BLPAIP_general}, this happens precisely when $\Xk\to\freeBA(\Ak)$. Since $\Mblpaip=\BAminion$ (cf.~Example~\ref{example_minion_BA_semidirect}), this is equivalent to $\Xk\to\mathbb{F}_{\Mblpaip}(\Ak)$; i.e., to $\Test{\Mblpaip}{k}(\X,\A)=\YES$.
\end{proof}

\begin{rem}
\label{rem_for_SA_and_SoS_no_need_high_level}
The characterisations of $\SA^k$ and $\SoS^k$ in Propositions~\ref{prop_SA_acceptance} and~\ref{prop_SoS_acceptance} hold for any higher level than the first, unlike the characterisation of $\AIP^k$ in Proposition~\ref{prop_AIPk_acceptance}. This is due to the fact that $\Qconv$ and $\Sminion$ are conic minions, so Lemma~\ref{lem_ea_ast_Q_conic} applies, while $\Zaff$ is not. As for $\BW^k$, Proposition~\ref{prop_BW_acceptance} requires $k\geq\armax(\sigma)$ even if $\Hminion$ is a conic minion. The reason for this lies in the definition of the bounded-width hierarchy. Essentially, any constraint whose scope has more than $k$ distinct variables does not appear among the constraints of the partial homomorphisms witnessing acceptance of $\BW^k$, while it does appear in the requirements of $\SA^k$ and $\SoS^k$. Finally, assuming $k\geq\armax(\sigma)$ is also required in the characterisation of $\BA^k$ in Proposition~\ref{prop_BAk_acceptance}, in order to make use of Proposition~\ref{prop_decoupling_BLPAIP_general} and of the characterisation of $\AIP^k$.
\end{rem}

\begin{rem}
\label{rem:aip}
As it was shown in Section~\ref{sec_hierarchies_of_conic_minion_tests}, hierarchies of relaxations built on conic minions (such as $\BW^k$, $\SA^k$, $\SoS^k$, and $\BA^k$) are ``sound in the limit'', in that their $k$-th level correctly classifies instances $\X$ with $|X|\leq k$ (cf.~Proposition~\ref{prop_sherali_adams_exact}). This is not the case for the non-conic hierarchy $\AIP^k$, as it was established in the follow-up work~\cite{CZ25sicomp:approximate}.
In~\cite{Berkholz17:soda}, a stronger affine hierarchy was
  proposed, which -- contrary to $\AIP^k$ -- requires that the variables in the relaxation should be partial homomorphisms and is thus sound in the limit. By virtue of Proposition~\ref{lem_partial_homo}, this requirement can be captured by taking the semi-direct product of any conic minion and $\Zaff$. In particular, it follows that the hierarchy in~\cite{Berkholz17:soda} is not stronger than the hierarchy built on the minion  $\Hminion\ltimes\Zaff$ (cf.~Remark~\ref{rem_semidirect_product_different_semirings}).
In recent work~\cite{OConghaileC22:mfcs}, a different algorithm for
  $\parPCSPs$ has been proposed. The relationship of~\cite{OConghaileC22:mfcs} with
  our work is an interesting direction for future research.
\end{rem}

\section*{Acknowledgements}
The authors are grateful to Marcin Kozik, Jakub Opr\v{s}al, and Caterina Viola for useful discussions on $\SDP$ relaxations. We also thank the anonymous reviewers of this paper, and its extended abstract~\cite{cz23soda:minions}.

\appendix

\section{Notes on relaxations and hierarchies}
\label{appendix_notes_relaxations_hierarchies}
In this appendix, we discuss some basic properties and alternative formulations of the relaxations, and hierarchies thereof, presented in Section~\ref{sec_relaxations_hierarchies}.

\subsection{$\mbox{SA}^{\mbox{k}}$}
\label{subsec_appendix_SA_AIPk}

The hierarchy defining $\SA^k$ given by the system~\eqref{eqn_SA_and_AIPk} slightly differs from the one described in~\cite{Butti21:mfcs}. For completeness, we report below the hierarchy in~\cite{Butti21:mfcs} and show that it is equivalent to the one adopted in this work.

Given two $\sigma$-structures $\X,\A$, introduce a variable $\mu_V(f)$ for every subset $V\subseteq X$ with $1\leq |V|\leq k$ and every function $f:V\to A$, and a variable $\mu_{R,\bx}(f)$ for every $R\in\sigma$, every $\bx\in R^\X$, and every $f:\{\bx\}\to A$. The $k$-th level of the hierarchy defined in~\cite{Butti21:mfcs} is given by the following constraints:
\begin{align}
\label{eqn_defn_Sherali_Adams_BD}
\left.
\tag{$\varheart$}
\begin{array}{lll}
\mbox{($\varheart 1$)} & \displaystyle\sum_{f:V\to A}\mu_V(f)=1 & V\subseteq X \mbox{ s.t. }1\leq|V|\leq k\\
\mbox{($\varheart 2$)} & \displaystyle \mu_U(f)=\sum_{\substack{g:V\to A,\\ g|_{U}=f}}\mu_V(g) & 
U\subseteq V\subseteq X \mbox{ s.t. } 1\leq|V|\leq k, U\neq\emptyset, f:U\to A 
\\
\mbox{($\varheart 3$)} & \displaystyle \mu_U(f)=\sum_{\substack{g:\{\bx\}\to A,\\ g|_{U}=f}}\mu_{R,\bx}(g)
&
R\in\sigma, \bx\in R^\X, U\subseteq \{\bx\}
\mbox{ s.t. }1\leq|U|\leq k, f:U\to A 
\\
\mbox{($\varheart 4$)} & \displaystyle \mu_{R,\bx}(f)=0 & 
R\in\sigma, \bx\in R^\X, f:\{\bx\}\to A
\mbox{ s.t. } f(\bx)\not\in R^\A.
\end{array}
\right\}
\end{align}

\begin{lem}
\label{lem_our_SA_equals_BD_SA}
Let $k\in\N$, let $\X,\A$ be two $\sigma$-structures, and let $\tilde\X$
  (resp.\,$\tilde{\A}$) be the structure obtained from $\X$ (resp.\,$\A$) by
  adding the relation $R_k^{\tilde{\X}}=X^k$ (resp.\,$R_k^{\tilde{\A}}=A^k$). Then the system~\eqref{eqn_defn_Sherali_Adams_BD} applied to $\X$ and $\A$ is equivalent to the system~\eqref{eqn_SA_and_AIPk} applied to $\tilde\X$ and $\tilde\A$.
\end{lem}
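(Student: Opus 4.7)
The plan is to exhibit a natural correspondence between solutions of the two systems. Given a solution $\{\lambda_{R,\bx,\ba}\}$ of~\eqref{eqn_SA_and_AIPk} for $\tilde\X,\tilde\A$, I would set
\[
\mu_{R,\bx}(f) := \lambda_{R,\bx,f(\bx)} \text{ if } f(\bx)\in R^\A, \quad \mu_{R,\bx}(f):=0 \text{ otherwise},
\]
and, for each nonempty $V\subseteq X$ with $|V|=p\leq k$, pick an arbitrary enumeration $\bx_V=(v_1,\dots,v_p,v_p,\dots,v_p)\in X^k$ of $V$ (padding to length $k$), and set $\mu_V(f):=\lambda_{R_k,\bx_V,\ba_f}$ where $(\ba_f)_j:=f((\bx_V)_j)$. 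Conversely, given $\{\mu\}$, I would define $\lambda_{R,\bx,\ba}:=\mu_{R,\bx}(f_\ba)$ and $\lambda_{R_k,\bx,\ba}:=\mu_{\{\bx\}}(f_\ba)$ whenever $\bx\prec\ba$, where $f_\ba$ is the induced function $\{\bx\}\to A$, and $\lambda_{R,\bx,\ba}:=0$ whenever $\bx\not\prec\ba$.

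First I would verify that $\mu_V(f)$ does not depend on the choice of enumeration $\bx_V$. Given two enumerations $\bx,\bx'$ of $V$, write $\bx'=\bx_\bi$ for a suitable $\bi\in[k]^k$, so that $\ba'_f=(\ba_f)_\bi$. Applying $\clubsuit 2$ with $R=R_k$ yields $\lambda_{R_k,\bx',\ba'_f}=\sum_{\bb\in A^k:\,\bb_\bi=\ba'_f}\lambda_{R_k,\bx,\bb}$. By $\clubsuit 3$ the only contributing terms satisfy $\bx\prec\bb$, and the combinatorial lemma that \emph{if $\bx\prec\bb$, $\bx\prec\ba_f$, $\bx_\bi=\bx'$ and $\bb_\bi=(\ba_f)_\bi$, then $\bb=\ba_f$} collapses the sum to a single term. (This is the same argument that forces the map $(R,\bx,\ba)\mapsto$ ``function $f:\{\bx\}\to A$ with $f(\bx)=\ba$'' to be a bijection modulo the vanishing imposed by $\clubsuit 3$.)

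Next I would check the four axioms of~\eqref{eqn_defn_Sherali_Adams_BD} one at a time. Condition $\varheart 1$ reduces to $\clubsuit 1$ for $R_k$ after extending the sum over functions $f:V\to A$ to a sum over all $\ba\in A^k$, the added tuples with $\bx_V\not\prec\ba$ contributing zero by $\clubsuit 3$. Condition $\varheart 2$ follows directly from $\clubsuit 2$ applied to $R_k$, choosing $\bi\in[k]^k$ with $(\bx_V)_\bi=\bx_U$; the bijection between ``$g:V\to A$ with $g|_U=f$'' and ``$\bb\in A^k$ with $\bx_V\prec\bb$ and $\bb_\bi=\ba_f$'' is routine. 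Condition $\varheart 3$ is analogous: pick $\bi\in[\ar(R)]^k$ with $\bx_\bi=\bx_U$ (possible since $U\subseteq\{\bx\}$) and apply $\clubsuit 2$ for $R$. Finally, $\varheart 4$ is immediate from the definition of $\mu_{R,\bx}$. The reverse implications are essentially the same calculations run backwards, with $\clubsuit 3$ now obtained for free from the definition, $\clubsuit 1$ from $\varheart 1$, and $\clubsuit 2$ split into two cases depending on whether the outer index is $R_k$ (use $\varheart 2$) or $R\in\sigma$ (use $\varheart 3$).

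The main obstacle is the combinatorial bookkeeping: keeping straight the bijection between index tuples $(\bx,\ba)$ modulo ``$\prec$'' on the one hand and subsets-with-a-function $(V,f)$ on the other, and in particular tracking that every sum range on one side translates to exactly the corresponding sum range on the other (after discarding vanishing $\clubsuit 3$/$\varheart 4$ terms). Once the bijections are set up cleanly and the well-definedness of $\mu_V(f)$ is established, each axiom is a one-line consequence of its counterpart.
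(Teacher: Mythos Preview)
Your proposal is correct and follows essentially the same approach as the paper's proof: the same correspondence (tuples $(\bx,\ba)$ with $\bx\prec\ba$ versus pairs $(V,f)$), the same well-definedness argument for $\mu_V(f)$ via $\clubsuit 2$ and $\clubsuit 3$, and the same definitions in both directions. In fact you supply more detail than the paper, which leaves the verification of the axioms as ``straightforward to check'' and ``easily verified''.
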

\begin{proof}
Let $\lambda$ be a solution to~\eqref{eqn_SA_and_AIPk} applied to $\tilde\X$ and $\tilde\A$. Given $V\subseteq X$ with $1\leq |V|\leq k$ and $f:V\to A$, let $\bx\in X^k$ be such that $V=\{\bx\}$ and set $\mu_V(f)=\lambda_{R_k,\bx,f(\bx)}$. We claim that this assignment does not depend on the choice of $\bx$; i.e., we claim that $\lambda_{R_k,\bx,f(\bx)}=\lambda_{R_k,\by,f(\by)}$  whenever $\bx,\by\in X^k$ are such that $\{\bx\}=\{\by\}$. The latter condition implies that $\bx=\by_\bi$ and $\by=\bx_\bj$ for some $\bi,\bj\in [k]^k$. Using $\clubsuit 2$ and $\clubsuit 3$, we find
\begin{align*}
\lambda_{R_k,\by,f(\by)}
&=
\lambda_{R_k,\bx_\bj,f(\bx_\bj)}
=
\sum_{\substack{\ba\in A^k\\\ba_\bj=f(\bx_\bj)}}\lambda_{R_k,\bx,\ba}
=
\sum_{\substack{\ba\in A^k\\\ba_\bj=f(\bx_\bj)\\\bx\prec\ba}}\lambda_{R_k,\bx,\ba}
=
\lambda_{R_k,\bx,f(\bx)}+\sum_{\substack{\ba\in A^k\\\ba_\bj=f(\bx_\bj)\\\bx\prec\ba\\\ba\neq f(\bx)}}\lambda_{R_k,\bx,\ba}.
\end{align*}
The claim then follows if we show that there is no $\ba\in A^k$ such that $\ba_\bj=f(\bx_\bj)$, $\bx\prec\ba$, and $\ba\neq f(\bx)$. If such $\ba$ exists, using that $\bx=\bx_{\bj_\bi}$, we find that for some $p\in [k]$
\begin{align*}
a_p
&\neq
f(x_p)
=
f(x_{j_{i_p}})
=
a_{j_{i_p}}.
\end{align*}
Since $\bx\prec\ba$, this implies that $x_p\neq x_{j_{i_p}}$, a contradiction. Therefore, the claim is true. 
Additionally, given $R\in\sigma$, $\bx\in R^\X$, and $f:\{\bx\}\to A$, we set $\mu_{R,\bx}(f)=\lambda_{R,\bx,f(\bx)}$ if $f(\bx)\in R^\A$, $\mu_{R,\bx}(f)=0$ otherwise. 
It is straightforward to check that $\mu$ satisfies all constraints in the system~\eqref{eqn_defn_Sherali_Adams_BD} applied to $\X$ and $\A$.

Conversely, let $\mu$ be a solution to~\eqref{eqn_defn_Sherali_Adams_BD} applied to $\X$ and $\A$. As in the proof of Proposition~\ref{prop_BW_acceptance}, given two sets $S,T$, an integer $p\in\N$, and two tuples $\textbf{s}\in S^p, \textbf{t}\in T^p$ such that $\textbf{s}\prec\textbf{t}$, we define the map $f_{\textbf{s},\textbf{t}}:\{\textbf{s}\}\to T$ by $f_{\textbf{s},\textbf{t}}(s_\alpha)=t_\alpha$ for each $\alpha\in [p]$.
For every $R\in\sigma$, $\bx\in R^\X$, and $\ba\in R^\A$, we set $\lambda_{R,\bx,\ba}=\mu_{R,\bx}(f_{\bx,\ba})$ if $\bx\prec\ba$, $\lambda_{R,\bx,\ba}=0$ otherwise. Additionally, for every $\bx\in X^k=R_k^{\tilde{\X}}$ and $\ba\in A^k=R_k^{\tilde{\A}}$, we set $\lambda_{R_k,\bx,\ba}=\mu_{\{\bx\}}(f_{\bx,\ba})$ if $\bx\prec\ba$, $\lambda_{R_k,\bx,\ba}=0$ otherwise. It is easily verified that $\lambda$ yields a solution to~\eqref{eqn_SA_and_AIPk} applied to $\tilde{\X}$ and $\tilde{\A}$.
\end{proof}

We also note that~\cite{Atserias22:soda} has yet another definition of the Sherali--Adams hierarchy. However, it was shown in~\cite[Appendix~A]{Butti21:mfcs} that the hierarchy given in~\cite{Atserias22:soda} interleaves with the one in~\cite{Butti21:mfcs} and, by virtue of Lemma~\ref{lem_our_SA_equals_BD_SA}, with the hierarchy used in this work. In particular, the class of $\PCSP$s solved by constant levels of the hierarchy is the same for all definitions.

\subsection{SDP}
\label{subsec_appendix_SDP}
The relaxation defined by~\eqref{eqn_SDP_def} is not in semidefinite programming form, because of the constraint $\vardiamond4$. However, it can be easily translated into a semidefinite program by introducing $\gamma$ additional variables $\bmu_1,\dots,\bmu_\gamma$ taking values in $\R^\gamma$, and requiring that the following constraints are met:
\begin{align*}
\begin{array}{llllll}
\mbox{($\vardiamond4'$)}&\displaystyle\bmu_p\cdot\bmu_q=\delta_{p,q}& p,q\in [\gamma]\\
\mbox{($\vardiamond4''$)}&\displaystyle\sum_{\substack{\ba\in R^\A\\a_i=a}}\blambda_{R,\bx,\ba}\cdot\bmu_p=\blambda_{x_i,a}\cdot\bmu_p&R\in\sigma,\bx\in R^\X,a\in A, i\in [\ar(R)],p\in[\gamma]
\end{array}
\end{align*}
where $\delta_{p,q}$ is the Kronecker delta. One easily checks that the requirements $\vardiamond4'$ and $\vardiamond4''$ are together equivalent to the requirement $\vardiamond4$, and they are expressed in semidefinite programming form.
Next, we give a proof for the following basic result.
\begin{prop*}[Proposition~\ref{prop_trivial_stuff_SDP} restated]
Let $\X,\A$ be two $\sigma$-structures.
The system~\eqref{eqn_SDP_def} implies the following facts:
\begin{align*}
\begin{array}{lllllll}
(i)&\displaystyle\|\sum_{a\in A}\blambda_{x,a}\|^2=1 & x\in X;\\
(ii)&\displaystyle\sum_{\ba\in R^\A}\|\blambda_{R,\bx,\ba}\|^2=\|\sum_{\ba\in R^\A}\blambda_{R,\bx,\ba}\|^2=1 & R\in\sigma,\bx\in R^\X;\\
(iii)&\displaystyle\sum_{\substack{\ba\in R^\A\\a_i=a,\;a_j=a'}}\|\blambda_{R,\bx,\ba}\|^2=\blambda_{x_i,a}\cdot\blambda_{x_j,a'} & R\in\sigma,\bx\in R^\X,a,a'\in A,i,j\in[\ar(R)].
\intertext{If, in addition, $\X$ and $\A$ are $2$-enhanced,}
(iv)&\displaystyle\sum_{a\in A}\blambda_{x,a}=\sum_{a\in A}\blambda_{x',a} & x,x'\in X.
\end{array}
\end{align*}
\end{prop*}
\begin{proof}

\begin{itemize}
\item[$(i)$]
We have
\begin{align*}
\|\sum_{a\in A}\blambda_{x,a}\|^2
&=
\left(\sum_{a\in A}\blambda_{x,a}\right)\cdot\left(\sum_{a'\in A}\blambda_{x,a'}\right)
=
\sum_{a,a'\in A}\blambda_{x,a}\cdot\blambda_{x,a'}
=
\sum_{a\in A}\|\blambda_{x,a}\|^2
=
1,
\end{align*}
where the third equality comes from $\vardiamond2$ and the fourth from $\vardiamond1$.
\item[$(ii)$]
We have
\begin{align*}
\sum_{\ba\in R^\A}\|\lambda_{R,\bx,\ba}\|^2
&=
\sum_{\ba,\ba'\in R^\A}\lambda_{R,\bx,\ba}\cdot\lambda_{R,\bx,\ba'}
=
\left(\sum_{\ba\in R^\A}\blambda_{R,\bx,\ba}\right)\cdot\left(\sum_{\ba'\in R^\A}\blambda_{R,\bx,\ba'}\right)\\
&=
\|\sum_{\ba\in R^\A}\blambda_{R,\bx,\ba}\|^2
=
\|\sum_{a\in A}\sum_{\substack{\ba\in R^\A\\a_1=a}}\blambda_{R,\bx,\ba}\|^2
=
\|\sum_{a\in A}\blambda_{x_1,a}\|^2
=
1,
\end{align*}
where the first equality comes from $\vardiamond3$, the fifth from $\vardiamond4$, and the sixth from part $(i)$ of this proposition. 
\item[$(iii)$]
We have
\begin{multline*}
\blambda_{x_i,a}\cdot\blambda_{x_j,a'} =
\left(\sum_{\substack{\ba\in R^\A\\
a_i=a}}\blambda_{R,\bx,\ba}\right)\cdot\left(\sum_{\substack{\ba'\in R^\A\\
a'_j=a'}}\blambda_{R,\bx,\ba'}\right)
=
\sum_{\substack{\ba,\ba'\in R^\A\\
a_i=a,\;a'_j=a'}}\blambda_{R,\bx,\ba}\cdot\blambda_{R,\bx,\ba'}
\\=
\sum_{\substack{\ba\in R^\A\\
a_i=a,\;a_j=a'}}\|\blambda_{R,\bx,\ba}\|^2,
\end{multline*}
where the first equality comes from $\vardiamond4$ and the third from $\vardiamond3$.
\item[$(iv)$]
If $\X$ and $\A$ are $2$-enhanced, we have
\begin{align*}
\sum_{a\in A}\blambda_{x,a}
&=
\sum_{a\in A}\sum_{\substack{\ba\in R_2^\A\\a_1=a}}\blambda_{R_2,(x,x'),\ba}
=
\sum_{\ba\in R_2^\A}\blambda_{R_2,(x,x'),\ba}
=
\sum_{a\in A}\sum_{\substack{\ba\in R_2^\A\\a_2=a}}\blambda_{R_2,(x,x'),\ba}
=
\sum_{a\in A}\blambda_{x',a},
\end{align*}
where the first and fourth equalities come from $\vardiamond4$.\qedhere
\end{itemize}
\end{proof}

We point out that slightly different versions of the ``standard $\SDP$
relaxation'' appeared in the literature on $\CSP$s, some of which use parts
$(i)$ through $(iii)$ of Proposition~\ref{prop_trivial_stuff_SDP} as constraints
defining the relaxation. In particular, certain versions require that the scalar
products $\blambda_{x,a}\cdot\blambda_{y,b}$ should be nonnegative for all
choices of $x,y\in X$ and $a,b\in A$. For example, this is the case of the
$\SDP$ relaxation used in~\cite{Barto16:sicomp}. It follows from
Proposition~\ref{prop_Lasserre_Tulsiani}, proved in
Appendix~\ref{subsec_appendix_SoS}, that one can enforce nonnegativity of the
scalar products by taking the second level of the $\SoS$ hierarchy of the $\SDP$
relaxation as defined in this work.

\subsection{$\mbox{SoS}^{\mbox{k}}$}
\label{subsec_appendix_SoS}

First, we note that
the relaxation defined by~\eqref{eqn_Lasserre_def} can be easily translated into a semidefinite program through the procedure described at the beginning of Appendix~\ref{subsec_appendix_SDP}. Next, we show that the $2k$-th level of the SoS hierarchy enforces additional constraints -- in particular, nonnegativity of the scalar products of the SoS vectors -- on the vectors corresponding to the $k$-th level.

\begin{prop}
\label{prop_Lasserre_Tulsiani}
Let $k\in\N$, let $\X,\A$ be $2k$-enhanced $\sigma$-structures, suppose that $\SoS^{2k}(\X,\A)=\YES$, and let $\blambda$ denote a solution. Then $\blambda$ satisfies the following additional constraints:
\[
\begin{array}{lllll}
\mbox{(i)} & \displaystyle\blambda_{R_{2k},(\bx,\bx),(\ba,\ba)}\cdot\blambda_{R_{2k},(\by,\by),(\bb,\bb)}\geq 0 &\quad& 
\begin{array}{lll}
\bx,\by\in X^k,\ba,\bb\in A^k
\end{array}
\\[15pt]
\mbox{(ii)} & \displaystyle
\blambda_{R_{2k},(\bx,\bx),(\ba,\ba)}\cdot\blambda_{R_{2k},(\by,\by),(\bb,\bb)}=0 &\quad&  
\begin{array}{ll}
\bx,\by\in X^k,\ba,\bb\in A^k,\\ \ba_\bi\neq\bb_\bj\mbox{ for some }\bi,\bj\in [k]^k
\mbox{ such that }\bx_\bi=\by_\bj
\end{array}\\[15pt]
\mbox{(iii)}  &\displaystyle 
\begin{array}{llll}
\blambda_{R_{2k},(\bx,\bx),(\ba,\ba)}\cdot\blambda_{R_{2k},(\by,\by),(\bb,\bb)}=\\\blambda_{R_{2k},(\hat\bx,\hat\bx),(\hat\ba,\hat\ba)}\cdot\blambda_{R_{2k},(\hat\by,\hat\by),(\hat\bb,\hat\bb)}
\end{array}\quad& 
&\begin{array}{lll}
\bx,\hat\bx,\by,\hat\by\in X^k,\ba,\hat\ba,\bb,\hat\bb\in A^k,\\
(\hat\bx,\hat\by)_{\bell}=(\bx,\by),(\hat\ba,\hat\bb)_{\bell}=(\ba,\bb)\\ \mbox{for some } \bell\in[2k]^{2k}\mbox{ such that }|\{\bell\}|=2k.
\end{array}
\end{array}
\]
\end{prop}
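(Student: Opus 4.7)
The plan is to reduce all three identities to a single key observation: for any $\bx,\by\in X^k$ and $\ba,\bb\in A^k$,
\begin{equation*}
\blambda_{R_{2k},(\bx,\bx),(\ba,\ba)}\cdot\blambda_{R_{2k},(\by,\by),(\bb,\bb)}=\|\blambda_{R_{2k},(\bx,\by),(\ba,\bb)}\|^2.
\end{equation*}
Once this is established, (i) is immediate, (ii) follows by showing that the hypothesis on $\bi,\bj$ forces $(\bx,\by)\not\prec(\ba,\bb)$ and invoking $\spadesuit 4$, and (iii) follows by applying $\spadesuit 3$ with the permutation $\bell$.

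To prove the key identity, I will apply $\spadesuit 3$ to $R=R_{2k}$, $\bx^*=(\bx,\by)\in X^{2k}$, $\bi=(1,2,\ldots,k,1,2,\ldots,k)\in [2k]^{2k}$, and $\bb^*=(\ba,\ba)\in A^{2k}$, noting that $\bx^*_\bi=(\bx,\bx)$. Because the indices in $\bi$ only select entries from the first half of $(\bx,\by)$, the tuples $\ba^*\in A^{2k}$ with $\ba^*_\bi=(\ba,\ba)$ are precisely those of the form $\ba^*=(\ba,\bb')$ for arbitrary $\bb'\in A^k$. This yields
\begin{equation*}
\sum_{\bb'\in A^k}\blambda_{R_{2k},(\bx,\by),(\ba,\bb')}=\blambda_{R_{2k},(\bx,\bx),(\ba,\ba)}.
\end{equation*}
By the symmetric choice $\bi'=(k+1,\ldots,2k,k+1,\ldots,2k)$, which satisfies $\bx^*_{\bi'}=(\by,\by)$, we get
\begin{equation*}
\sum_{\ba'\in A^k}\blambda_{R_{2k},(\bx,\by),(\ba',\bb)}=\blambda_{R_{2k},(\by,\by),(\bb,\bb)}.
\end{equation*}
Multiplying the two identities and using $\spadesuit 2$ (which kills all cross terms $\blambda_{R_{2k},(\bx,\by),(\ba,\bb')}\cdot\blambda_{R_{2k},(\bx,\by),(\ba',\bb)}$ unless $\ba'=\ba$ and $\bb'=\bb$) gives the key identity.

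For (ii), if $\bx_\bi=\by_\bj$ and $\ba_\bi\neq\bb_\bj$ for some $\bi,\bj\in [k]^k$, pick $\alpha\in [k]$ with $x_{i_\alpha}=y_{j_\alpha}$ and $a_{i_\alpha}\neq b_{j_\alpha}$; then positions $i_\alpha$ and $k+j_\alpha$ of $(\bx,\by)$ coincide while the corresponding positions of $(\ba,\bb)$ differ, so $(\bx,\by)\not\prec(\ba,\bb)$ and $\spadesuit 4$ forces $\|\blambda_{R_{2k},(\bx,\by),(\ba,\bb)}\|^2=0$. For (iii), apply the key identity to both sides; it suffices to show $\|\blambda_{R_{2k},(\hat\bx,\hat\by),(\hat\ba,\hat\bb)}\|^2=\|\blambda_{R_{2k},(\bx,\by),(\ba,\bb)}\|^2$. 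Since $|\{\bell\}|=2k$, the map $\alpha\mapsto \ell_\alpha$ is a bijection on $[2k]$, so there is a unique $\ba^*\in A^{2k}$ with $\ba^*_\bell=(\ba,\bb)$, namely $\ba^*=(\hat\ba,\hat\bb)$. Applying $\spadesuit 3$ with $R=R_{2k}$, $\bx^*=(\hat\bx,\hat\by)$, and $\bi=\bell$ collapses the sum to a single term, yielding $\blambda_{R_{2k},(\hat\bx,\hat\by),(\hat\ba,\hat\bb)}=\blambda_{R_{2k},(\hat\bx,\hat\by)_\bell,(\ba,\bb)}=\blambda_{R_{2k},(\bx,\by),(\ba,\bb)}$, and the equality of norms is immediate.

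No step looks genuinely hard; the main subtlety is choosing the right index tuple $\bi$ when invoking $\spadesuit 3$ to produce the two marginals needed for the Gram-style computation.
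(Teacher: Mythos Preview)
Your proof is correct and follows essentially the same route as the paper: you establish the key identity $\blambda_{R_{2k},(\bx,\bx),(\ba,\ba)}\cdot\blambda_{R_{2k},(\by,\by),(\bb,\bb)}=\|\blambda_{R_{2k},(\bx,\by),(\ba,\bb)}\|^2$ via $\spadesuit 3$ and $\spadesuit 2$, then read off (i)--(iii). Your argument for (ii) is in fact slightly cleaner than the paper's: you directly observe $(\bx,\by)\not\prec(\ba,\bb)$ and apply $\spadesuit 4$ once, whereas the paper applies $\spadesuit 4$ to the projected tuple $(\bx_\bi,\by_\bj)$ and then expands back via $\spadesuit 3$ and $\spadesuit 2$.
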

\begin{proof}
Observe that, for $\bx,\by\in X^k$ and $\ba,\bb\in A^k$,
\begin{align}
\label{eqn_1749_1804_A}
\notag
\blambda_{R_{2k},(\bx,\bx),(\ba,\ba)}\cdot\blambda_{R_{2k},(\by,\by),(\bb,\bb)}
&=
\left(\sum_{\substack{\bc\in A^k}}\blambda_{R_{2k},(\bx,\by),(\ba,\bc)}\right)\cdot\left(\sum_{\substack{\bc'\in A^k}}\blambda_{R_{2k},(\bx,\by),(\bc',\bb)}\right)\\
\notag
&=
\sum_{\bc,\bc'\in A^k}\blambda_{R_{2k},(\bx,\by),(\ba,\bc)}\cdot\blambda_{R_{2k},(\bx,\by),(\bc',\bb)}\\
&=
\|\blambda_{R_{2k},(\bx,\by),(\ba,\bb)}\|^2,
\end{align} 
where the first and third equalities come from $\spadesuit 3$ and $\spadesuit 2$, respectively. Hence, $(i)$ holds. If, in addition, $\ba_\bi\neq\bb_\bj$ for some $\bi,\bj\in [k]^k$ such that $\bx_\bi=\by_\bj$, we deduce that $(\bx_\bi,\by_\bj)\not\prec(\ba_\bi,\bb_\bj)$ and, therefore,
\begin{align}
\label{eqn_1749_1804_B}
\notag
0
&=
\|\blambda_{R_{2k},(\bx_\bi,\by_\bj),(\ba_\bi,\bb_\bj)}\|^2
=
\|\sum_{\substack{(\bc,\bd)\in A^{2k}\\ \bc_\bi=\ba_\bi,\bd_\bj=\bb_\bj}}\blambda_{R_{2k},(\bx,\by),(\bc,\bd)}\|^2\\
&=
\sum_{\substack{(\bc,\bd)\in A^{2k}\\ \bc_\bi=\ba_\bi,\bd_\bj=\bb_\bj}}\|\blambda_{R_{2k},(\bx,\by),(\bc,\bd)}\|^2
\geq
\|\blambda_{R_{2k},(\bx,\by),(\ba,\bb)}\|^2,
\end{align}
where the first, second, and third equalities come from $\spadesuit 4$, $\spadesuit 3$, and $\spadesuit 2$, respectively. Combining~\eqref{eqn_1749_1804_A} and~\eqref{eqn_1749_1804_B}, we obtain $(ii)$. Suppose now that $\bx,\hat\bx,\by,\hat\by\in X^k$ and $\ba,\hat\ba,\bb,\hat\bb\in A^k$ are such that
$(\hat\bx,\hat\by)_{\bell}=(\bx,\by)$ and $(\hat\ba,\hat\bb)_{\bell}=(\ba,\bb)$ for some $\bell\in[2k]^{2k}$ such that $|\{\bell\}|=2k$. 
Using $\spadesuit 3$, we find
\begin{align*}
\blambda_{R_{2k},(\bx,\by),(\ba,\bb)}
&=
\blambda_{R_{2k},(\hat\bx,\hat\by)_{\bell},(\hat\ba,\hat\bb)_{\bell}}
=
\sum_{\substack{(\bc,\bd)\in A^{2k}\\ (\bc,\bd)_{\bell}=(\hat\ba,\hat\bb)_{\bell}}}\blambda_{R_{2k},(\hat\bx,\hat\by),(\bc,\bd)}
=
\blambda_{R_{2k},(\hat\bx,\hat\by),(\hat\ba,\hat\bb)},
\end{align*}
where the last equality is due to the fact that $|\{\bell\}|=2k$. Hence, $(iii)$ follows from~\eqref{eqn_1749_1804_A}.
\end{proof}

We observe that the relaxation in~\eqref{eqn_Lasserre_def} is formally different from the one described in~\cite{Tulsiani09:stoc}. However, it can be shown that the $2k$-th level of the hierarchy as defined here is at least as tight as the $k$-th level of the hierarchy as defined in~\cite{Tulsiani09:stoc}. Let us denote the two relaxations by $\SoS$ and $\SoS'$, respectively.
First of all, each variable in $\SoS'$ corresponds to a subset $S$ of $X$ and
an assignment $f:S\to A$, while in $\SoS$ the variables correspond to pairs of
tuples $\bx\in R^\X, \ba\in R^\A$. This is an inessential difference, as one can
check through the same argument used to prove
Lemma~\ref{lem_our_SA_equals_BD_SA} -- in particular, $\spadesuit 4$ ensures
that the only variables having nonzero weight are those corresponding to
well-defined assignments (cf.~Footnote~\ref{foot:rep}).
The $k$-th level of $\SoS'$ contains constraints that, in our language, are expressed as $\spadesuit 4$, $\spadesuit 1$, and parts $(i),(ii),(iii)$ of Proposition~\ref{prop_Lasserre_Tulsiani}. By virtue of Proposition~\ref{prop_Lasserre_Tulsiani}, therefore, any solution $\blambda$ to the $2k$-th level of $\SoS$ yields a solution to the $k$-th level of $\SoS'$ -- which means that the $2k$-th level of $\SoS$ is at least as tight as the $k$-th level of $\SoS'$.

{\small
\bibliographystyle{plainurl}
\bibliography{cz_local}
}

\end{document}